\documentclass[11pt]{article}

\usepackage{graphicx,amssymb,amsmath}
\usepackage{amsfonts}
\usepackage{booktabs}
\usepackage{cite}
\usepackage{color}
\usepackage{enumerate}
\usepackage{float}

\usepackage[pdftex, plainpages = false, pdfpagelabels,
                 bookmarks=false,
                 bookmarksopen = true,
                 bookmarksnumbered = true,
                 breaklinks = true,
                 linktocpage,
                 pagebackref,
                 colorlinks = true,  
                 linkcolor = blue,
                 urlcolor  = cyan,
                 citecolor = red,
                 anchorcolor = green,
                 hyperindex = true,
                 hyperfigures
                 ]{hyperref}

\usepackage{mathrsfs}
\usepackage{subfig}
\usepackage{tabularx}
\usepackage{authblk}

\usepackage[in]{fullpage}
\usepackage[top=0.8in, bottom=0.9in, left=0.9in, right=0.9in]{geometry}

\def\calC{\mathcal{C}}

\def\calD{\mathcal{D}}

\newtheorem{observation}{Observation}

\newtheorem{lemma}{Lemma}
\newtheorem{theorem}{Theorem}
\newenvironment{proof}{\noindent {\textbf{Proof:}}\rm}{\hfill $\Box$ \rm\bigskip}

\title{Counting Unit Circular Arc Intersections\thanks{A preliminary version of this paper will appear in {\em Proceedings of 43rd International Symposium on Theoretical Aspects of Computer Science (STACS 2026)}.}}

\author{
Haitao Wang\thanks{Kahlert School of Computing,
University of Utah, Salt Lake City, UT 84112, USA. {\tt haitao.wang@utah.edu}}
}

\begin{document}

\pagestyle{plain}
\pagenumbering{arabic}
\setcounter{page}{1}
\date{}

\thispagestyle{empty}
\maketitle

\vspace{-0.35in}
\begin{abstract}
Given a set of $n$ circular arcs of the same radius in the plane, we consider the problem of computing the number of intersections among the arcs. The problem was studied before and the previously best algorithm solves the problem in $O(n^{4/3+\epsilon})$ time [Agarwal, Pellegrini, and Sharir, SIAM J. Comput., 1993],
for any constant $\epsilon>0$. No progress has been made on the problem for more than 30 years. We present a new algorithm of $O(n^{4/3}\log^{16/3}n)$ time and improve it to $O(n^{1+\epsilon}+K^{1/3}n^{2/3}(\frac{n^2}{n+K})^{\epsilon}\log^{16/3}n)$ time for small $K$,
where $K$ is the number of intersections of all arcs.
\end{abstract}


{\em Keywords:} circular arc intersections, unit circles, arrangements, cuttings, segment intersections

\section{Introduction}
\label{sec:intro}

Given a set of $n$ circular arcs of the same radius in the plane, we consider the problem of computing the number of intersections among the arcs; here we count the number of intersecting points (i.e., if two arcs intersect at two points, then they are counted twice).
Agarwal, Pellegrini, and Sharir~\cite{ref:AgarwalCo93} previously solved the
problem in $O(n^{4/3+\epsilon})$ time; throughout the
paper, we let $\epsilon$ denote an arbitrarily small positive constant.
On the negative side, Erickson's
results~\cite{ref:EricksonNe96} show that $\Omega(n^{4/3})$ is a lower bound for
the problem in a so-called partition algorithm model. Note that Erickson's
lower bound~\cite{ref:EricksonNe96} was originally proved for counting the intersections among a set of line segments in the plane. Since a line segment can be considered as an arc of a special circle of radius $\infty$, the lower bound is also applicable to the circular case.
No progress has been made on the problem for more than 30 years.
In this paper, we present a new algorithm of $O(n^{4/3}\log^{16/3}n)$ time and thus improve the previous result in~\cite{ref:AgarwalCo93}. Although the improvement looks minor, a factor of $O(n^{\epsilon})$, it is significant from the following perspective: it reduces the gap between the upper and lower bounds from
$\text{poly}(n)$ to $\text{poly}(\log n)$, where $\text{poly}(\cdot)$ is a polynomial function.
Further, for small $K$, where $K$ is the number of intersections of all arcs, we
improve the algorithm to
$O(n^{1+\epsilon}+K^{1/3}n^{2/3}(\frac{n^2}{n+K})^{\epsilon}\log^{16/3}n)$ time. Note that  this matches the above $O(n^{4/3}\log^{16/3}n)$ complexity in the worst case when $K=\Theta(n^2)$. If $K=O(n^{2-\delta})$ for any $\delta>0$, then the time complexity is $o(n^{4/3})$.

In addition, our results can be used to solve the following
{\em bichromatic problem} (with the same time complexity): Given a set of red circular arcs
and another set of blue circular arcs such that all arcs (of both colors) have the
same radius, compute the number of red-blue intersections.


\paragraph{Related work.}
If the circular arcs have different radii, Agarwal, Pellegrini, and Sharir~\cite{ref:AgarwalCo93} gave an $O(n^{3/2+\epsilon})$ time algorithm to compute the number of intersections of all arcs.

For computing the number of intersections among a set of $n$ circles of the same
radius, Katz and Sharir~\cite{ref:KatzAn97} gave an algorithm of $O(n^{4/3}\log
n)$ time and the randomized algorithm of Agarwal, Aronov, Sharir, and
Suri~\cite{ref:AgarwalSe93} can solve it in $O(n^{4/3}\log^{2/3}n)$ expected
time. Recently, Wang~\cite{ref:WangUn23} derived an $O(n^{4/3})$
time algorithm for it, matching the $\Omega(n^{4/3})$ lower bound~\cite{ref:EricksonNe96}.
If the circles have different radii, then the problem is
solvable in $O(n^{3/2+\epsilon})$ time~\cite{ref:AgarwalCo93}.
Recently, Chan, Cheng, and Zheng~\cite{ref:ChanSe24} extended the result and gave an $O(n^{3/2+\epsilon})$-time algorithm to compute the number of intersections among a set of $n$ algebraic arcs of constance description complexity in the plane.

The case where all arcs are line segments has been extensively studied.
First of all, Erickson's $\Omega(n^{4/3})$-time lower bound~\cite{ref:EricksonNe96}
still applies. On the positive side, Chazelle~\cite{ref:ChazelleRe86} gave an $O(n^{1.695})$ time
algorithm for the problem. Guibas, Overmars, and Sharir~\cite{ref:GuibasCo89} developed a
randomized algorithm of $O(n^{4/3+\epsilon})$ expected time.
Agarwal~\cite{ref:AgarwalPa902} solved the problem in $O(n^{4/3}\log^{1.78}n)$
time and Chazelle~\cite{ref:ChazelleCu93} further improved the algorithm to
$O(n^{4/3}\log^{1/3}n)$ time. For small $K$, where $K$ is the number of all
segment intersections, Pellegrini~\cite{ref:PellegriniOn97} gave an
algorithm of $O(n^{1+\epsilon}+K^{1/3}n^{2/3+\epsilon})$ time.
De Berg and Schwarzkopf~\cite{ref:deBergCu95} presented a randomized algorithm
of $O(n\log n\log K+K^{1/3}n^{2/3}\log^{1/3}n)$ expected time.
Recently, Chan and Zheng~\cite{ref:ChanHo23} made a breakthrough and solved the
problem in $O(n^{4/3})$ time, matching the $\Omega(n^{4/3})$ lower
bound~\cite{ref:EricksonNe96}.

With Chan and Zheng's new result~\cite{ref:ChanHo23} and our techniques, we can
show that the problem is solvable in
$O(n^{1+\epsilon}+K^{1/3}n^{2/3}(\frac{n^2}{n+K})^{\epsilon})$
time, or in $O(n\log n\log K+K^{1/3}n^{2/3})$ expected time
with the randomized techniques of de Berg and Schwarzkopf~\cite{ref:deBergCu95}.


One essential difference between the segment case and the circular arc case is
that there is at most one intersection between two segments while it is possible
to have two intersections between two arcs. This difference makes the technique
of Chan and Zheng's new result~\cite{ref:ChanHo23} not applicable to the arc
case. Indeed, this difference also makes the circular arc case much more challenging.

\medskip
In addition to a polynomial-time improvement over the 30-year old previous work~\cite{ref:AgarwalCo93} on a fundamental problem, our contribution also lies on many new geometric observations and techniques on the unit circular arcs. These new techniques may find applications in solving other related problems on unit circular arcs as well.

\paragraph{Outline.} The rest of the paper is organized as follows. We first
introduce some notation and concepts in Section~\ref{sec:pre}.
Section~\ref{sec:arc} presents our algorithm for computing the number of
intersections among circular arcs.
We conclude the paper in Section~\ref{sec:con} with remarks on the bichromatic problem as well as the segment case.

\section{Preliminaries}
\label{sec:pre}

In this section, we introduce some notation and concepts that will be used later.

For any two points $a$ and $b$ in the plane, we use $\overline{ab}$ to denote the segment with endpoints $a$ and $b$; we call $\overline{ab}$ the {\em defining segment} of $a$ (resp., $b$).
For any compact region $A$ in the plane, we use $\partial A$ to denote the boundary of $A$.


\paragraph{Cuttings.}
Let $H$ be a set of $n$ lines in the plane. For a compact region $A$ in the plane, we use $H_A$ to denote the subset of lines of $H$ that intersect the interior of $A$ (we also say that these lines {\em cross} $A$).
A {\em cutting} is a collection $\Xi$ of closed cells (each of which is a triangle, possibly unbounded) with disjoint interiors, which together cover the entire plane~\cite{ref:ChazelleCu93,ref:MatousekRa93}.
The {\em size} of $\Xi$ is the number of cells in $\Xi$.
For a parameter $r$ with $1\leq r\leq n$, a {\em $(1/r)$-cutting} for $H$ is a cutting $\Xi$ satisfying $|H_{\sigma}|\leq n/r$ for every cell $\sigma\in \Xi$.


We say that a cutting $\Xi'$ {\em $c$-refines} a cutting $\Xi$ if every cell of $\Xi'$ is contained in a single cell of $\Xi$ and every cell of $\Xi$ contains at most $c$ cells of $\Xi'$. Let $\Xi_0,\Xi_1,\ldots,\Xi_k$ be a sequence of cuttings for $H$ such that $\Xi_0$ is the entire plane, and every $\Xi_i$ is a $(1/\rho^i)$-cutting of size $O(\rho^{2i})$ which $c$-refines $\Xi_{i-1}$, for two constants $c$ and $\rho>1$. In order to make $\Xi_k$ a $(1/r)$-cutting, we set $k=\lceil\log_{\rho} r\rceil$.
The above sequence of cuttings is called a {\em hierarchical $(1/r)$-cutting} for $H$. If a cell $\sigma\in \Xi_{j-1}$ contains a cell $\sigma'\in \Xi_j$, we say that $\sigma$ is the {\em parent} of $\sigma'$ and $\sigma'$ is a {\em child} of $\sigma$. As such, one could view $\Xi$ as a tree structure with $\Xi_0$ as the root.

For any $1\leq r\leq n$, a hierarchical $(1/r)$-cutting of size $O(r^2)$ for $H$ (together with the sets $H_{\sigma}$ for every cell $\sigma$ of $\Xi_i$ for all $i=0,1,\ldots,k$) can be computed in $O(nr)$ time~\cite{ref:ChazelleCu93}. Note that this implies that the total size of $H_{\sigma}$ for all cells $\sigma\in \Xi_i$, $0\leq i\leq k$, is $O(nr)$.

As indicated by Agarwal and Sharir~\cite{ref:AgarwalPs05},
similar results on cuttings for other more general curves in the plane (e.g., circles or circular arcs of different radii, pseudo-lines, line segments) can also be obtained by extending Chazelle's algorithm~\cite{ref:ChazelleCu93}; Wang~\cite{ref:WangUn23} provided algorithm details for that. For our purpose, if $H$ is a set of $n$ circular arcs (not necessarily with the same radius) in the plane, then we can also define cuttings for $H$ in the same way as above.  A difference is that a cell $\sigma$ of a cutting becomes a {\em pseudo-trapezoid} (more specifically, each pseudo-trapezoid in general has four sides, with the left/right side as a vertical line segment and the top/bottom side as part of an input circular arc; e.g., see Fig.~\ref{fig:pseduotrapezoid}). In particular, the following result, which will later be used in our algorithm, has been obtained in~\cite{ref:WangUn23}.

\begin{theorem}\label{theo:cutting}{\em\cite{ref:WangUn23}}
Suppose $H$ is a set of $n$ circular arcs (or line segments) in the plane and
$K$ is the number of intersections of the arcs of $H$.
For any $r\leq n$, a hierarchical $(1/r)$-cutting of size $O(r^2)$ for $H$ (together with the sets $H_{\sigma}$ for every cell $\sigma$ of $\Xi_i$ for all $0\leq i\leq k$) can be computed in $O(nr)$ time; more specifically, the size of the cutting is bounded by $O(r^{1+\epsilon}+K\cdot r^2/n^2)$ and the running time of the algorithm is bounded by $O(nr^{\epsilon}+K\cdot r/n)$, for any small $\epsilon>0$.
\end{theorem}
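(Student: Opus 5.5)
The plan is to follow Chazelle's hierarchical-cutting construction for lines, replacing triangles by pseudo-trapezoids and lines by arcs, and then to redo the counting so that the vertices of the arrangement $\mathcal{A}(H)$ enter through $K$. Fix a large constant $\rho$ and build $\Xi_0,\Xi_1,\ldots,\Xi_k$ with $k=\lceil\log_\rho r\rceil$. At level $i$, each cell $\sigma\in\Xi_{i-1}$ carries its conflict list $H_\sigma$, with $|H_\sigma|\le n/\rho^{i-1}$. Inside $\sigma$ we compute a $(1/\rho)$-cutting of $H_\sigma$ restricted to $\sigma$ and clip it to $\sigma$. First we break every arc into $O(1)$ $x$-monotone pieces, so that vertical decompositions of arrangements of subsets of $H$ consist of pseudo-trapezoids with the usual constant description complexity. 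Then within $\sigma$ we take a sample (an $\epsilon$-net, or a deterministic sample obtained from Chazelle's $\epsilon$-approximation machinery) and form the vertical decomposition of the sample's arrangement inside $\sigma$. This works because the range space formed by arcs and pseudo-trapezoids has constant VC-dimension, so the net and approximation theory carries over unchanged.

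The central quantity is the number of cells. For lines, Chazelle keeps each $\Xi_i$ at size $O(\rho^{2i})$ by a refined sampling whose cell count is controlled by the number of arrangement vertices inside $\sigma$. This is where the arc bound has to differ. By a standard Clarkson--Shor argument, the vertical decomposition of a sample $R\subseteq H_\sigma$ of size about $\rho$, clipped to $\sigma$, has $O(\rho + \rho^2 X_\sigma/|H_\sigma|^2)$ cells in expectation, where $X_\sigma$ is the number of vertices of $\mathcal{A}(H)$ inside $\sigma$. Deterministically we would obtain the same bound via Chazelle's sparse $\epsilon$-approximations, using approximation of the vertex count in addition to the crossing counts. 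Summing over $\sigma\in\Xi_{i-1}$, with $\sum_\sigma X_\sigma\le K$ (cells are interior-disjoint) and $|H_\sigma|\approx n/\rho^{i-1}$, gives
$|\Xi_i|\le c_0\rho|\Xi_{i-1}|+O(K\rho^{2i}/n^2)$.
Unrolling this recurrence, $|\Xi_k|$ is at most $(c_0\rho)^k+O(K r^2/n^2)$ up to a geometric sum. Since $(c_0\rho)^k=r^{1+\log_\rho c_0}$, choosing $\rho$ large enough relative to $c_0$ yields $O(r^{1+\epsilon}+Kr^2/n^2)$. Because $K\le n^2$, this is also $O(r^2)$.

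For the running time, processing a cell $\sigma$ costs $O(|H_\sigma|)$ times a constant depending on $\rho$: we compute the sample, build its decomposition, and distribute $H_\sigma$ to the children by walking each arc through $O(1+\text{crossings})$ child cells. Summing $|\Xi_{i-1}|\cdot n/\rho^{i-1}$ over levels gives
$\sum_i (\rho^{i(1+\epsilon)}+K\rho^{2i}/n^2)\, n/\rho^{i}=O(nr^\epsilon+Kr/n)$,
which is at most $O(nr)$. The same sum bounds the total size of all conflict lists.

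The main obstacle is to obtain the per-cell bound deterministically while also keeping each child's conflict list at most $|H_\sigma|/\rho$. Chazelle's line version depends on a careful inductive choice of the sample sizes and on an $\epsilon$-approximation of the vertex set. Here vertex counts can increase by up to a factor of $2$, since two arcs may meet twice, and the cells are pseudo-trapezoids, so the constants must be re-derived. I would first try to copy Chazelle's argument verbatim, only replacing "lines" by "$x$-monotone arc pieces" and checking that each of his lemmas uses only constant VC-dimension and constant-size cell descriptions. After that, I would verify that the extra $\epsilon$ slack in the exponent can absorb any constant-factor loss per level.
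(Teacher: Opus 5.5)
Your overall route (Chazelle's hierarchical construction with pseudo-trapezoidal cells, a per-level recurrence in which the arrangement vertices enter through $K$, and a large constant $\rho$ so that $(c_0\rho)^k=O(r^{1+\epsilon})$) is the one the paper relies on. The paper does not reprove the theorem; it cites Wang's extension of Chazelle's algorithm. Once the recurrence $|\Xi_i|\le c_0\rho|\Xi_{i-1}|+O(K\rho^{2i}/n^2)$ is available, your unrolling and your running-time summation are correct. The gap is that this recurrence does not follow from the scheme you describe. You refine every $\sigma\in\Xi_{i-1}$ by a $(1/\rho)$-cutting of its own list $H_\sigma$, which gives $O(\rho+\rho^2X_\sigma/|H_\sigma|^2)$ cells, and you then substitute $|H_\sigma|\approx n/\rho^{i-1}$. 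But $n/\rho^{i-1}$ is only an \emph{upper} bound on $|H_\sigma|$, and $|H_\sigma|$ sits in the denominator, so the substitution goes the wrong way. Nothing in a $(1/\rho^{i-1})$-cutting excludes cells with $|H_\sigma|\ll n/\rho^{i-1}$.

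This is not a cosmetic issue. Under uniform $(1/\rho)$-refinement, a light cell that contains a dense local arrangement keeps being split at every remaining level, so a vertex can cost $\Theta(1)$ cells instead of $O(\rho^{2k}/n^2)$. For example, suppose some $\sigma\in\Xi_1$ has $H_\sigma$ equal to a cluster of $h<\rho^{k-1}$ pairwise crossing arcs. After $k-1$ further $(1/\rho)$-refinements, no final cell inside $\sigma$ is crossed by any arc, so $\sigma$ alone contributes $\Omega(h^2)$ cells. Take $n=r^2$, $h=r^{0.9}$, and all other arcs disjoint from each other and from the cluster. Then $\sigma$ contributes $\Omega(r^{1.8})$ cells, whereas $K=\Theta(h^2)$ makes the claimed bound $O(r^{1+\epsilon}+Kr^2/n^2)=O(r^{1+\epsilon})$.

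The missing ingredient is Chazelle's normalization by the \emph{target} list size rather than the actual one. In $\sigma\in\Xi_{i-1}$, take a $(1/t_\sigma)$-cutting with $t_\sigma=\rho^{i}|H_\sigma|/n\le\rho$, and leave $\sigma$ unsplit if $|H_\sigma|\le n/\rho^i$. Then:
\begin{itemize}
\item children still have at most $|H_\sigma|/t_\sigma=n/\rho^i$ arcs;
\item there are still $O(\rho^2)$ children;
\item the Clarkson--Shor count becomes $O(t_\sigma+t_\sigma^2X_\sigma/|H_\sigma|^2)=O(\rho+\rho^{2i}X_\sigma/n^2)$, with no assumption on $|H_\sigma|$.
\end{itemize}
With this change, your recurrence, the size bound $O(r^{1+\epsilon}+Kr^2/n^2)$, and the time bound $O(nr^{\epsilon}+Kr/n)$ all go through. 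You only sketch the deterministic realization (sparse approximations and nets for arcs). That is exactly the part the paper delegates to Wang, and it needs the same normalization.
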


\begin{figure}[t]
\begin{minipage}[t]{\textwidth}
\begin{center}
\includegraphics[height=1.0in]{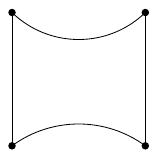}
\caption{\footnotesize Illustrating a pseudo-trapezoid. The top (resp., bottom) side is a circular arc that is part of an input arc. The left (resp., right) side a vertical segment.}
\label{fig:pseduotrapezoid}
\end{center}
\end{minipage}
\vspace{-0.15in}
\end{figure}

Cuttings are one of our main tools. Although cuttings is a standard technique, the novelty of our approach lies in how to combine cuttings with the newly discovered geometric observations and techniques on unit circular arcs.

\section{The algorithm}
\label{sec:arc}

In this section, we present an $O(n^{4/3}\log^{16/3} n)$-time algorithm for computing the number of intersections of $n$ circular arcs of the same radius in the plane. At the very end of this section, we further reduce the time of the algorithm to $O(K^{1/3}n^{2/3}(\frac{n^2}{n+K})^{\epsilon}\log^{16/3}n)$, where $K$ is the number of intersections of all arcs.

Let $S$ be a set of $n$ circular arcs of the same radius in the plane. Without loss of generality, we assume that the radius is $1$, and each arc is thus a {\em unit circular arc}. Our goal is to compute the number of intersections of all arcs of $S$. In the following, unless otherwise stated, a circular arc refers to a unit circular arc.

We call a disk a {\em unit disk} if its radius is $1$; the boundary of a unit disk is a {\em unit circle}.
For each arc $s$, the circle that contains it is called its {\em underlying circle} (the disk bounded by the circle is called the {\em underlying disk}), and the center of the circle is also called the {\em center} of $s$. We use $\alpha(s)$ and $D(s)$ to denote the underlying circle and disk of $s$, respectively.
Let $P$ denote the set of the centers of the arcs of $S$.
For any region $A$ in the plane, let $P(A)$ denote the subset of points of $P$ in $A$, i.e., $P(A)=P\cap A$.

\begin{figure}[t]
\begin{minipage}[t]{\textwidth}
\begin{center}
\includegraphics[height=1.6in]{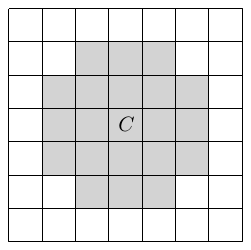}
\caption{\footnotesize The set $N(C)$ is a subset of the grey cells. Because the side-length of each cell is $1/\sqrt{2}$, the length between any point in $C$ and any point not in a grey cell must be larger than $1$.}
\label{fig:grid}
\end{center}
\end{minipage}
\vspace{-0.15in}
\end{figure}

With respect to the point set $P$, Wang gave an algorithm in~\cite{ref:WangUn23} that can compute a set $\calC$ of $O(n)$ axis-parallel square cells in the plane whose interiors are pairwise disjoint and whose union covers all the points of $P$, with the following properties: (1) Each cell has side length $1/\sqrt{2}$. (2) Every two cells are separated by an axis-parallel line. (3) For a unit disk $D_p$ with center $p$, if $p$ is not in any cell of $\calC$, then $D_p\cap P=\emptyset$.
(4) Each cell $C$ of $\calC$ is associated with a subset $N(C)$ of $O(1)$ cells of $\calC$, such that for any disk $D$ with center in $C$, every point of $P\cap D$ is in one of the cells of $N(C)$; e.g., see Fig.~\ref{fig:grid}.
(5) Each cell $C'$ of $\calC$ is in $N(C)$ for a constant number of cells $C\in \calC$.
In addition to $\calC$, Wang's algorithm~\cite{ref:WangUn23} also computes the subsets $P(C)$ and $N(C)$ for all cells $C\in \calC$, in a total of $O(n\log n)$ time and $O(n)$ space.
Using the properties of $\calC$ described above, we can obtain the following lemma.

\begin{lemma}\label{lem:60}
\begin{enumerate}
  \item Eevery arc of $S$ is covered by the union of all cells of $\calC$.
  \item If an arc $s\in S$ intersects $C$, then the center of $s$ must be in one of the cells of $N(C)$.
  \item Each arc of $S$ only intersects a constant number of cells of $\calC$.
  \item $\sum_{C\in \calC}\sum_{C'\in N(C)}|P(C')|=O(n)$.
\end{enumerate}
\end{lemma}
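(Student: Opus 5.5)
The plan is to derive all four parts directly from properties (1)–(5) of $\calC$, in order, using only that every point of an arc $s$ is at distance exactly $1$ from its center $p$.

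\textbf{Part 1.} Let $s\in S$ have center $p\in P$. Since $p\in P$ and the cells of $\calC$ cover $P$, $p$ lies in some cell $C_0\in\calC$. Every point $q\in s$ lies on $\alpha(s)=\partial D_p$, so $|pq|=1$. We must show that $q$ lies in some cell of $\calC$. Property (4) speaks only about points of $P$, so we return to how the cells are defined in Wang's construction (Fig.~\ref{fig:grid}). The grey neighborhood of $C_0$ consists of all grid cells within distance $1$ of $C_0$, and the construction includes these grid cells in $\calC$. Indeed, a point outside every grey cell is at distance more than $1$ from $C_0$, which is exactly what the caption of Fig.~\ref{fig:grid} asserts. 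Hence $q$, being at distance $1$ from $p\in C_0$, lies in a grey cell of $C_0$, which is a cell of $\calC$. This is the step I expect to be the main obstacle. Properties (1)–(5) as stated do not literally guarantee that empty grid cells near a center belong to $\calC$, so the argument must either rely on Wang's construction or require that $\calC$ contain the grey neighborhood of every nonempty cell. I would make this explicit: $N(C)$ is the set of grey cells around $C$ that belong to $\calC$, and the grey cells of every cell containing a point of $P$ are included in $\calC$.

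\textbf{Part 2.} Suppose $s$ intersects $C$ at a point $q$, and let $p$ be the center of $s$. Then $|pq|=1$, so $p$ lies in the closed unit disk centered at $q\in C$. Also $p\in P$, so $p\in P\cap D_q$. Applying property (4) to the disk $D_q$, whose center lies in $C$, shows that $p$ lies in a cell of $N(C)$. Property (4) is phrased for "any disk $D$ with center in $C$", which includes $D_q$ even though $q$ is not a point of $P$.

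\textbf{Part 3.} By Part 2, if $s$ with center $p\in C'$ intersects a cell $C$, then $C'\in N(C)$. By property (5), $C'$ lies in $N(C)$ for only $O(1)$ cells $C$. So $s$ meets only $O(1)$ cells.

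\textbf{Part 4.} Exchange the order of summation:
\[
\sum_{C}\sum_{C'\in N(C)}|P(C')|=\sum_{C'}|P(C')|\cdot|\{C: C'\in N(C)\}|.
\]
By property (5), the second factor is $O(1)$ for every $C'$. The cells have disjoint interiors and cover $P$, so, assigning each point on a shared boundary to a single cell, $\sum_{C'}|P(C')|=n$. The whole sum is therefore $O(n)$.
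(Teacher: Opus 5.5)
Your Parts 2--4 follow the paper's proof. Part 2 applies property (4) to the unit disk centered at the intersection point, Part 3 combines Part 2 with property (5), and Part 4 is the double count via property (5).

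The gap is in Part 1. You claim that properties (1)--(5) do not guarantee that every point at distance $1$ from a center is covered. You then fall back on an internal feature of Wang's construction: that every grey cell around a nonempty cell belongs to $\calC$. The paper never states this. The caption of Fig.~\ref{fig:grid} only says that $N(C)$ is a \emph{subset} of the grey cells. So as written, your Part 1 rests on an assumption the paper does not supply. Proposing to ``require'' it changes the hypotheses rather than proving the claim.

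What you missed is that property (3) is exactly the fact you need. Its contrapositive says that any point $q$ with $D_q\cap P\neq\emptyset$ lies in some cell of $\calC$. Take $q$ on an arc $s$ whose center is $c\in P$. Then $|qc|=1$, so $c\in D_q\cap P$, and hence $q$ lies in a cell of $\calC$. This one-line argument is the paper's proof of Part 1. It replaces your whole discussion of grey cells and needs no knowledge of how $\calC$ is built.
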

\begin{proof}
To prove the first lemma statement, it suffices to show that for any point $p$ of an arc of $S$, $p$ must be in a cell $C$ of $\calC$. Assume to the contrary this is not the case for a point $p$ of an arc $s\in S$. Then, by property (3) of $\calC$, $D_p\cap P=\emptyset$. However, this is not true since $D_p$ contains the center of $s$, which is in $P$. We thus obtain contradiction.


For the second lemma statement, consider a point $p$ in a cell $C$ and $p$ is also on an arc $s\in S$. Let $q$ be the center of $s$. Since $D_p$ contains $q$, by the property (4) of $\calC$, $q$ must be in one of the cells of $N(C)$.

For the third lemma statement, consider an arc $s$ whose center is in a cell $C'$. If $s$ intersects a cell $C$, then by the second lemma statement, $C'$ must be in $N(C)$. According to the property (5) of $\calC$, each cell $C'$ of $\calC$ is in $N(C)$ for a constant number of cells $C\in \calC$. This implies that $s$ only intersects a constant number of cells of $\calC$.

The fourth lemma statement simply follows the property (5) of $\calC$.
\end{proof}

For each cell $C\in \calC$, we use $S(C)$ to denote the set of arcs of $S$ whose centers are in $P(C)$.

With Lemma~\ref{lem:60}, we will compute the number of intersections of the arcs of $S$ inside each cell $C$ of $\calC$ and then add these numbers together. In what follows, we focus on one such cell $C$. Our goal is to compute the number of intersections of the arcs of $S$ inside $C$. By Lemma~\ref{lem:60}, we only need to consider the arcs in $S(C')$ for all $C'\in N(C)$. To this end, for each pair of cells $(C_1,C_2)$ of $N(C)$, including the case where $C_1=C_2$, we will compute the number of intersections inside $C$ between the arcs of $S(C_1)$ and the arcs of $S(C_2)$; the sum of these numbers is the number of intersections inside $C$ among all arcs of $S$. In the following, we discuss our algorithm for such a pair $(C_1,C_2)$.

We assume that $C_1$ and $C_2$ are two different cells since the other case can be solved by the same techniques. Let $n_j=|S(C_j)|$, $j=1,2$. Since we are interested in the intersections inside $C$, for each arc $s\in S(C_j)$, we only keep its portions inside $C$. As the radius of $s$ is $1$ and $C$ is an axis-parallel square of side length $1/\sqrt{2}$, $s$ has at most two (maximal) sub-arcs in $C$. Let $S_j$ denote the set of these sub-arcs of $S(C_j)$. Thus, $|S_j|\leq 2n_j$. Our goal is to compute the number of intersections between the arcs of $S_1$ and the arcs of $S_2$.

\subsection{Counting intersections between $S_1$ and $S_2$: Algorithm overview}
\label{sec:counttwocells}

To simplify the notation, we temporarily let $n=|S_1|+|S_2|$ and $S=S_1\cup S_2$. By definition, each arc of $S$ is inside $C$.
For each arc $s\in S$, we extend both endpoints of $s$ along the underlying circle of $s$ until $\partial C$, and we refer to the new arc as the {\em extending arc} of $s$. Let $S'$ denote the set of all extending arcs of $S$. Note that every two arcs of $S'$ intersect at most twice.

We will start with computing a cutting of $S'$ and then solve a {\em pseudo-trapezoid-restricted subproblem} in each cell of the cutting. Below we first describe our algorithm for solving the restricted subproblem. Consider a pseudo-trapezoid $\tau$ of the cutting, which is in $C$.  Let $S(\tau)$ be the set of arcs of $S$ intersecting $\tau$ and let $n_{\tau}=|S(\tau)|$. Let $S_{1}(\tau)=S_1\cap S(\tau)$ and $S_{2}(\tau)=S_2\cap S(\tau)$.
The goal of the restricted subproblem is to compute the number of intersections between the arcs of $S_1(\tau)$ and $S_2(\tau)$ inside $\tau$. If we color the arcs of $S_1(\tau)$ red and color those of $S_2(\tau)$ blue, then we essentially have a bichromatic arc intersection problem: compute the number of intersections between red arcs and blue arcs.


We say that an arc $s\in S(\tau)$ is a {\em short arc} if it has at least one endpoint in the interior of $\tau$.
Otherwise, $s$ is a {\em long arc}.
Because the radius of $s$ is $1$, $\tau$ is a pseudo-trapezoid inside $C$ (and the radii of the circular arcs on the boundary of $\tau$ are also $1$), and $C$ a square cell of side-length $1/\sqrt{2}$, the intersection of $\tau$ and $s$ consists of at most two (maximal) sub-arcs.
Since we are only interested in intersections inside $\tau$, for each arc $s\in S(\tau)$, we trim it and only keep its portion (possibly two sub-arcs) inside $\tau$. If $s$ has two trimmed sub-arcs in $\tau$, then both sub-arcs are treated as independent arcs, which are considered as long (resp., short) arcs if $s$ is a long (resp., short) arc. A useful property of long arcs is that each long arc has both endpoints on $\partial \tau$.
To simplify the notation, we still use $S(\tau)$ to denote the set of all trimmed arcs in $\tau$ and use $n_{\tau}$ to denote the size of $S(\tau)$. Let $m_{\tau}$ denote the number of short arcs of $S(\tau)$.

The arc intersections that we are interested in can now be classified into four types: (1) {\em long-red and long-blue intersections} (i.e., intersections between long red arcs and long blue arcs), (2) {\em short-red and short-blue intersections}, (3) {\em long-red and short-blue intersections}, and (4) {\em short-red and long-blue intersections}.


For counting type (2) intersections, we use the algorithm of Agarwal, Pellegrini, and Sharir~\cite{ref:AgarwalCo93}, which runs in $O(m_{\tau}^{4/3+\epsilon})$ time. We will give an $O(n_{\tau}\log^2 m_{\tau}+m_{\tau}\sqrt{n_{\tau}}\log m_{\tau})$-time algorithm for counting type (3) intersections in Section~\ref{sec:type3}, and type (4) intersections can be counted in a similar way. Counting type (1) intersections is discussed in Section~\ref{sec:type1}.


\subsection{Counting type (3) intersections}
\label{sec:type3}

In this section, we present an algorithm that can compute the number of type (3) intersections in $O(n_{\tau}\log^2 m_{\tau}+m_{\tau}\sqrt{n_{\tau}}\log m_{\tau})$ time.



We further partition type (3) intersections into two sub-types: (3.1)
a pair of a long red arc and a short blue arc that intersect only once; (3.2) a pair of a long red arc and a short blue arc that intersect twice. We first discuss how to compute the number of type (3.1) intersections.

\subsubsection{Counting type (3.1) intersections}

We first present an algorithm that can compute the number of type (3.1) intersections in $O(m_{\tau}^2/\log m_{\tau}+n_{\tau}\log^2 m_{\tau})$ time.

For any long arc $s\in S(\tau)$, the intersection of its underlying circle $\alpha(s)$ and $\tau$ consists of at most two arcs (one of which is $s$). If the intersection is only one arc, which is $s$, then we call $s$ a {\em full arc}; otherwise it is a {\em partial arc}.

We further partition type (3.1) intersections into two sub-types: (3.1.1) a pair of a long red full arc and a short blue arc that intersect only once; (3.1.2) a pair of a long red partial arc and a short blue arc that intersect only once.
In what follows, we first describe an algorithm to compute the number of type (3.1.1) intersections. The algorithm for type (3.1.2) is more complicated.

\paragraph{Counting type (3.1.1) intersections.}
Consider a short blue arc $s_b$. Let $D_1$
and $D_2$ be the two unit disks centered at the two endpoints of
$s_b$, respectively. We call the region of $D_1$ that is not in $D_2$
a {\em lune}; similarly, region of $D_2$ that is not in $D_1$ is also
a {\em lune}. We use $lune(s_b)$ refer to the union of the two lunes (for simplicity, we consider $lune(s_b)$ as a single lune defined by $s_b$), i.e., $lune(s_b)=D_1\cup D_2-D_1\cap D_2$, the symmetric difference of $D_1$ and $D_2$ (e.g., the grey area in Fig.~\ref{fig:lune}). By definition, a point is in $lune(s_b)$ if and only if the unit disk centered at the point contains exactly one endpoint of $s_b$, implying that the unit circle centered at the point intersects $s_b$ exactly once.

The following observation is critical to our algorithm for counting type (3.1.1) intersections.

\begin{figure}[t]
\begin{minipage}[t]{\textwidth}
\begin{center}
\includegraphics[height=2.2in]{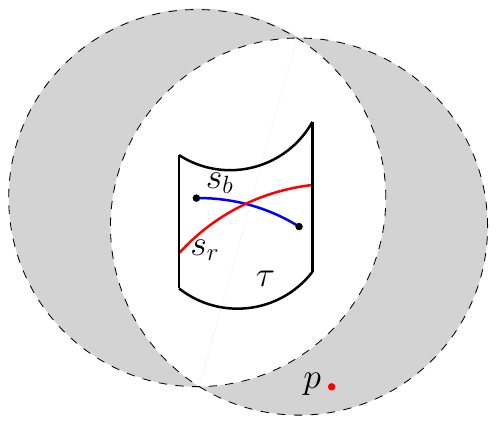}
\caption{\footnotesize Illustrating Observation~\ref{obser:20}: $p$ is the center of $s_r$. The two dashed circles are unit circles centered at the two endpoints of $s_b$, respectively. The grey area is $lune(s_b)$.}
\label{fig:lune}
\end{center}
\end{minipage}
\vspace{-0.15in}
\end{figure}

\begin{observation}\label{obser:20}
For a short blue arc $s_b$ and a long red full arc $s_r$, $s_r$ intersects $s_b$ exactly once if and only if the center of $s_r$ lies in $lune(s_b)$; e.g., see Fig.~\ref{fig:lune}.
\end{observation}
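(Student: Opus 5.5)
Let $p$ be the center of $s_r$, let $a,b$ be the endpoints of $s_b$, and let $\alpha=\alpha(s_r)$ be the unit circle centered at $p$. The plan is to reduce the statement to a parity argument along $s_b$.

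As noted just before the observation, $p\in lune(s_b)$ if and only if the disk $D(s_r)$ contains exactly one of $a,b$. So it suffices to show:
\[
|s_r\cap s_b| \text{ is odd (hence } =1) \iff \text{exactly one of } a,b \text{ lies in the interior of } D(s_r).
\]
We assume general position: no endpoint of $s_b$ lies on $\alpha$, and intersections are transversal.

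\textbf{Step 1 (reduce to the full circle).} Since $s_b$ is a trimmed arc contained in $\tau$, every point of $s_b\cap\alpha$ lies in $\alpha\cap\tau$. Because $s_r$ is a full arc, $\alpha\cap\tau=s_r$. Hence $s_b\cap s_r=s_b\cap\alpha$, and we only need to count intersections of $s_b$ with the whole circle $\alpha$.

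\textbf{Step 2 (count with the underlying circle).} Two distinct unit circles meet at most twice, so $s_b\cap\alpha$ has $0$, $1$, or $2$ points. Walk along $s_b$ from $a$ to $b$. Each transversal crossing of $\alpha$ switches the walk between the interior and exterior of $D(s_r)$. Therefore the number of crossings is odd if and only if $a$ and $b$ lie on opposite sides of $\alpha$, that is, if and only if exactly one of them is in $D(s_r)$. Combined with the bound of at most two intersections, "odd" means exactly one. This gives both directions at once.

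\textbf{Main obstacle.} The delicate point is Step 1: we must use that $s_r$ is full and not partial. If $s_r$ were partial, then $\alpha\cap\tau$ would contain a second arc, and $s_b$ could cross that arc instead of $s_r$. It is also worth checking the role of the trimming. Trimming $s_b$ to $\tau$ guarantees $s_b\subseteq\tau$, and this is exactly what makes $\alpha\cap s_b\subseteq s_r$ valid.

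Degenerate cases need separate treatment:
\begin{itemize}
\item An endpoint of $s_b$ lying on $\alpha$, which corresponds to $p$ on the boundary of the lune.
\item A tangency between $s_b$ and $\alpha$.
\item $\alpha(s_b)=\alpha$, the case of identical circles.
\end{itemize}
These are excluded by the standard general position assumption, or handled by a consistent convention for closed versus open lunes. I would state this convention explicitly rather than work through each case.
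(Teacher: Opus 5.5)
Your proof is correct and follows essentially the same route as the paper: both use $s_b\subseteq\tau$ and fullness ($s_r=\alpha(s_r)\cap\tau$) to identify $s_b\cap s_r$ with $s_b\cap\alpha(s_r)$, and both use that $s_b$ meets $\alpha(s_r)$ at most twice, so the endpoints of $s_b$ lie on opposite sides of the circle exactly when there is a single crossing. Your Step~1 is slightly more careful than the paper, since it applies the fullness reduction to both directions, whereas the paper uses it explicitly only in the ``if'' direction and leaves it implicit in the ``only if'' direction.
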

\begin{proof}
Let $p$ be the center of $s_r$ and $D$ be the underlying disk of $s_r$.

If $s_r$ intersects $s_b$ exactly once, then one endpoint of $s_b$ must be inside $D$ while the other is outside $D$. Therefore, $p$ must be in $lune(s_b)$. Below we prove the other direction of the observation.

If $p$ is in $lune(s_b)$, then by the definition of $lune(s_b)$, one endpoint of $s_b$ is in $D$ while the other is not. Hence, $s_b$ must cross $\partial D$ at exactly one point, say, $q$, which is the only intersection between $s_b$ and $\partial D$. As $s_b$ is in the cell $\tau$, $q\in \tau$. Since $q\in \partial D$, we obtain that $q\in \partial D\cap \tau$. Further, since $s_r$ is a long full arc, we have $s_r=\partial D\cap \tau$. Hence, $q\in s_r$. This implies that $q\in s_r\cap s_b$. Since $s_r\in \partial D$ and $s_b$ and $\partial D$ has exactly one intersection, we conclude that $s_b$ and $s_r$ has exactly one intersection.
\end{proof}

Based on the above observation, we have the following lemma.

\begin{lemma}\label{lem:70}
The number of type (3.1.1) intersections can be computed in $O(m_{\tau}^2/\log m_{\tau}+n_{\tau}\log m_{\tau})$ time.
\end{lemma}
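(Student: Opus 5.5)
By Observation~\ref{obser:20}, the number of type (3.1.1) intersections equals the number of pairs $(p, s_b)$, where $p$ is the center of a long red full arc and $s_b$ is a short blue arc, with $p\in lune(s_b)$. Since $p\in lune(s_b)$ exactly when $p$ lies in exactly one of the two unit disks $D_1,D_2$ centered at the endpoints of $s_b$, we get $|P_r\cap lune(s_b)| = |P_r\cap D_1|+|P_r\cap D_2|-2|P_r\cap D_1\cap D_2|$, where $P_r$ is the set of centers of long red full arcs. Summing over short blue arcs, we need the $O(m_\tau)$ unit-disk counts and the $m_\tau$ lens counts $|P_r\cap D_1\cap D_2|$. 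Note that $|P_r|\le n_\tau$ and there are at most $m_\tau$ short blue arcs.

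To obtain the target bound $O(m_\tau^2/\log m_\tau + n_\tau\log m_\tau)$ we do not answer each query independently. Instead we exploit the arrangement of the $2m_\tau$ unit circles centered at the short-arc endpoints. This arrangement has $O(m_\tau^2)$ faces. Every point of $P_r$ in the same face lies in the same subset of disks, hence in $lune(s_b)$ for the same set of $s_b$'s.

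The plan is as follows.
\begin{enumerate}
\item Build a point-location structure on this arrangement in $O(m_\tau^2)$ time.
\item Locate each $p\in P_r$ in $O(\log m_\tau)$ time, for a total of $O(n_\tau\log m_\tau)$, and record the weight of each face, i.e., the number of points of $P_r$ in it.
\item For each short blue arc $s_b$, sum the weights of the faces inside $lune(s_b)$.
\end{enumerate}
Doing step~3 naively costs $O(m_\tau^2)$ per query and $O(m_\tau^3)$ overall, which is too slow. The fix is to use the inclusion--exclusion identity above: a disk count is the sum of face weights inside one circle, and the lens counts are handled similarly.

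To shave the work down to $O(m_\tau^2/\log m_\tau)$, I would avoid the full arrangement. Apply a $(1/r)$-cutting (Theorem~\ref{theo:cutting}) to the $2m_\tau$ circles, with $r = m_\tau/\log m_\tau$; it has $O(r^2)$ cells, each crossed by $O(\log m_\tau)$ circles. Locate every point of $P_r$ in the cutting in $O(n_\tau\log m_\tau)$ time. Within a cell, handle the $O(\log m_\tau)$ crossing circles by brute force per point: each point tests its $O(\log m_\tau)$ crossing disks, contributing $O(n_\tau\log m_\tau)$ in total. For the disks that fully contain a cell, the contribution is that cell's total weight, so each such disk needs the sum of weights of the cells it fully contains. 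A walk over the cutting cells for each circle would take $O(r)$ per disk and $O(m_\tau r)=O(m_\tau^2/\log m_\tau)$ in total. I would obtain this by traversing the cutting's cells adjacent along each circle and accumulating prefix sums of cell weights in a sweep order, so that the interior sum of a disk is read off from boundary cells, using the standard trick of vertical-prefix sums per column of the cutting. The lens count $|P_r\cap D_1\cap D_2|$ is computed the same way: walk along both boundary arcs of the lens and take prefix sums.

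The main obstacle is making this prefix-sum/walk argument rigorous. Region sums bounded by circular arcs must be expressible as signed sums of per-cell prefix quantities attached to the $O(r)$ cells crossed by the boundary. This relies on the cutting cells being pseudo-trapezoids with vertical sides, so a vertical "below" prefix is well-defined within each cell column. I would attempt this first by assigning to each cell the total weight vertically below it within its trapezoidal decomposition slab, and then checking that boundary walks cost $O(r)$ per query circle.
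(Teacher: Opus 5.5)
Your reduction is sound. Observation~\ref{obser:20} turns the count into point--lune incidences, your inclusion--exclusion identity for the symmetric difference is correct, and brute-forcing the $O(\log m_\tau)$ circles that cross a point's cell costs $O(n_\tau\log m_\tau)$ overall. The gap is the step you yourself flag as the main obstacle, and the mechanism you propose for it does not work.

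The cells produced by Theorem~\ref{theo:cutting} are pseudo-trapezoids of a vertical-decomposition-type subdivision. Their vertical sides do not extend into global slabs, so there are no ``columns''. A per-cell quantity such as ``the weight vertically below $\sigma$'' therefore does not telescope along a boundary walk. A circle typically passes through a crossed cell over only a proper sub-interval of that cell's $x$-range, entering or leaving through its top or bottom side. The weight below the circle over that sub-interval is not determined by any precomputed per-cell value. Forcing a column structure by extending every vertical side into a full line creates up to $\Theta(r^2)$ slabs. One circle can then cross $\Theta(r^2)$ cells, so the walks cost $\Theta(m_\tau r^2)$ rather than $O(m_\tau r)$. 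As written, you have no way to obtain, for each disk or lens, the total weight of the cells it fully contains within $O(m_\tau^2/\log m_\tau)$ total time. That is the entire content of the lemma beyond the reduction.

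The missing idea is to use the hierarchy that Theorem~\ref{theo:cutting} already provides, together with its conflict lists. In a hierarchical $(1/r)$-cutting, the region of $\Xi_k$ fully inside a range decomposes uniquely into maximal cells: cells contained in the range whose parent is not. Each maximal cell is a child of a cell whose list $H_{\sigma'}$ contains a boundary arc of the range. Every cell has $O(1)$ children, so all maximal cells over all ranges are found by scanning the lists $H_{\sigma'}$, whose total size is $O(m_\tau r)$.

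The paper does exactly this, in the dual direction and directly with lunes. No inclusion--exclusion is needed, since $lune(s_b)\cap C_r$ is bounded by at most four arcs. It stores at each cell $\sigma$ the number $m_\sigma$ of lunes containing $\sigma$ but not its parent, computed from the parent's $H_{\sigma'}$. Each red center then sums $m_{\sigma_i}$ over the $O(\log r)$ cells containing it, one per level, and brute-forces the $O(\log m_\tau)$ arcs of $H_\sigma$ at the last level. This gives $O(m_\tau r+n_\tau\log m_\tau)$ with $r=m_\tau/\log m_\tau$.

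Your primal orientation, where each disk or lens adds the point weights $w(\sigma)$ of its maximal cells, would also achieve the bound on top of the hierarchy. It does not work with a flat cutting and per-cell vertical prefix sums.
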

\begin{proof}
For notational convenience, we let $n=n_{\tau}$ and $m=m_{\tau}$ in this proof.
Recall that centers of red arcs are in the square cell $C_r\in \calC$. For each short blue arc $s_b$, since the radius of $s_b$ is $1$ and $C_r$ is a square cell of the side-length $1/\sqrt{2}$, the boundary of the intersection of $C_r$ and $lune(s_b)$ has at most four circular arcs (which are on the unit circles centered at the two endpoints of $s_b$, respectively); let $H$ be the set of these circular arcs defined by all short blue arcs. Hence, $|H|=O(m)$.
In light of Observation~\ref{obser:20}, it suffices to compute the number of lunes $lune(s_b)$ containing the center of each long red arc. Let $L$ denote the set of the lunes $lune(s_b)$ of all short blue arcs $s_b$.

We build a hierarchical $(1/r)$-cutting $\Xi_0,\Xi_1,\ldots,\Xi_k$ on the arcs
of $H$, with $r=m/\log m$.
This can be done in $O(mr)$
time by Theorem~\ref{theo:cutting}. For each cell $\sigma\in \Xi_i$,
$1\leq i\leq k$, we define $m_{\sigma}$ as the number of lunes of $L$ that
contain $\sigma$ but do not contain the parent cell of $\sigma$ in $\Xi_{i-1}$
(i.e., the boundary of the lune crosses the parent cell). We can compute
$m_{\sigma}$ for all cells $\sigma\in \Xi_i$, for all $i=1,2,\ldots,k$, as
follows.

Recall that the algorithm of Theorem~\ref{theo:cutting} for computing the
hierarchical cutting also produces the sets $H_{\sigma}$ for all cells $\sigma\in
\Xi_i$, $0\leq i\leq k$, where $H_{\sigma}$ is the subset of arcs of $H$
crossing $\sigma$, with the same time complexity $O(mr)$. As discussed in
Section~\ref{sec:pre}, the total size of these sets for all cells is $O(mr)$.


Next, we show that the values $m_{\sigma}$ for all cells $\sigma\in \Xi_i$, $1\leq i\leq k$ can be computed in $O(mr)$ time using the sets $H_{\sigma}$. Define $L(\sigma)$ as the subset of lunes of $L$ that contain $\sigma$ but does not contain the parent cell of $\sigma$. Hence, $m_{\sigma}=|L(\sigma)|$. Observe that one of the bounding arcs of each lune of $L(\sigma)$ must cross $\sigma'$, where $\sigma'$ is the parent of $\sigma$ in $\Xi_{i-1}$. This implies that we can compute $L(\sigma)$ and thus $m_{\sigma}$ using $H_{\sigma'}$ as follows. For each arc $s\in H_{\sigma'}$, we check whether the lune of $L$ bounded by $s$ contains $\sigma$; if yes, we add the lune to $L(\sigma)$. Since each cell of the hierarchical cutting has $O(1)$ child cells, the above algorithm for computing $m_{\sigma}$ for all cells $\sigma\in \Xi_i$, $1\leq i\leq k$, runs in time linear in the total size of $H_{\sigma}$ for all cells $\sigma\in \Xi_i$, $0\leq i\leq k$, which is $O(mr)$ as explained above. As such, after the hierarchical cutting is computed, $m_{\sigma}$ for all cells $\sigma\in \Xi_i$, $1\leq i\leq k$, can be computed in $O(mr)$ time.

For each long red arc $s$, we locate the cell $\sigma_i$ of $\Xi_i$ containing $q$ for each $1\leq i\leq k$, where $q$ is the center of $s$; we add all these values $m_{\sigma_i}$ to a total count $m_s$ (initially, $m_s=0$). Finally, for the cell $\sigma\in \Xi_k$, for each arc $s'\in H_{\sigma}$, we check whether the lune of $L$ bounded by $s'$ contains $q$; if yes, we increases $m_s$ by one. It is not difficult to see the value $m_s$ thus obtained is equal to the number of lunes of $L$ containing $q$. Since $|H_{\sigma}|=O(m/r)=O(\log m)$ for any $\sigma\in \Xi_k$ and $k=O(\log r)$, the time of the above algorithm for computing $m_s$ is $O(\log m)$. We apply the above algorithm for each long red arc, after which the number of type (3.1.1) intersections is computed. As there are at most $n$ long red arcs, the total time of the algorithm is $O(mr+n\log m)$, which is $O(m^2/\log m+n\log m)$ as $r=m/\log m$.
\end{proof}

\paragraph{Counting type (3.1.2) intersections.}
We now discuss the type (3.1.2) intersections. Consider a long red partial arc $s_r$. Recall that the center of $s_r$ is in the square cell $C_r$. According to the property (2) of $\calC$, the two cells $C_r$ and $C$ are separated by an axis-parallel line. Depending on whether the line is horizontal or vertical, without loss of generality, we assume that $C_r$ is either below $C$ or to the right of $C$.
Recall that the pseudo-trapezoid $\tau$ is in $C$, and typically the upper (resp., lower) side of $\tau$ is a unit circular arc
while both the left and right sides of $\tau$ are vertical segments.
Since $s_r$ is a partial arc, the underlying circle $\alpha(s_r)$ intersects $C$ at another arc, denoted by $s_r'$; we call $s_r'$ the {\em coupled arc} of $s_r$ (and $s_r$ is also the coupled arc of $s_r'$).

Consider a type (3.1.2) intersection between a long red partial arc $s_r$ and a short blue arc $s_b$.
We further partition type (3.1.2) intersections into two types: (3.1.2.1) $s_b$ does not intersect the coupled arc $s_r'$ of $s_r$; (3.1.2.2) $s_b$ intersects $s_r'$ (since $s_r$ and $s_r'$ are on the same circle and $s_b$ intersects $s_r$, $s_b$ intersects $s_r'$ exactly once in this case). We will present algorithms to count type (3.1.2.1) and type (3.1.2.2) intersections separately.

For each short-blue arc $s_b$, we use $lune'(s_b)$ to denote the region of the plane outside the union of the two unit disks centered at the two endpoints of $s_b$.

\begin{figure}[t]
\begin{minipage}[t]{\textwidth}
\begin{center}
\includegraphics[height=2.0in]{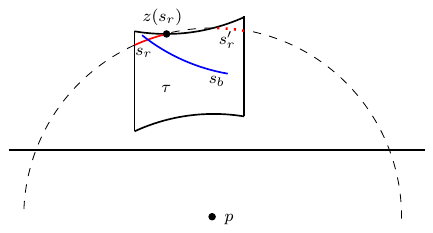}
\caption{\footnotesize Illustrating $s_r$, $s_r'$, and $z(s_r)$ for the case where $C_r$ is below $C$. The point $p$ is the center of $s_r$ and the dashed curve is the upper semicircle of $\alpha(s_r)$. In this figure, $s_r$ and $s_b$ form a type (3.1.2.1) intersection (i.e., $s_r$ and $s_b$ intersect exactly once and $s_r'$ does not intersect $s_b$).}
\label{fig:partial10}
\end{center}
\end{minipage}
\vspace{-0.15in}
\end{figure}

\begin{figure}[t]
\begin{minipage}[t]{\textwidth}
\begin{center}
\includegraphics[height=2.0in]{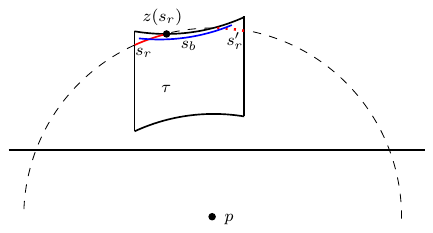}
\caption{\footnotesize Illustrating $s_r$, $s_r'$, and $z(s_r)$ for the case where $C_r$ is below $C$. The point $p$ is the center of $s_r$ and the dashed curve is the upper semicircle of $\alpha(s_r)$. In this figure, $s_r$ and $s_b$ form a type (3.1.2.2) intersection (i.e., $s_b$ and $s_r$ intersect exactly once and $s_b$ intersects $s_r'$).}
\label{fig:partial15}
\end{center}
\end{minipage}
\vspace{-0.15in}
\end{figure}

Recall that $C_r$ is either below $C$ or to the right of $C$. In the following, we give observations for both cases, which lead to algorithms for counting type (3.1.2.1) and type (3.1.2.2) intersections.

Suppose $C_r$ is below $C$. Then, $s_r$ and $s_r'$ are both on the upper semicircle of $\alpha(s_r)$ (e.g., see Fig.~\ref{fig:partial10}). Thus, both $s_r$ and $s_r'$ are $x$-monotone, and $s_r$ is either to the left or to the right of $s_r'$. Let $z(s_r)$ denote the endpoint of $s_r$ closer to $s_r'$, e.g., $z(s_r)$ is the right (resp., left) endpoint of $s_r$ if $s_r$ is to the left (resp., right) of $s_r'$.
We have the following observation for type (3.1.2.1) and type (3.1.2.2) intersections.

\begin{observation}\label{obser:30}
Let $s_r$ be a long red partial arc and $s_b$ a short blue arc.
Suppose $C_r$ is below $C$.
\begin{enumerate}
\item
$s_r$ and $s_b$ form a type (3.1.2.1) intersection (i.e., $s_r$ and $s_b$ intersect exactly once and $s_r'$ does not intersect $s_b$) if and only if the following two conditions hold: (1) the center of $s_r$ lies in $lune(s_b)$; (2) the endpoint of $s_b$ not in the disk $D(s_r)$ is to the left (resp., right) of $z(s_r)$ if $s_r$ is to the left (resp., right) of $s_r'$. See Fig.~\ref{fig:partial10}.

\item
$s_r$ and $s_b$ form a type (3.1.2.2) intersection (i.e., $s_b$ and $s_r$ intersect exactly once and $s_b$ intersects $s_r'$) if and only if the following two conditions hold:
(1) the center of $s_r$ lies in $lune'(s_b)$; (2) one endpoint of $s_b$ is to the left of $z(s_r)$ while the other endpoint is to the right of $z(s_r)$. See Fig.~\ref{fig:partial15}.
\end{enumerate}
\end{observation}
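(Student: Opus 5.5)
The plan is to work on the upper semicircle $\Gamma$ of $\alpha(s_r)$. Since $C_r$ is below $C$, every point of $C$ lies above the center $p$ of $s_r$, so $\alpha(s_r)\cap C$ (and hence $s_r\cup s_r'$) lies on $\Gamma$. Because $\Gamma$ is $x$-monotone, it can be parametrized by $x$. The key structural fact is that $\alpha(s_r)\cap C=s_r\cup s_r'$. Take the case where $s_r$ is to the left of $s_r'$; then $z(s_r)$ is the right endpoint of $s_r$. In $x$-order along $\Gamma$, the part of $\Gamma$ to the left of $z(s_r)$ that meets $C$ is exactly $s_r$, and the part to the right that meets $C$ is exactly $s_r'$. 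This uses that $s_r$ is long, so its endpoints lie on $\partial\tau$, and that every point of $\alpha(s_r)\cap\tau$ on that side belongs to $s_r$, since $\alpha(s_r)\cap\tau$ consists only of $s_r$ and possibly a second arc lying in $s_r'$. The arc $s_b$ lies in $\tau\subseteq C$. Hence an intersection point of $s_b$ with $\alpha(s_r)$ lies on $s_r$ exactly when it lies on $\Gamma$ to the left of $z(s_r)$, and on $s_r'$ otherwise. (More precisely, to the right it lies in $\alpha(s_r)\cap\tau$ minus $s_r$; I will check that this set is contained in $s_r'$.)

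Next I count intersections of $s_b$ with $\alpha(s_r)$ using endpoints, as in Observation~\ref{obser:20}. Since $s_b$ is a unit arc with distinct center, it meets $\alpha(s_r)$ at most twice. It meets it exactly once iff its endpoints are separated by $\alpha(s_r)$, i.e. iff $p\in lune(s_b)$. It meets it zero or two times iff both endpoints are on the same side. The main geometric tool is a monotonicity claim. Inside $C$, the region below $\Gamma$ is $D(s_r)\cap C$. Along $s_b$, which is a short unit arc inside a square of side $1/\sqrt2$ with center in $C_b$, the crossings with $\Gamma$ happen at increasing $x$. Moreover, going from an endpoint inside $D(s_r)$ to one outside, $s_b$ crosses $\Gamma$ once, between the endpoints' $x$-coordinates. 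I need $s_b$ to be $x$-monotone, or at least to argue that the crossing's $x$-position relative to $z(s_r)$ is determined by where the outside endpoint is. I expect this step to be the main obstacle.

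For part~1, suppose $p\in lune(s_b)$, so there is one crossing point $q$. The claim is that $q$ is left of $z(s_r)$ iff the outside endpoint $e$ of $s_b$ is left of $z(s_r)$. To prove it, I would use the vertical line through $z(s_r)$. Below $\Gamma$ near $z(s_r)$ lies $D(s_r)\cap\tau$, and $z(s_r)\in\partial\tau$. The plan is to argue that the outside region $C\setminus D(s_r)$ splits into a left part and a right part separated by the region under $s_r\cup s_r'$ together with the gap between them, which lies outside $\tau$. Since $s_b\subset\tau$ cannot pass through that gap, the portion of $s_b$ outside $D(s_r)$, from $q$ to $e$, stays on one side. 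Thus the crossing is on $s_r$ iff $e$ is on $s_r$'s side. This is the place I would try first via the topology of $\tau$: the gap is where $\Gamma$ leaves $\tau$. Conversely, if the crossing is on $s_r$ and $s_b$ does not meet $s_r'$, then $p\in lune(s_b)$ and the same side argument gives condition (2).

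For part~2, $s_b$ meets both $s_r$ and $s_r'$, so it has two crossings with $\alpha(s_r)$ and $p\in lune'(s_b)$ (both endpoints outside, since the portion of $s_b$ between the crossings is inside). By the side argument, the endpoints of $s_b$ are attached to the crossings on opposite sides of $z(s_r)$, giving condition (2). Conversely, suppose both endpoints are outside $D(s_r)$ and on opposite sides of $z(s_r)$. By the separation above, $s_b$ must pass from the left outside component to the right one. The only route inside $\tau$ goes through $D(s_r)$, which forces one crossing on $s_r$ and one on $s_r'$. The case where $s_r$ is to the right of $s_r'$ is symmetric by reflection.
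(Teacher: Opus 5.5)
Your plan is correct and is essentially the paper's argument. The paper cuts $\tau$ along $s_r$ and $s_r'$ into $\tau(s_r)$, $\tau\cap D(s_r)$ and $\tau(s_r')$, and notes that $z(s_r)$ is the rightmost point of $\tau(s_r)$ while all of $\tau(s_r')$ lies to its right. It then reads off both parts of the observation from which of these regions contain the endpoints of $s_b$. This is exactly your left/right separation, which should be stated for $\tau\setminus D(s_r)$ rather than $C\setminus D(s_r)$; stated that way, it makes the $x$-monotonicity of $s_b$ you were worried about unnecessary.

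The one geometric input your sketch leaves implicit is that $\alpha(s_r)$ leaves $\tau$ through its upper side at $z(s_r)$. This is what puts the whole slab of $\tau$ between $s_r$ and $s_r'$ inside $D(s_r)$, and the paper also only asserts it (``one can verify'').
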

\begin{proof}
We assume that $s_r$ is to the left of $s_r'$ since the argument for the other case is analogous.
Thus $z(s_r)$ is the right endpoint of $s_r$ (e.g., see Fig.~\ref{fig:partial10}). Let $c(s_r)$ be the center of $s_r$.

Since $C_r$ is below $C$, and $s_r$ and the arcs of the upper and lower sides of $\tau$ all have radii equal to $1$, one can verify that $z(s_r)$ must be on the upper side of $\tau$ (this also holds if the upper and/or lower sides of $\tau$ are line segments).
As $s_r$ is a long arc, $s_r$ has both endpoints on $\partial \tau$ and thus separates $\tau$ into two regions, one locally inside $D(s_r)$ and the other locally outside $D(s_r)$; we use $\tau(s_r)$ to refer to the one locally outside $D(s_r)$.
Similarly, $s'_r$ separates $\tau$ into two regions and we use $\tau(s_r')$ to refer to the one locally outside $D(s_r')$. Let $\tau(s_r,s_r')$ to denote the region of $\tau$ not in $\tau(s_r)\cup \tau(s_r')$. Observe that $\tau(s_r,s_r')$ is exactly the intersection of $C$ and the underlying disk $D(s_r)$ of $s_r$. Since the left and right sides of the pseudo-trapezoid $\tau$ are vertical segments, both the upper and lower sides of $\tau$ must be $x$-monotone. Hence, $z(s_r)$ must be the rightmost point of $\tau(s_r)$. Similarly, the left endpoint of $s_r'$ is the leftmost point of $\tau(s_r')$. As $s_r$ is to the left of $s_r'$, all points of $\tau(s_r')$ are to the right of $z(s_r)$.

\begin{figure}[t]
\begin{minipage}[t]{\textwidth}
\begin{center}
\includegraphics[height=1.4in]{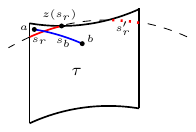}
\caption{\footnotesize Illustrating the case where $s_r$ and $s_b$ intersect exactly once while $s_r'$ does not intersect $s_b$.}
\label{fig:partial30}
\end{center}
\end{minipage}
\vspace{-0.15in}
\end{figure}

We first prove the first statement of the observation about type (3.1.2.1) intersections.

\begin{itemize}
  \item If $s_r$ and $s_b$ intersect exactly once while $s_r'$ does not intersect $s_b$, then one endpoint of $s_b$ (denoted by $a$) must be in $\tau(s_r)$ and the other one (denoted by $b$) must be in $\tau(s_r,s_r')$; see Fig.~\ref{fig:partial30}. This implies that $b$ is in $D(s_r)$ while $a$ is outside $D(s_r)$. Hence, the center of $s_r$ must be in $lune(s_b)$. Further, since $a\in \tau(s_r)$ and $z(s_r)$ is the rightmost point of $\tau(s_r)$, $a$ must be to the left of $z(s_r)$. This proves the ``only if'' direction of the statement.
  \item We now prove the ``if'' direction of the statement. Suppose the center of $s_r$ is in $lune(s_b)$ and $a$ is to the left of $z(s_r)$, where $a$ is the endpoint of $s_b$ not in $D(s_r)$. Let $b$ be the other endpoint of $s_b$ than $a$. Then, $b$ is in $D(s_r)$. Since all points of $\tau(s_r')$ are to the right of $z(s_r)$ and $a$ is to the left of $z(s_r)$, $a$ cannot be in $\tau(s_r')$. Since $a$ is not in $D(s_r)$, we obtain that $a\not\in D(s_r)\cap C=\tau(s_r,s_r')$. Therefore, $a$ must be in $\tau(s_r)$. Since $b\in D(s_r)$, $b$ is in $D(s_r)\cap C=\tau(s_r,s_r')$. As $a$ and $b$ are two endpoints of $s_b$, we conclude that $s_b$ intersects $s_r$ exactly once.
\end{itemize}

We next prove the second statement of the observation about type (3.1.2.2) intersections.


\begin{itemize}
\item
If $s_r$ and $s_b$ intersect exactly once and $s_b$ also intersects $s_r'$, then as discussed before $s_b$ intersects $s_r'$ exactly once; see Fig.~\ref{fig:partial15}. This implies that $s_b$ has an endpoint in $\tau(s_r)$ and an endpoint in $\tau(s_r')$. Because both $\tau(s_r)$  and $\tau(s_r')$ are outside $D(s_r)$, both endpoints of $s_b$ are outside $D(s_r)$, and thus the center of $s_r$ must be in $lune'(s_b)$. Let $a$ be the endpoint of $s_b$ in $\tau(s_r)$ and $b$ be the other endpoint, which is in $\tau(s_r')$. As discussed above, $a$ must be to the left of $z(s_r)$ and $b$ must be to the right of $z(s_r)$. This proves the ``only if'' direction of the statement.

\item
We now prove the ``if'' direction. Suppose the center of $s_r$ lies in $lune'(s_b)$ and one endpoint of $s_b$ is to the left of $z(s_r)$ while the other one is to the right of $z(s_r)$. Then, both endpoints of $s_b$ are outside $D(s_r)$, implying that the two endpoints are in $\tau(s_r)\cup \tau(s_r')$. Further, since one endpoint of $s_b$ is to the left of $z(s_r)$ while the other one is right of $z(s_r)$ and also because all points of $\tau(s_r)$ are to the left of $z(s_r)$ and all points of $\tau(s_r')$ are to the right of $z(s_r)$, we can obtain that one endpoint of $s_b$ is in $\tau(s_r)$ while the other one is in $\tau'(s_r)$. This implies that $s_b$ intersects each of $s_r$ and $s_r'$ exactly once.
\end{itemize}

This proves the observation.
\end{proof}

\begin{figure}[t]
\begin{minipage}[t]{\textwidth}
\begin{center}
\includegraphics[height=3.6in]{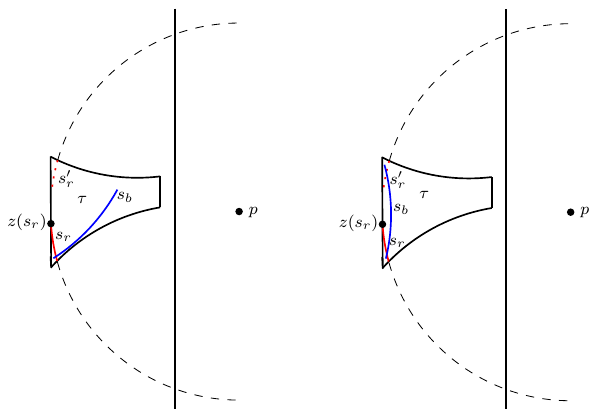}
\caption{\footnotesize Illustrating $s_r$, $s_r'$, and $z(s_r)$ where $C_r$ is right of $C$. The point $p$ is the center of $s_r$ and the dashed curve is the left semicircle of $\alpha(s_r)$. Left: $s_r$ and $s_b$ form a type (3.1.2.1) intersection (i.e., $s_r$ and $s_b$ intersect exactly once and $s_r'$ does not intersect $s_b$). Right: $s_r$ and $s_b$ form a type (3.1.2.2) intersection (i.e., $s_b$ and $s_r$ intersect exactly once and $s_b$ intersects $s_r'$).}
\label{fig:partial27}
\end{center}
\end{minipage}
\vspace{-0.15in}
\end{figure}


We now consider the other case where $C_r$ is to the right of $C$. In this case, $s_r$ and $s_r'$ are both in the left semicircle of $\alpha(s_r)$; e.g., see Fig.~\ref{fig:partial27}. Thus, both $s_r$ and $s_r'$ are $y$-monotone, and $s_r$ is either above or below $s_r'$. Let $z(s_r)$ denote the endpoint of $s_r$ closer to $s_r'$, e.g., $z(s_r)$ is upper (resp., lower) endpoint of $s_r$ if $s_r$ is below (resp., above) $s_r'$. We have the following observation, which is similar in spirit to Observation~\ref{obser:30}.

\begin{observation}\label{obser:40}
Let $s_r$ be a long-red partial arc and $s_b$ a short-blue arc.
Suppose $C_r$ is to the right of $C$.
\begin{enumerate}
  \item
   $s_r$ and $s_b$ form a type (3.1.2.1) intersection if and only if the following two conditions hold: (1) the center of $s_r$ lies in $lune(s_b)$; (2) the endpoint of $s_b$ not in the disk $D(s_r)$ is below (resp., above) $z(s_r)$ if $s_r$ is below (resp., above) $s_r'$.
   See Fig.~\ref{fig:partial27} (left).
   \item
   $s_r$ and $s_b$ form a type (3.1.2.2) intersection if and only if the following two conditions hold: (1) the center of $s_r$ lies in $lune'(s_b)$; (2) one endpoint of $s_b$ is above $z(s_r)$ while the other endpoint is below $z(s_r)$.
   See Fig.~\ref{fig:partial27} (right).
\end{enumerate}
\end{observation}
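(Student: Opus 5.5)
The plan is to mirror the proof of Observation~\ref{obser:30}, with the roles of $x$ and $y$ exchanged. By symmetry it suffices to treat the case where $s_r$ is below $s_r'$, so $z(s_r)$ is the upper endpoint of $s_r$. The case where $s_r$ is above $s_r'$ is the mirror image under reflection in a horizontal line.

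First I set up the same decomposition of $\tau$ as before. Since $s_r$ is long, both of its endpoints lie on $\partial\tau$, so $s_r$ splits $\tau$ into two parts. Let $\tau(s_r)$ be the part locally outside $D(s_r)$, and define $\tau(s_r')$ analogously. Let $\tau(s_r,s_r')=\tau\cap D(s_r)$ be the rest.

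The key geometric claim replaces ``$z(s_r)$ is the rightmost point of $\tau(s_r)$'' by: every point of $\tau(s_r)$ lies (weakly) below $z(s_r)$, and every point of $\tau(s_r')$ lies above $z(s_r)$.

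To justify this, note that $C_r$ lies to the right of $C$ and the center $p$ of $s_r$ is in $C_r$. Hence all of $\alpha(s_r)\cap C$ lies on the left semicircle of $\alpha(s_r)$, and $s_r$ and $s_r'$ are $y$-monotone. The portion of $\tau$ outside $D(s_r)$ consists of points to the left of that semicircle. The disk chord structure implies this portion splits by the horizontal line through $z(s_r)$: the complement of $D(s_r)$ within $C$, cut along the arc $\alpha(s_r)\cap C$ between $s_r$ and $s_r'$, has the lower piece below $z(s_r)$ and the upper piece above the lower endpoint of $s_r'$. Concretely, $z(s_r)$ must lie on a boundary side of $\tau$ separating the two pieces, namely the left side (a vertical segment) or the upper/lower arc sides. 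Since the underlying circle between $s_r$ and $s_r'$ leaves $\tau$ at $z(s_r)$, the region outside $D(s_r)$ in $\tau$ near $s_r$ is cut off from the region near $s_r'$.

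I expect this to be the main obstacle. In Observation~\ref{obser:30} the left/right sides of $\tau$ are vertical, which directly gives $x$-extremality. Here the relevant sides may be the unit arcs forming the top/bottom of $\tau$, so I must verify that $\tau(s_r)$ is contained in the half-plane $y\le y(z(s_r))$. I would argue this from the fact that the points of $\tau(s_r)$ are to the left of the $y$-monotone arc $s_r$, whose $y$-range ends at $z(s_r)$. I would also need that $\tau$ cannot bulge above $z(s_r)$ while staying outside $D(s_r)$ without meeting $s_r'$, which uses the radii being $1$ and the side length $1/\sqrt2$. In the worst case I would check this with a short case analysis on which side of $\tau$ contains $z(s_r)$.

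With the claim in hand, both statements follow verbatim from the proof of Observation~\ref{obser:30}.

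For (3.1.2.1), an intersection of this type means one endpoint $a$ of $s_b$ is in $\tau(s_r)$ and the other endpoint $b$ is in $\tau(s_r,s_r')$. So $p\in lune(s_b)$, and $a$ is below $z(s_r)$. Conversely, suppose $p\in lune(s_b)$ and the endpoint $a\notin D(s_r)$ is below $z(s_r)$. Then $a\notin\tau(s_r')$, so $a\in\tau(s_r)$, while $b\in\tau(s_r,s_r')$. This gives exactly one crossing with $s_r$ and none with $s_r'$.

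For (3.1.2.2), one endpoint of $s_b$ lies in each of $\tau(s_r)$ and $\tau(s_r')$. Hence both endpoints are outside $D(s_r)$, which gives $p\in lune'(s_b)$, and they lie on opposite sides of $z(s_r)$. The converse uses the same separation claim.
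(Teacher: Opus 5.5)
\textbf{Gap: the separation claim for $\tau(s_r)$.} Your outline is the paper's proof. The paper also takes $s_r$ below $s_r'$, reuses $\tau(s_r)$, $\tau(s_r')$, $\tau(s_r,s_r')$, and asserts that all of $\tau(s_r)$ lies below $z(s_r)$ and all of $\tau(s_r')$ above it. It then repeats Observation~\ref{obser:30}, and its only support for the separation is that $z(s_r)$ lies on the left side of $\tau$.

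You correctly single out this separation as the crux, but your justification does not establish it, and under the stated hypothesis the claim is false. Being to the left of the $y$-monotone arc $s_r$ says nothing about points of $\tau(s_r)$ higher than $z(s_r)$. Radius $1$ and side length $1/\sqrt2$ also do not keep $\tau$ from bulging above $z(s_r)$ outside $D(s_r)$. Once $z(s_r)$ is on the top side of $\tau$ (a case you allow and the paper overlooks), that top side can climb again to the left of $z(s_r)$.

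\textbf{Counterexample.} Take $p=(0,0)$, $C=[-0.71,-0.003]\times[0.4,1.107]$, and let $C_r\ni p$ be the grid cell diagonally below-right of $C$. Let $\tau$ have $x$-range $[-0.45,-0.01]$ and bottom side on the bottom edge of $C$. Put its top side on the lower arc of the unit circle $\beta$ centred at $q=2\cos 10^\circ(\cos 101^\circ,\sin 101^\circ)\approx(-0.376,1.933)$, a blue centre two rows above $C$.
\begin{itemize}
\item $\alpha(s_r)$ meets $\beta$ at its points of angle $91^\circ$ and $111^\circ$ and lies above $\beta$ between them.
\item Hence $\alpha(s_r)\cap\tau$ consists of $s_r$ (angles $111^\circ$ to $\approx 116.7^\circ$, ending on the left side) and $s_r'$ (angles $\approx 90.6^\circ$ to $91^\circ$).
\item So $z(s_r)\approx(-0.358,0.9336)$ lies on the top side of $\tau$, not the left side.
\item The top-left corner $(-0.45,\approx 0.9362)$ of $\tau$ lies in $\tau(s_r)$ and above $z(s_r)$.
\end{itemize}
Now take the short blue arc from $a=(-0.44,0.935)$ to $b=(-0.3,0.9)$, with centre $\approx(-0.128,1.885)$. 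It stays in $\tau$, meets $s_r$ once, and misses $s_r'$. Yet its endpoint $a\notin D(s_r)$ is above $z(s_r)$, so the \emph{only if} direction of part~1 fails. An arc running just below $\beta$ from $a$ into $\tau(s_r')$ breaks part~2 in the same way.

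\textbf{What is needed.} Here $C_r$ is also below $C$, and Observation~\ref{obser:30} classifies these pairs correctly. So the missing ingredient is to use this observation only when $C_r$ lies in the same row as $C$, sending diagonal neighbours to Observation~\ref{obser:30}. Then every point of $\alpha(s_r)\cap C$ is within vertical distance $1/\sqrt2$ of $p$, so $\alpha(s_r)$ is steep there. The separation then appears recoverable, but only through a genuine case analysis on whether $z(s_r)$ lies on the left, top or bottom side of $\tau$. It is not always the left side, even in the same-row case.
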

\begin{proof}
The proof is similar to Observation~\ref{obser:30} and we only point out the differences here. Without loss of generality, we assume that $s_r$ is below $s_r'$, and thus $z(s_r)$ is the upper endpoint of $s_r$. Since $C_r$ is to the right of $C$, $z(s_r)$ must be on the left side of $\tau$. We define the three regions $\tau(s_r)$, $\tau(s_r')$, and $\tau(s_r,s_r')$ in the same way as in the proof of Observation~\ref{obser:30}. In this case, all points of $\tau(s_r)$ are below $z(s_r)$ while all points of $\tau(s_r')$ are above $z(s_r)$. With this property, using argument similar to Observation~\ref{obser:30}, the observation can be proved.
\end{proof}

Based on the above two observations, our algorithms for counting type (3.1.2.1) and type (3.1.2.2) intersections are given in the following two lemmas, respectively.

\begin{lemma}
The number of type (3.1.2.1) intersections can be computed in $O(m_{\tau}^2/\log m_{\tau}+n_{\tau}\log^2 m_{\tau})$ time.
\end{lemma}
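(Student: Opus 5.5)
We rely on Observations~\ref{obser:30} and~\ref{obser:40} and extend the hierarchical-cutting approach of Lemma~\ref{lem:70} by one extra level of filtering. We treat the case where $C_r$ is below $C$ and $s_r$ lies to the left of $s_r'$; the other three cases are symmetric, with left/right replaced by above/below in Observation~\ref{obser:40}. In this case, a long red partial arc $s_r$ and a short blue arc $s_b$ form a type (3.1.2.1) pair exactly when two conditions hold. First, the center $c(s_r)$ lies in $lune(s_b)$. Second, the endpoint of $s_b$ lying outside $D(s_r)$ is to the left of $z(s_r)$. Since $c(s_r)\in lune(s_b)$, exactly one endpoint of $s_b$ is outside $D(s_r)$. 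Which endpoint it is depends on which of the two lunes contains $c(s_r)$: if $c(s_r)\in D_1\setminus D_2$, then the outside endpoint is the one centering $D_2$, and symmetrically. So we split $lune(s_b)$ into its two halves $D_1\setminus D_2$ and $D_2\setminus D_1$ and attach to each half a \emph{key}: the $x$-coordinate of the endpoint of $s_b$ that is outside the disk when the center lies in that half. The problem then becomes: for each long red partial arc $s_r$, count the half-lunes that contain $c(s_r)$ and whose key is less than $x(z(s_r))$. This is a point-enclosure count combined with a one-dimensional dominance condition.

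Let $r=m/\log m$ with $m=m_\tau$, and build the hierarchical $(1/r)$-cutting on the $O(m)$ boundary arcs $H$ of the half-lunes clipped to $C_r$, using Theorem~\ref{theo:cutting}, in $O(mr)$ time. For each cell $\sigma$ we again form the set $L(\sigma)$ of half-lunes that contain $\sigma$ but not its parent; as in Lemma~\ref{lem:70}, these sets have total size $O(mr)$. Instead of storing only $m_\sigma=|L(\sigma)|$, we store the keys of $L(\sigma)$ in sorted order. A query for $s_r$ walks down the $k=O(\log r)$ levels along the cells containing $c(s_r)$. At each level it binary-searches $x(z(s_r))$ in the sorted key list, which costs $O(\log m)$. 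At the leaf cell it checks the $O(\log m)$ arcs of $H_\sigma$ directly. One query therefore costs $O(\log^2 m)$, and the $n_\tau$ queries together cost $O(n_\tau\log^2 m)$.

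The main obstacle is sorting the lists $L(\sigma)$ without paying an extra logarithmic factor in the preprocessing, which would push it to $O(mr\log m)$. We first sort all $O(m)$ keys globally once, in $O(m\log m)$ time, and assign each half-lune its rank. When we generate $L(\sigma)$ from $H_{\sigma'}$, we then need to sort integers from a universe of size $O(m)$. We can do this for all cells at the same time in linear total time: perform a single bucket pass over the ranks and append each half-lune to the lists of the cells it belongs to, in rank order. This keeps the preprocessing at $O(mr)=O(m^2/\log m)$. Adding the query cost gives the bound $O(m_\tau^2/\log m_\tau+n_\tau\log^2 m_\tau)$ claimed in the lemma.

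Two further details need checking. First, the correctness of splitting each lune into two halves with separate keys must hold, and it follows directly from the characterization above. Second, the cutting must be restricted to $C_r$: only centers of red arcs are queried, and by Lemma~\ref{lem:60} these centers all lie in $C_r$.
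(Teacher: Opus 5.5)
Your proposal is correct and follows essentially the paper's proof. It uses the same hierarchical cutting on the lune boundaries, canonical sets $L(\sigma)$ storing the $x$-coordinates of the ``outside'' endpoints in sorted order (your half-lune keys are exactly the paper's points $p(s_b,\sigma)$), a binary search per level, and a brute-force check at the leaf. The only real difference is how the sorting cost is absorbed. The paper simply takes $r=m/\log^2 m$ and comparison-sorts in $O(mr\log m)=O(m^2/\log m)$ time; the leaf lists then have $O(\log^2 m)$ arcs, which still fits the $O(\log^2 m)$ per-query budget. So your global-rank bucketing step is valid, but the ``main obstacle'' it addresses can be avoided just by choosing $r$ smaller.
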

\begin{proof}
We assume that $C_r$ is below $C$ and the other case where $C_r$ is right of $C$ can be handled similarly. Our algorithm will be based on Observation~\ref{obser:30}(1).
The algorithm is similar to Lemma~\ref{lem:70} but with an additional level data structure. In what follows, we follow the notation in the proof of Lemma~\ref{lem:70} and only point out the differences.
For notational convenience, we let $n=n_{\tau}$ and $m=m_{\tau}$ in this proof.

We follow the same algorithm as in Lemma~\ref{lem:70} but set $r=m/\log^2m$. Recall that for each cell $\sigma\in \Xi_i$, $i=1,2,\ldots,k$, we have defined a subset $L(\sigma)\subseteq L$ consisting of the lunes of $L$ that contain $\sigma$ but do not contain the parent cell of $\sigma$ in $\Xi_{i-1}$.
The total size of all these subsets is $O(mr)$. For each subset $L(\sigma)$, for each $lune(s_b)\in L(\sigma)$, since $lune(s_b)$ contains $\sigma$, all unit disks centered in $\sigma$ contain exactly the same endpoint of $s_b$ and let $p(s_b,\sigma)$ denote the other endpoint; we use $P(\sigma)$ to denote set of all these endpoints $p(s_b,\sigma)$ for all $lune(s_b)\in L(\sigma)$. Note that $|P(\sigma)|=|L(\sigma)|$. We explicitly maintain $P(\sigma)$. Computing these subsets $P(\sigma)$ for all cells can be done in $O(mr)$ time by using the sets $L(\sigma)$. We further sort each set $P(\sigma)$ by their $x$-coordinates. As the total size of $P(\sigma)$ for all cells $\sigma$ is $O(mr)$, the sorting for $P(\sigma)$ of all cells $\sigma$ can be done in $O(mr\log m)$ time.


For each long red arc $s_r$, we first check whether it is a partial arc (this can be done in $O(1)$ time). If not, we ignore it. If yes, we proceed as follows. Without loss of generality, we assume that $s_r$ is to the left of $s_r'$ and thus $z(s_r)$ is the right endpoint of $s_r$. Let $q$ be the center of $s_r$.
We locate the cell $\sigma_i$ of $\Xi_i$ containing $q$ for each $1\leq i\leq k$. For each $\sigma_i$, according to Observation~\ref{obser:30}(1), we perform binary search on the sorted list of $P(\sigma)$ to find the number of points of $P(\sigma)$ to the left of $z(s_r)$, and let $m_{\sigma_i}$ denote this number. We add all these values $m_{\sigma_i}$ to a total count $m_{s_r}$ (initially, $m_{s_r}=0$). Finally, for the cell $\sigma\in \Xi_k$, for each arc $s'\in H_{\sigma}$, we do the following. Note that $s'$ is an arc of $lune(s_b)$ for a short blue arc $s_b$. We check whether $s_r$ and $s_b$ form a type (3.1.2.1) intersection, which can be done in $O(1)$ time; if yes, we increase $m_{s_r}$ by one.
The value $m_{s_r}$ thus obtained is equal to the number of type (3.1.2.1) intersections involving $s_r$. The time for computing $m_{s_r}$ is $O(\log^2 m)$. We apply the above algorithm for each long red arc, after which the number of type (3.1.2.1) intersections is computed. The total time of the algorithm is $O(mr\log m+n\log^2 m)$, which is $O(m^2/\log m+n\log^2 m)$ as $r=m/\log^2 m$.
\end{proof}

\begin{lemma}\label{lem:90}
The number of type (3.1.2.2) intersections can be computed in $O(m_{\tau}^2/\log m_{\tau}+n_{\tau}\log^2 m_{\tau})$ time.
\end{lemma}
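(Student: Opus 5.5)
We mirror the proof of the previous lemma (type (3.1.2.1)), now based on Observation~\ref{obser:30}(2) (and Observation~\ref{obser:40}(2) when $C_r$ is to the right of $C$, with $y$-coordinates in place of $x$-coordinates). We assume $C_r$ is below $C$ and write $n=n_\tau$, $m=m_\tau$. For each short blue arc $s_b$ the condition ``the center of $s_r$ lies in $lune'(s_b)$'' is a point-in-region condition. The boundary of $lune'(s_b)\cap C_r$ consists of $O(1)$ unit circular arcs on the two unit circles centered at the endpoints of $s_b$, the same circles that bound $lune(s_b)$. So we take the same arc set $H$ of size $O(m)$ as in Lemma~\ref{lem:70} and build a hierarchical $(1/r)$-cutting on $H$ with $r=m/\log^2 m$ via Theorem~\ref{theo:cutting}, in $O(mr)$ time.

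For each cell $\sigma\in\Xi_i$, $1\le i\le k$, let $L'(\sigma)$ be the set of regions $lune'(s_b)$ that contain $\sigma$ but not its parent $\sigma'$. As in Lemma~\ref{lem:70}, $L'(\sigma)$ is obtained by scanning $H_{\sigma'}$, and the total size of all these sets is $O(mr)$. The second condition of Observation~\ref{obser:30}(2) depends only on $s_b$ and the $x$-coordinate of $z(s_r)$: $z(s_r)$ must lie strictly between the $x$-coordinates of the two endpoints of $s_b$. So for each $s_b$ with $lune'(s_b)\in L'(\sigma)$ we store the interval $I(s_b)=[x_{\min},x_{\max}]$ spanned by the $x$-coordinates of its endpoints. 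This condition does not depend on which side of $s_r'$ the arc $s_r$ lies, so no left/right split is needed. We then need, per cell, the number of stored intervals containing a query value. This is a one-dimensional stabbing count: it equals the number of left endpoints smaller than the query minus the number of right endpoints smaller than it. We therefore keep two sorted lists per cell, sorted in $O(mr\log m)$ total time.

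For a query, take each long red partial arc $s_r$ with center $q$. We locate the cells $\sigma_1,\dots,\sigma_k$ containing $q$ and, at each $\sigma_i$, answer the stabbing count for $x(z(s_r))$ with two binary searches. At the leaf cell $\sigma_k$, for each arc of $H_{\sigma_k}$ we test directly in $O(1)$ time whether the corresponding $s_b$ and $s_r$ satisfy both conditions. Since a single $s_b$ may contribute several arcs to $H_{\sigma_k}$, we mark each $s_b$ to avoid double counting. Correctness follows from Observation~\ref{obser:30}(2), because the cells $\sigma_i$ partition the lunes $lune'(s_b)$ containing $q$ into those handled at an internal level and those whose boundary crosses $\sigma_k$. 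Each query costs $O(k\log m+|H_{\sigma_k}|)=O(\log^2 m)$, so the total time is $O(mr\log m+n\log^2 m)=O(m^2/\log m+n\log^2 m)$.

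The main obstacle is geometric rather than algorithmic: checking that $lune'(s_b)\cap C_r$ is indeed bounded by $O(1)$ arcs of $H$, so that containment of a cell is decided by $H_{\sigma'}$. It also needs to be confirmed that each type (3.1.2.2) intersection is counted exactly once, once per pair $(s_r,s_b)$, and not twice via $s_r'$. Duplication is avoided because we only count for the long red arc $s_r$ itself, and the pair $(s_r',s_b)$ is counted separately as its own (3.1.2.2) intersection on $s_r'$, which is a genuinely different intersection point.
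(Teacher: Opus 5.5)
Your proposal is correct and follows the paper's proof essentially step for step. Both use a hierarchical cutting with $r=m/\log^2 m$ on the arcs bounding $lune'(s_b)\cap C_r$, store per-cell stabbing counts of the endpoint $x$-intervals queried at $z(s_r)$, brute-force the arcs of $H_{\sigma_k}$ at the leaf, and give the same $O(mr\log m+n\log^2 m)$ analysis. The only deviations are minor and harmless: you answer the stabbing count with two sorted endpoint lists instead of the paper's interval tree, and you spell out the deduplication and the correct (3.1.2.2) leaf test, which the paper glosses over or, in the leaf test, mislabels as (3.1.2.1).
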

\begin{proof}
We assume that $C_r$ is below $C$ and the other case where $C_r$ is right of $C$ can be handled similarly. Our algorithm will be based on Observation~\ref{obser:30}(2). For notational convenience, let $n=n_{\tau}$ and $m=m_{\tau}$ in this proof.

We still follow the algorithm in the Lemma~\ref{lem:70} but set $r=m/\log^2 m$ and use $lune'(s_b)$ to replace $lune(s_b)$. Note that the intersection of $C_r$ and $lune'(s_r)$ is still bounded by at most two circular arcs (other than the boundary of $C_r$). Let $H$ be the set of all these circular arcs for all short blue arcs. Let $L$ denote the set of $lune'(s_b)$ of all short blue arcs $s_b$.
Next, we build a hierarchical cutting for $H$ in the same way as in the proof of Lemma~\ref{lem:70}.
We define $L(\sigma)$ in the same way. Let $B(\sigma)$ denote the set of short blue arcs $s_b$ with $lune'(s_b)\in L(\sigma)$. Using $L(\sigma)$, we can obtain $B(\sigma)$ for all cells $\sigma\in \Xi_i$, $i=1,2,\ldots,k$, in $O(mr)$ time. For each arc $s_b\in B(\sigma)$, it defines an interval $[x_1(s_b),x_2(s_b)]$, where $x_1(s_b)$ and $x_2(s_b)$ are the $x$-coordinates of the left and right endpoints of $s_b$, respectively; let $I(\sigma)$ denote the set of all these intervals for all arcs of $B(\sigma)$. We build an interval tree (Section~10.1~\cite{ref:deBergCo08}) on the intervals of $I(\sigma)$ in $O(|I(\sigma)|\log |I(\sigma)|)$ time so that given a query $x$-coordinate $x$, we find in $O(\log |I(\sigma)|)$ time the number of intervals of $I(\sigma)$ containing $x$. As $|I(\sigma)|=O(m)$ and the total size of the sets $I(\sigma)$'s of all cells in the cutting is $O(mr)$, the total time for building the interval trees is $O(mr\log m)$.

For each long red arc $s_r$, we first check whether it is a partial arc (this can be done in $O(1)$ time). If not, we ignore it. If yes, we proceed as follows. Without loss of generality, we assume that $s_r$ is to the left of $s_r'$ and thus $z(s_r)$ is the right endpoint of $s_r$. Let $q$ be the center of $s_r$.
We locate the cell $\sigma_i$ of $\Xi_i$ containing $q$ for each $i=1,2,\ldots,k$. For each $\sigma_i$, we query the interval tree of $I(\sigma_i)$ to find number of intervals of $I(\sigma_i)$ containing $z(s_r)$; let $m_{\sigma_i}$ denote this number. We add all these values $m_{\sigma_i}$ to a total count $m_{s_r}$ (initially, $m_{s_r}=0$). Finally, for the cell $\sigma\in \Xi_k$, for each arc $s'\in H_{\sigma}$, we do the following. Let $s_b$ be the short-blue arc such that $s'$ is an arc of $lune'(s_b)$.  We check whether $s_r$ and $s_b$ form a type (3.1.2.1) intersection, which can be done in $O(1)$ time; if yes, we increases $m_{s_r}$ by one.
The value $m_{s_r}$ thus obtained is equal to the number of type (3.1.2.2) intersections involving $s_r$. The time for computing $m_{s_r}$ is $O(\log^2 m)$. We apply the above algorithm for each long red arc, after which the number of type (3.1.2.2) intersections is computed. The total time of the algorithm is $O(mr\log m+n\log^2 m)$, which is $O(m^2/\log m+n\log^2 m)$ as $r=m/\log^2 m$.
\end{proof}

In summary, the number of type (3.1.2) intersections can be computed in $O(m_{\tau}^2/\log m_{\tau}+n_{\tau}\log^2 m_{\tau})$ time.

\subsubsection{Counting type (3.2) intersections}

We now compute the number of type (3.2) intersections. The runtime of the algorithm is $O(m_{\tau}^2\log^2m_{\tau}+n_{\tau}\log^4 m_{\tau})$.
Our goal is to compute the number of pairs of a long red arc $s_r$ and a short blue arc $s_b$ such that $s_r$ and $s_b$ intersect twice.

\begin{figure}[t]
\begin{minipage}[t]{\textwidth}
\begin{center}
\includegraphics[height=1.4in]{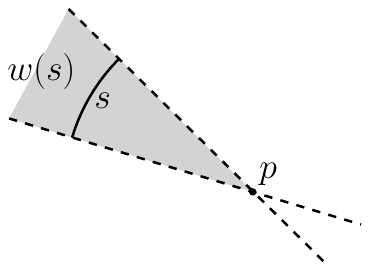}
\caption{\footnotesize Illustrating the wedge $w(s)$ (the grey area): $p$ is the center of $s$.}
\label{fig:wedge}
\end{center}
\end{minipage}
\vspace{-0.15in}
\end{figure}


Let $s$ be either a blue short arc or a long red arc. Since the radius of $s$ is $1$, $s$ is contained in $\tau$, which is in $C$, and $C$ is a square cell of side-length $1/\sqrt{2}$, $s$ does not span more than a semicircle of the underlying circle $\alpha(s)$ of $s$.
The two lines through the center of $s$ and its two endpoints partition the plane into four wedges, one of which contains $s$ completely (we use $w(s)$ to denote the wedge; e.g., see Fig.~\ref{fig:wedge}). The following observation is critical to our algorithm in Lemma~\ref{lem:80}.

\begin{figure}[t]
\begin{minipage}[t]{\textwidth}
\begin{center}
\includegraphics[height=3.0in]{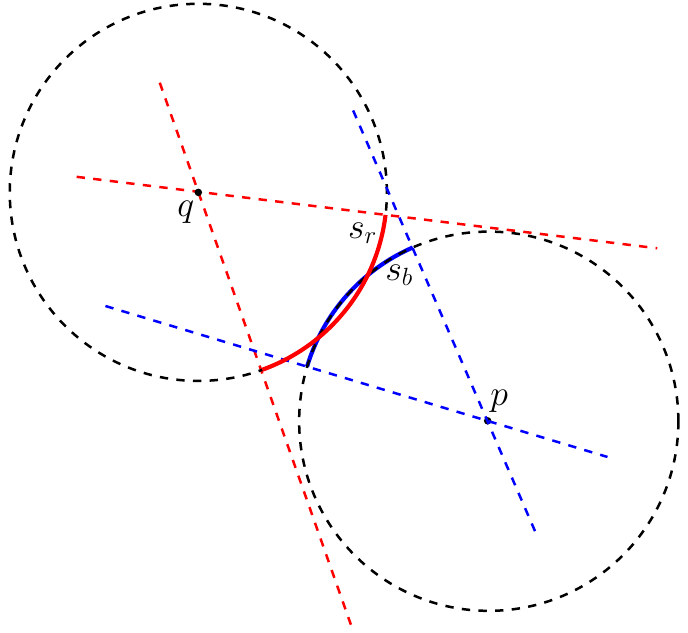}
\caption{\footnotesize Illustrating Observation~\ref{obser:50}: $p$ is the center of $s_b$ and $q$ is the center of $s_r$.
}
\label{fig:obser30}
\end{center}
\end{minipage}
\vspace{-0.15in}
\end{figure}

\begin{observation}\label{obser:50}
Consider a short blue arc $s_b$ and a long red arc $s_r$. If $s_b$ and $s_r$ intersect twice, then the  following four conditions must hold (e.g., see Fig.~\ref{fig:obser30}): (1) the center of $s_r$ is in the wedge $w(s_b)$; (2) the center of $s_b$ is in the wedge $w(s_r)$; (3) the underlying disk $D(s_r)$ of $s_r$ does not contain either endpoint of $s_b$; (4) the underlying circles $\alpha(s_b)$ and $\alpha(s_r)$ intersect.
On the other hand, if the above four conditions all hold, then exactly one of the following two cases must happen: (1) $s_b$ and $s_r$ intersect twice; (2) $s_r$ is a partial arc and $s_b$ intersects both $s_r$ and its coupled arc $s_r'$.
\end{observation}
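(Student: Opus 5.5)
I would prove the two directions separately, working throughout with the two unit circles $\alpha(s_b)$ and $\alpha(s_r)$ and their (at most two) common points. The basic tool is elementary geometry of two unit circles with centers $p$ (of $s_b$) and $q$ (of $s_r$). If they meet at two points $x_1,x_2$, then $p,x_1,q,x_2$ form a rhombus with side length $1$. The chord $\overline{x_1x_2}$ is the perpendicular bisector of $\overline{pq}$. On $\alpha(s_b)$ the short arc between $x_1,x_2$ is the one facing $q$, i.e., the arc $\alpha(s_b)\cap D(s_r)$, and symmetrically for $\alpha(s_r)$. Because each of $s_b$, $s_r$ spans less than a semicircle (as noted before the observation), an arc containing both $x_1$ and $x_2$ must contain exactly the short arc between them.

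For the ``only if'' direction, suppose $s_b$ and $s_r$ meet twice, necessarily at $x_1,x_2$. Then (4) is immediate. Since $s_b$ spans less than a semicircle and contains $x_1,x_2$, it contains the short arc $\alpha(s_b)\cap D(s_r)$ between them. Hence the wedge $w(s_b)$, which is bounded by the rays from $p$ through the endpoints of $s_b$, contains the rays $px_1$ and $px_2$. The direction $pq$ bisects the angle $x_1px_2$, so $q\in w(s_b)$; this is (1), and (2) follows symmetrically. For (3), the two intersections of $s_b$ with $\partial D(s_r)$ are exactly where $s_b$ enters and leaves $D(s_r)$. Because $s_b$ contains the whole arc $\alpha(s_b)\cap D(s_r)$ in its relative interior (or with $x_i$ as endpoints, where general position or closedness conventions apply), both endpoints of $s_b$ lie outside $D(s_r)$.

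For the converse, assume (1)--(4). By (4) we have $x_1,x_2$. By (2), $p\in w(s_r)$, and I would argue that the ray from $q$ toward $p$ lies in the angular span of $s_r$. Together with $s_r$ being a long arc with endpoints on $\partial\tau$, this gives that $\alpha(s_r)\cap D(s_b)$ is related to $s_r$ in one of two ways. Either it is fully contained in $s_r$, or $s_r$ is a partial arc and the short arc $x_1x_2$ of $\alpha(s_r)$ leaves $\tau$ and re-enters as the coupled arc $s_r'$. On the blue side, (1) and (3) imply that $s_b$ covers the short arc $x_1x_2$ of $\alpha(s_b)$. Indeed, the direction $pq$ lies in the span of $s_b$, and both endpoints of $s_b$ lie outside $D(s_r)$, so $s_b$ cannot end inside the short arc; hence $s_b\ni x_1,x_2$. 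Now both $x_i$ lie in $\tau$, since $s_b\subset\tau$. If both $x_i$ are on $s_r$, we are in case (1). Otherwise some $x_i$ lies on $\alpha(s_r)\cap\tau$ but not on $s_r$, hence on the coupled arc $s_r'$, which is case (2).

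Exclusivity of the two cases, and the fact that in case (2) exactly one $x_i$ is on $s_r$ and one on $s_r'$, would need the structure from the proof of Observation~\ref{obser:30}. There the decomposition of $\alpha(s_r)\cap C$ into $s_r$ and $s_r'$ lies along a semicircle, with the gap between them outside $\tau$. The main obstacle I expect is this converse step. One has to show rigorously that under (2) the intersection points cannot both miss $s_r$. One must also rule out the configuration where one $x_i$ is outside $\tau$ altogether, which cannot happen since $x_i\in s_b\subset\tau$. I would handle this by a case split on whether $C_r$ is below or to the right of $C$, mirroring Observations~\ref{obser:30} and~\ref{obser:40}.
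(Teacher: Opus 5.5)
Your forward direction is sound. The rhombus/bisector argument is a self-contained derivation of the ``only if'' half of Observation~\ref{obser:60}, which the paper simply cites from Agarwal, Pellegrini, and Sharir, applied once to $s_b$ and once to $s_r$. Likewise, your blue-side step in the converse is the ``if'' half of that observation, exactly as the paper uses it: conditions (1), (3), (4) force $x_1,x_2\in s_b\subset\tau$.

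The gap is the step you flag yourself. After ``otherwise some $x_i$ lies on $s_r'$, which is case (2)'', you have not shown that $s_b$ also meets $s_r$. That is, you never exclude the configuration $x_1,x_2\in s_r'$ with neither point on $s_r$. Your dichotomy (``either $\alpha(s_r)\cap D(s_b)\subseteq s_r$, or the short arc leaves $\tau$ and re-enters as $s_r'$'') silently assumes this. The case split on the position of $C_r$ that you propose to settle it is never carried out, and it is not needed.

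The missing idea is short, and you already have the tools. The coupled arc $s_r'$ is also an arc of $\alpha(s_r)$ inside $C$, so it spans at most a semicircle. If it contained both $x_1$ and $x_2$, your own forward argument shows it contains the short arc $\alpha(s_r)\cap D(s_b)$. Then the direction from $q$ to $p$ lies in its span, i.e., $p\in w(s_r')$. But $s_r$ and $s_r'$ are disjoint arcs of the same circle, so $w(s_r)$ and $w(s_r')$ meet only at the apex $q$, contradicting condition (2). Equivalently, the point of $\alpha(s_r)$ in direction $qp$ lies on $s_r$ by (2), and it would also lie on $s_r'$. This is exactly how the paper closes the argument.

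Two smaller remarks. First, say explicitly that the ``otherwise'' branch forces $s_r$ to be partial, since for a full arc $\alpha(s_r)\cap\tau=s_r$. Second, exclusivity needs nothing from Observation~\ref{obser:30}. Two distinct circles meet in at most two points, so if $x_1,x_2\in s_r$, then $s_b$ cannot meet the disjoint arc $s_r'$. For the same reason, in case (2) exactly one $x_i$ lies on each of $s_r$ and $s_r'$ automatically.
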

\begin{proof}
First of all, we have the following observation from \cite[Lemmas~3.2]{ref:AgarwalCo93}.

\begin{figure}[t]
\begin{minipage}[t]{\textwidth}
\begin{center}
\includegraphics[height=3.0in]{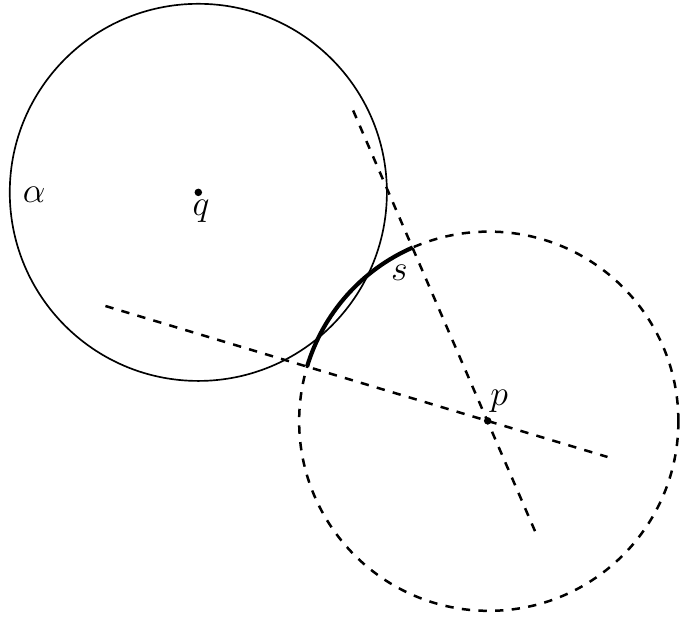}
\caption{\footnotesize Illustrating Observation~\ref{obser:60}: $p$ is the center of $\alpha$ and $q$ is the center of $s$.}
\label{fig:claim}
\end{center}
\end{minipage}
\vspace{-0.15in}
\end{figure}

\begin{observation}\label{obser:60}{\em (Agarwal, Pellegrini, and Sharir~\cite{ref:AgarwalCo93})}
Suppose $s$ is a unit arc that does not span more than a semicircle and $\alpha$ is a unit circle.
Then $s$ intersects $\alpha$ twice if and only if the following three conditions all hold (e.g. see Fig.~\ref{fig:claim}): (1) the center of $\alpha$ is in the wedge $w(s)$; (2) $\alpha$ does not contain either endpoint of $s$; (3) $\alpha$ intersects the underlying circle of $s$.
\end{observation}



We now prove the observation using Observation~\ref{obser:60}. First of all, if $s_b$ and $s_r$ intersect twice, then according to Observation~\ref{obser:60}, all four conditions in the observation must hold.
Suppose that all four conditions hold for $s_b$ and $s_r$. We argue below that either $s_b$ and $s_r$ intersect twice, or $s_r$ is a partial arc and $s_b$ intersects both $s_r$ and $s_r'$.
According to Observation~\ref{obser:60}, conditions (1-3) guarantee that $s_b$ intersects $\alpha(s_r)$ twice, say, at two points $a$ and $a'$. Since $s_b\in C$, both $a$ and $a'$ are in $C$.

\begin{itemize}
\item
If $s_r$ is not a partial arc, then $s_r=\alpha(s_r)\cap C$. Hence, both $a$ and $a'$ are on $s_r$, implying that $s_b$ and $s_r$ intersect twice (and $s_b$ cannot intersect $s_r'$).

\item
If $s_r$ is a partial arc, then if both $a$ and $a'$ are on $s_r$, $s_b$ and $s_r$ intersect twice (and $s_b$ cannot intersect $s_r'$). Now assume that not both $a$ and $a'$ are on $s_r$. Then, either only one of $a$ and $a'$ is on $s_r$, or neither one is on $s_r$.

\begin{itemize}
  \item If only one of $a$ and $a'$ is on $s_r$, say $a\in s_r$, then $a'$ is on $s_r'$ since $a'\in C$ and $s_r\cup s_r'=\alpha(s_r)\cap C$. Hence, $s_b$ intersects both $s_r$ and $s_r'$.
  \item If neither $a$ nor $a'$ is on $s_r$, then both $a$ and $a'$ are on $s_r'$. According to Observation~\ref{obser:60}, the center of $s_b$ must be in the wedge $w(s_r')$. Note that $w(s_r)$ and $w(s_r')$ are disjoint. Hence, the center of $s_b$ cannot be in $w(s_r)$. But this contradicts with the four condition. As such, the case where neither $a$ nor $a'$ is on $s_r$ cannot happen.
\end{itemize}
\end{itemize}

This proves the observation.
\end{proof}

Notice that in the second case of Observation~\ref{obser:50} $s_r$ and $s_b$ form a type (3.1.2.2) intersection. Therefore, if $k_1$ is the number of pairs of a long red arc $s_r$ and a short blue arc $s_b$ that intersect twice, then $k_1=k_2-k_3$, where $k_2$ is the number of pairs $(s_r,s_b)$ that satisfy the four conditions in Observation~\ref{obser:50} and $k_3$ is the number of type (3.1.2.2) intersections. We have already computed $k_3$ in Lemma~\ref{lem:90}. It remains to compute $k_2$, which is done in the following lemma.

\begin{lemma}\label{lem:80}
The number of type (3.2) intersections can be computed in $O(m_{\tau}^2\log^2 m_{\tau}+n_{\tau}\log^4 m_{\tau})$ time.
\end{lemma}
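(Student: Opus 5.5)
By the discussion right after Observation~\ref{obser:50}, it suffices to compute $k_2$, the number of pairs $(s_r,s_b)$ of a long red arc and a short blue arc satisfying the four conditions of Observation~\ref{obser:50}. Then $k_1=k_2-k_3$, where $k_3$ is already available from Lemma~\ref{lem:90}. I would compute $k_2$ with a four-level multi-level structure built on the $m=m_\tau$ short blue arcs and queried by each of the at most $n=n_\tau$ long red arcs. Each level handles one condition, and every level except the last costs an extra logarithmic factor.

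Let $q_r$ and $p_b$ be the centers of $s_r$ and $s_b$, and let $a_b,a_b'$ be the endpoints of $s_b$. The four conditions translate as follows.
\begin{enumerate}
\item $q_r\in w(s_b)$: the query point $q_r$ must lie in the wedge of $s_b$, which is the intersection of two halfplanes bounded by lines through $p_b$.
\item $p_b\in w(s_r)$: the data point $p_b$ must lie in the query wedge $w(s_r)$, which is the intersection of two halfplanes.
\item $q_r$ must lie outside both unit disks centered at $a_b$ and $a_b'$. This is the set $lune'(s_b)$ used in Lemma~\ref{lem:90}.
\item $|q_r p_b|\le 2$: the query point must lie in the disk of radius $2$ around $p_b$.
\end{enumerate}
Conditions (1), (3), (4) ask which data regions contain the query point $q_r$. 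Condition (2) asks which data points lie in a query range. The plan is to nest cutting-based levels in the style of Lemma~\ref{lem:70}. For a ``region contains point'' level, I build a hierarchical $(1/r)$-cutting on the $O(m)$ boundary curves of the regions restricted to the relevant cell ($C_b$ or $C_r$). These curves are lines, unit circular arcs, and radius-$2$ arcs, each pair crossing $O(1)$ times, so Theorem~\ref{theo:cutting} applies up to constants. Each cutting cell $\sigma$ gets the canonical subset $L(\sigma)$ of regions that fully contain $\sigma$ but not its parent. A query walks down the $O(\log m)$ cells containing $q_r$, recurses into the next-level structure on each $L(\sigma)$, and brute-forces the $O(m/r)$ crossing curves at the leaf. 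For condition (2) I build a halfplane range-counting structure on the points $p_b$, and apply it twice for the two halfplanes of the query wedge. Because all centers lie in one small cell, a simple structure should be enough here.

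Order and parameters: Level 1 handles (3), cut on the $lune'$ boundaries. Level 2 handles (1) and (4) on the canonical subsets; each of these is again a point-in-region test and can be split into the two halfplanes plus the disk. The last levels handle (2) on the points $p_b$. With $r$ chosen near $m$ (e.g.\ $r=m$ or $m/\log m$), building the cuttings costs $O(mr)=O(m^2)$ per level. The canonical subsets total $O(m^2)$ at each level, and every further level multiplies the preprocessing by a $\log$ factor. I expect the preprocessing to come out at $O(m^2\log^2 m)$ by making the innermost levels cheap (sorted lists or interval trees, as in Lemma~\ref{lem:90}). A query visits $O(\log m)$ cells per level, so four levels give $O(\log^4 m)$ per red arc, for a total of $O(m^2\log^2 m+n\log^4 m)$.

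The main obstacle will be condition (2), the one in the reverse direction, and keeping the total log-count down to match the bound. Condition (2) cannot use the point-location cutting scheme directly. I would dualize it: $p_b\in w(s_r)$ is equivalent to $p_b$ lying on the correct side of the two lines through $q_r$ and the endpoints of $s_r$. Since $s_r$ spans at most a semicircle, this is a double-halfplane condition. By duality each line becomes a point and each $p_b$ becomes a line, turning the condition into point-location among $O(m)$ dual lines. That fits the same cutting framework, at $O(m^2)$ preprocessing and $O(\log m)$ query per level. To stay within $O(m^2\log^2m)$ preprocessing, I would merge the two halfplane tests of the wedge into a single level wherever possible, and use sorting or interval-tree tricks at the last level, as in Lemma~\ref{lem:90}.
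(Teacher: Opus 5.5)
Your reduction to $k_2$ (with $k_1=k_2-k_3$ and $k_3$ from Lemma~\ref{lem:90}) is correct. So is your translation of the four conditions into $q_r\in w(s_b)$, $p_b\in w(s_r)$, $q_r\in lune'(s_b)$ and $|q_rp_b|\le 2$. However, the multi-level structure you describe does not achieve the stated bound, and the steps where you say you ``expect'' the logarithms to work out are exactly where it fails.

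Condition (2) is a genuinely two-dimensional wedge-counting query on the data points $p_b$. Dualizing one bounding halfplane turns it into locating one point among the dual lines. The wedge, however, needs two dual query points at once, and that is not answered by a single point location. With your construction it costs two levels, i.e., $O(N^2\log N)$ preprocessing and $O(\log^2 N)$ query on $N$ points.

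The sorted-list and interval-tree shortcuts in the type (3.1.2) lemmas worked only because the innermost condition there was a one-dimensional comparison of $x$-coordinates against $z(s_r)$. Nothing of that sort is available for (2), or for the disk condition (4).

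Count your own levels: separate cuttings for (3), (1), (4), then two dual halfplane levels for (2). A query for one red arc costs $O(\log^5 m)$. The preprocessing picks up a $\log$ factor at each of the four outer levels, giving about $m^2\log^4 m$. Even merging (1) with (4) still leaves preprocessing at $m^2\log^3 m$. So you do not reach $O(m^2\log^2 m+n\log^4 m)$.

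The missing idea is that conditions of the form ``a region determined by $s_b$ contains $q_r$'' should share one cutting level rather than each getting its own. The paper does this in three steps:
\begin{enumerate}
\item It merges (1) and (3) into the single interesting region $A'(s_b)=w(s_b)\cap lune'(s_b)\cap C_r$, whose boundary inside $C_r$ has at most two segments and two unit arcs. It builds one hierarchical cutting with $r=m$ on these $O(m)$ general-arcs to get the canonical pairs $(B_\sigma,R_\sigma)$.
\item Inside each pair, it handles (4) with a second cutting, with $r'=m_\sigma/\log m$, on the radius-$2$ circles.
\item At the innermost level, it handles (2) with Matou\v{s}ek's wedge range-counting structure, which has $O(N^2/\log^{1-\delta}N)$ preprocessing and $O(\log^2 N)$ query.
\end{enumerate}
This costs $O(m_\sigma^2\log m+n_\sigma\log^3 m)$ per canonical pair. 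Summing over the $O(\log m)$ levels of the outer cutting, where $m_\sigma=O(m/\rho^{i-1})$ on $O(\rho^{2i})$ cells, gives $O(m^2\log^2 m+n\log^4 m)$. Condition (4) is also of this region form, so it could be folded into the first-level region as well.
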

\begin{proof}
We first apply Lemma~\ref{lem:90} and compute $k_3$ in $O(m_{\tau}^2/\log m_{\tau}+n_{\tau}\log^2 m_{\tau})$ time. As discussed before, it suffices to compute $k_2$, i.e., the number of pairs $(s_r,s_b)$ that satisfy the four conditions in Observation~\ref{obser:50}. We show below that $k_2$ can be computed in $O(m_{\tau}^2\log^2 m_{\tau}+n_{\tau}\log^4 m_{\tau})$ time.
For notational convenience, we let $n=n_{\tau}$ and $m=m_{\tau}$ in this proof.

Let $B$ denote the set of short blue arcs  and let $R$ be the set of long red arcs of $S(\tau)$.
For each short blue arc $s_b$, define $A(s_b)$ as the
common intersection of the wedge $w(s_b)$ and $lune'(s_b)$.
Note that the underlying disk of a long-red arc $s_r$ does not contain either endpoint of a short-blue arc $s_b$ if and only if the center of $s_r$ is in $lune'(s_b)$. Hence, the first two conditions of Observation~\ref{obser:50} are satisfied if and only if the center of $s_r$ is in $A(s_b)$.

Recall that centers of all red arcs are in the square cell $C_r$.
Let $A'(s_b)=A(s_b)\cap C_r$. It is not difficult to see that other
than those on $\partial C_r$,
$\partial A'(s_b)$ consists of at most two line segments and at most two
circular arcs; we use {\em general-arcs} to refer to these arcs and segments (a segment can be considered
as a circular arc of infinite radius).
Hence, $\partial A'(s_b)$  has at most four general-arcs. We call
$A'(s_b)$ the {\em interesting region} of $s_b$. Let $I$ denote the
set of interesting regions of all short blue segments of $B$. Let $E$
denote the set of general-arcs of all regions of $I$. Hence, $|E|=O(m)$.

We compute a hierarchical $(1/r)$-cutting $\Xi_0,\Xi_1,\ldots,\Xi_k$
on the general-arcs of $E$, with $r=m$ and  a constant ratio $\rho$ as described
in Section~\ref{sec:pre}. This can be done in $O(mr)$ time by Theorem~\ref{theo:cutting}.
For each cell $\sigma\in \Xi_i$, $1\leq i\leq k$, we define a canonical pair $(B_{\sigma},R_{\sigma})$, with $B_{\sigma}\subseteq B$ and $R_{\sigma}\subset R$, so that centers of all arcs of $R_{\sigma}$ lie in the interesting region of each short blue arc of $B_{\sigma}$. Specifically, $R_{\sigma}$ consists of all long red arcs whose centers are located in the cell $\sigma$ and $B_{\sigma}$ consists of the arcs of $B$ whose interesting regions contain $\sigma$ but do not contain the parent cell of $\sigma$ in $\Xi_{i-1}$. The canonical pairs of all cells of all cuttings $\Xi_i$, $1\leq i\leq k$, can be computed in $O(mr + n\log m)$ time, as follows.

For the center $p$ of each long red arc $s_r$, we locate the cell $\sigma$ of $\Xi_i$ containing $p$, for all $1\leq i\leq k$, and add $s_r$ to $R_{\sigma}$, which can be done in $O(\log r)$ time. As such, computing the canonical sets $R_{\sigma}$ for all cells $\sigma$ takes $O(n\log m)$ time in total. For computing $B_{\sigma}$, we can use an algorithm similar to the one in Lemma~\ref{lem:70} (i.e., replacing $lune(s_b)$ by the interesting region $A'(s_b)$).
More specifically, for each child cell $\sigma\in \Xi_i$ of a cell $\sigma'$ in $\Xi_{i-1}$, for each general-arc of $E_{\sigma'}$ (i.e., the set of general-arcs of $E$ crossing $\sigma'$), suppose it is a general-arc of $A'(s_b)$ of a blue arc $s_b$. We check whether $A'(s_b)$ contains $\sigma$; if yes, we add $s_b$ to $B_{\sigma}$. In this way, computing all canonical subsets $B_{\sigma}$ takes time proportional to the total size of the subsets $E_{\sigma'}$ for all cells $\sigma'$ of all cuttings $\Xi_i$, $i=0,1,\ldots,k-1$, which is $O(mr)$.

By the definition of canonical subset pairs $(B_{\sigma},R_{\sigma})$, to compute $k_2$, it suffices to compute the number of pairs of arcs $(s_b,s_r)$ such that $s_b\in B_{\sigma}$, $s_r\in R_{\sigma}$, $\alpha(s_b)$ intersects $\alpha(s_r)$, and the center of $s_b$ is in $w(s_r)$, for all cells $\sigma\in \Xi_i$, $i=1,2,\ldots,k$. In the following, we describe an algorithm of $O(m_{\sigma}^2/\log m+n_{\sigma}\log m)$ time for one such cell $\sigma\in \Xi_i$, where $m_{\sigma}=|B_{\sigma}|$ and $n_{\sigma}=|R_{\sigma}|$.

Recall that all centers of red arcs are in the square cell $C_r$. Define $\calD$ as the set of disks of radius $2$ centered at the centers of all blue arcs of $B_{\sigma}$. Let $P$ be the set of all centers of the arcs of $R_{\sigma}$; if $p\in P$ is the center of an arc $s_r\in R_{\sigma}$, we use $s(p)$ to refer to $s_r$. It is not difficult to see that the problem is equivalent to computing the number of pairs $(p,D)$, $p\in P$, $D\in \calD$, such that $p$ is contained in $D$ and the center of $D$ is in $w(s(p))$. To solve the problem, we use an algorithm similar to that for Lemma~\ref{lem:70}, with an additional level of range searching data structure for handling the constraint that the center of $D$ is in $w(s(p))$. We briefly discuss it below. Note that $m_{\sigma}=|\calD|$ and $n_{\sigma}=|P|$.

For each disk $D\in \calD$, the intersection of $\partial D$ and $C_r$ consists of at most two circular arcs.
Let $H$ denote the set of circular arcs $\partial D\cap C_r$ for all disks $D\in \calD$.
We build a hierarchical $(1/r')$-cutting $\Xi'_0,\Xi'_1,\ldots,\Xi'_{k'}$ for $H$, with $r'= m_{\sigma}/\log m$ and a constant ratio $\rho_1$, in $O(m_{\sigma}r')$ time~\cite{ref:ChazelleCu93}. For each cell $\beta\in \Xi'_i$, for all $i=1,2,\ldots,k$, define $\calD(\beta)$ as the set of disks of $\calD$ that contain $\beta$ but do not contain the parent cell of $\beta$ in $\Xi'_{i-1}$. We can compute $\calD(\beta)$ explicitly for all cells of in $O(m_{\sigma}r')$ time. We further build a range searching data structure on the centers of $\calD(\beta)$ for answering the following queries: Given a query wedge, compute the number of disk centers in the wedge. For this, we use a data structure of Matou\v{s}ek's~\cite{ref:MatousekRa93} (i.e., Lemma~6.1~\cite{ref:MatousekRa93} by setting $p=2$), which takes $O(|\calD(\beta)|^2/\log^{1-\delta} |\calD(\beta)|)$ preprocessing time for any $\delta>0$ and can answer each query in $O(\log^2 |\calD(\beta)|)$ time.
In addition, for each cell $\beta\in \Xi'_{k'}$, we explicitly store the subset $H_{\beta}$ of arcs of $H$ crossing $\beta$.

All above can be done in $O(m_{\sigma}r')$ time except that for building the wedge range searching data structures. We now analyze the total time for constructing these wedge range searching data structures. For each cell $\beta\in \Xi_i$, it holds that $|\calD(\beta)|=O(m_{\sigma}/\rho_1^{i-1})$. Since $\Xi_i$ has $O(\rho_1^{2i})$ cells, the total time for constructing the data structures is big-O of $\sum_{i=1}^{k'}\rho_1^2\cdot m_{\sigma}^{2i}/\rho_1^{2(i-1)}/\log^{1-\delta} (m_{\sigma}/\rho_1^{i-1})$, which is bounded by $O(m_{\sigma}^2\log m)$ as $\rho_1$ is a constant.

For each point $p\in P$, we locate the cell $\beta_i$ of $\Xi'_i$ containing $p$ for each $i=1,2,\ldots,k'$;
using the wedge range searching data structure on the centers of $\calD(\beta_i)$, we find the number $m_{\beta_i}$ of centers of disks of $\calD(\beta_i)$ contained in the wedge $w(s(p))$. We add all these values $m_{\beta_i}$ to a total count $m_p$ (initially, $m_p=0$). Finally, for the cell $\beta_k\in \Xi'_k$, for each arc $s\in H_{\beta_k}$, we check whether the underlying disk of $s$ contains $p$ and the center of $s$ is in $w(s)$; if yes, we increases $m_p$ by one.
The value $m_p$ thus obtained is equal to the number of disks $D$ of $\calD$ that contain $p$ and whose centers are in $w(s(p))$. The time for computing $m_p$ is $O(\log r'\log^2 m)$ as $|H_{\beta}|=O(\log m)$ for any cell $\beta\in \Xi'_k$.
The sum $\sum_{p\in P}{m_p}$ is equal to the number of pairs $(p,D)$, $p\in P$, $D\in \calD$, such that $p$ is contained in $D$ and the center of $D$ is in $w(s(p))$. As $n_{\sigma}=|P|$, the total time of the algorithm is $O(m_{\sigma}r'+m_{\sigma}^2\log m+n_{\sigma}\log r'\log^2 m))$, which is $O(m_{\sigma}^2\log m+n_{\sigma}\log^3 m)$ for $r'= m_{\sigma}/\log m$.

The above shows that processing each canonical subset pair $(B_{\sigma},R_{\sigma})$ for a cell $\sigma\in \Xi_i$ takes $O(m_{\sigma}^2\log m+n_{\sigma}\log^3 m)$ time, with $m_{\sigma}=|B_{\sigma}|$ and $n_{\sigma}=|R_{\sigma}|$. Processing all cells $\sigma\in \Xi_i$ for all $i=1,2,\ldots,k$ will compute the number $k_2$. We now analyze the total time.
For each $1\leq i\leq k$, $\sum_{\sigma\in \Xi_i}|R_{\sigma}|=n$ and $m_{\sigma}\leq |E_{\sigma'}|$, where $\sigma'$ is the parent cell of $\sigma$ in $\Xi_{i-1}$. Since $|E_{\sigma'}|=O(m/\rho^{i-1})$, we have $m_{\sigma}=O(m/\rho^{i-1})$. Because $\Xi_i$ has $O(\rho^{2i})$ cells, the total time for processing all cells in $\Xi_i$ is on the order of $\rho^{2i}\cdot m_{\sigma}^2\log m+n\log^3 m\leq \rho^{2i}\cdot m^2/\rho^{2i-2}\cdot \log m+n\log^3 m$, which is bounded by $O(m^2\log m+n\log^3 m)$ as $\rho$ is a constant. Since $k=O(\log r)$ and $r=m$, the total time for processing all cells $\sigma\in \Xi_i$ for all $i=1,2,\ldots,k$ is $O(m^2\log^2 m+n\log^4 m)$.
\end{proof}


\subsubsection{Counting type (3) intersections: a final step}

The above shows that the number of type (3) intersections can be
computed in $O(m_{\tau}^2\log^2m_{\tau}+n_{\tau}\log^4 m_{\tau})$ time.
Using the result, we finally show that an alternative algorithm can compute the number of type (3) intersections  in $O(n_{\tau}\log^4
m_{\tau}+m_{\tau}\sqrt{n_{\tau}}\log^3 m_{\tau})$ time.

If $m_{\tau}<\sqrt{n_{\tau}}\log m_{\tau}$, then
$m_{\tau}^2\log^2m_{\tau}\leq n_{\tau}\log^4 m_{\tau}$ and
thus the runtime of the above algorithm is bounded by $O(n_{\tau}\log^4 m_{\tau})$.
Otherwise, we partition the short blue arcs into $t$ groups of size
$m_{\tau}/t$, with $t=\frac{m_{\tau}}{\sqrt{n_{\tau}}\log m_{\tau}}$.
Then, for each group, we apply the above algorithm on the group with
all long-red arcs. The total time of the algorithm is $O(t\cdot
(m_{\tau}^2/t^2\log^2(m_{\tau}/t) + n_{\tau}\log^4(m_{\tau}/t)))$, which is
$O(m_{\tau}\sqrt{n_{\tau}}\log^3 m_{\tau})$. As such, the number of type
(3) intersections can be computed in $O(n_{\tau}\log^4
m_{\tau}+m_{\tau}\sqrt{n_{\tau}}\log^3 m_{\tau})$ time.

\subsection{Counting type (1) intersections}
\label{sec:type1}

Before giving our algorithm for counting the type (1) intersections, we analyze the total time for counting all intersections of other types in our original problem. Recall that initially we build a hierarchical $(1/r)$-cutting $\Xi_0,\Xi_1,\ldots,\Xi_k$ on the arcs of $S'$ with $n=|S'|$. For each cell $\tau$ of $\Xi_k$, we have the pseudo-trapezoid-restricted subproblem of counting the intersections of arcs of $S(\tau)$ inside $\tau$, with $n_{\tau}=|S(\tau)|$ and $m_{\tau}$ as the number of short arcs of $S(\tau)$.
By the definition of short and long arcs, we have $m_{\tau}\leq n_{\tau}=O(n/r)$ and $\sum_{\tau\in \Xi_k}m_{\tau}=O(n)$. By setting $r=n^{1/3}/(\log n)^{2/3}$, we obtain the following lemma.


\begin{lemma}\label{lem:105}
By setting $r=n^{1/3}/(\log n)^{2/3}$, the total time for computing the number of type (2), (3), and (4) intersections for all cells $\tau\in \Xi_k$ is bounded by $O(n^{4/3}(\log n)^{10/3})$.
\end{lemma}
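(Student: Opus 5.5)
Since every cell $\tau\in\Xi_k$ is handled independently, I sum the per-cell bounds using only two facts: $m_\tau\le n_\tau=O(n/r)$, and $\sum_{\tau\in\Xi_k} m_\tau=O(n)$. The second holds because each arc of $S$ has two endpoints, and each endpoint lies in the interior of at most one cell. The cutting has $O(r^2)$ cells and, by Theorem~\ref{theo:cutting}, is built in $O(nr)=O(n^{4/3}/(\log n)^{2/3})$ time. I bound the three types separately.

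\textbf{Type (2).} I use the algorithm of Agarwal, Pellegrini, and Sharir, which costs $O(m_\tau^{4/3+\epsilon})$ per cell. Since $m_\tau=O(n/r)$, we have $m_\tau^{1/3+\epsilon}=O((n/r)^{1/3+\epsilon})$, so
\[
\sum_\tau m_\tau^{4/3+\epsilon}\le \max_\tau m_\tau^{1/3+\epsilon}\cdot\sum_\tau m_\tau = O\big(n\,(n/r)^{1/3+\epsilon}\big).
\]
With $r=n^{1/3}/(\log n)^{2/3}$ we get $n/r=n^{2/3}(\log n)^{2/3}$. The bound therefore becomes $O(n^{11/9+O(\epsilon)}\,\mathrm{polylog}\, n)$, which is $o(n^{4/3})$ once $\epsilon$ is a small enough constant. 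The point is that we apply the superlinear algorithm only to pieces of size $n^{2/3}$, which is far below the threshold where it would cost $n^{4/3}$.

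\textbf{Types (3) and (4).} By the final step of Section~\ref{sec:type3}, each cell costs $O(n_\tau\log^4 m_\tau+m_\tau\sqrt{n_\tau}\log^3 m_\tau)$, and type (4) is symmetric.

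For the first term, $\sum_\tau n_\tau=O(r^2\cdot n/r)=O(nr)$, so it contributes
\[
O(nr\log^4 n)=O\big(n^{4/3}(\log n)^{4-2/3}\big)=O\big(n^{4/3}(\log n)^{10/3}\big).
\]

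For the second term, using $\sqrt{n_\tau}=O(\sqrt{n/r})$ and $\sum_\tau m_\tau=O(n)$,
\[
\sum_\tau m_\tau\sqrt{n_\tau}\log^3 n = O\big(n\sqrt{n/r}\,\log^3 n\big).
\]
Here $\sqrt{n/r}=n^{1/3}(\log n)^{1/3}$, so this is $O(n^{4/3}(\log n)^{10/3})$ as well.

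Both terms give exactly the claimed bound, and this is why $r$ was chosen as stated: it equalizes $nr\log^4 n$ with $n^{3/2}r^{-1/2}\log^3 n$, i.e.\ $r^{3/2}=n^{1/2}/\log n$.

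The main obstacle is the type (2) term, where an $\epsilon$ in the exponent must not spoil the polylogarithmic bound. This is handled by the observation above that the pieces have size $n^{2/3}$, so the overhead is polynomially smaller than $n^{4/3}$.

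Summing the three contributions together with the cutting construction time gives $O(n^{4/3}(\log n)^{10/3})$.
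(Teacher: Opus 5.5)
Your proposal is correct and follows the paper's proof: the same per-cell bounds (Agarwal--Pellegrini--Sharir for type (2), the $O(n_\tau\log^4 m_\tau+m_\tau\sqrt{n_\tau}\log^3 m_\tau)$ bound for types (3) and (4)) are summed using $m_\tau\le n_\tau=O(n/r)$, $\sum_\tau m_\tau=O(n)$, and the $O(r^2)$ cell count. Your type (2) estimate $\sum_\tau m_\tau^{4/3+\epsilon}\le \max_\tau m_\tau^{1/3+\epsilon}\cdot\sum_\tau m_\tau$ yields exactly the paper's $O(n^{4/3+\epsilon}/r^{1/3+\epsilon})$, and it is a slightly more direct justification than the paper's extremal argument that ``$r$ of the $m_\tau$'s equal $n/r$.''
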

\begin{proof}
To solve the  pseudo-trapezoid-restricted subproblem on each cell $\tau\in \Xi_k$, we have shown above that counting type (2) intersections can be done in $O(m_{\tau}^{4/3+\epsilon})$ time and counting types (3) and (4) intersections takes $O(n_{\tau}\log^4m_{\tau}+m_{\tau}\sqrt{n_{\tau}}\log^3 m_{\tau})$ time.
Therefore, the total time for counting type (2) intersections for all cells $\tau\in\Xi_k$ is $O(\sum_{\tau\in \Xi_k} m_{\tau}^{4/3+\epsilon})$. Since $\Xi_k$ has $O(r^2)$ cells, $\sum_{\tau\in \Xi_k}m_{\tau}=O(n)$, and $m_{\tau}=O(n/r)$, $\sum_{\tau\in \Xi_k} m_{\tau}^{4/3+\epsilon}$ achieves the maximum when $r$ $m_{\tau}$'s have value $n/r$, i.e., $\sum_{\tau\in \Xi_k} m_{\tau}^{4/3+\epsilon}=O(r\cdot (n/r)^{4/3+\epsilon})$, which is $O(n^{4/3+\epsilon}/r^{1/3+\epsilon})$. Hence, the total time for counting type (2) intersections is $O(n^{4/3+\epsilon}/r^{1/3+\epsilon})$.

The total time for counting types (3) and (4) intersections is $\sum_{\tau\in \Xi_k} (n_{\tau}\log^4m_{\tau}+m_{\tau}\sqrt{n_{\tau}}\log^3 m_{\tau})$.
Since $\Xi_k$ has $O(r^2)$ cells and $n_{\tau}=O(n/r)$, we have $\sum_{\tau\in \Xi_k} n_{\tau}\log^4m_{\tau}=O(r^2\cdot n/r\cdot \log^4 n)$, which is $O(nr\log^4 n)$. Since $\sum_{\tau\in \Xi_k}m_{\tau}=O(n)$, we can derive $\sum_{\tau\in \Xi_k} m_{\tau}\sqrt{n_{\tau}}\log^3 m_{\tau}=O(n\sqrt{\frac{n}{r}}\log^3 n)$.

As such, the total time for computing the number of intersections of types (2-4) is $O(n^{4/3+\epsilon}/r^{1/3+\epsilon}+nr\log^4 n+n\sqrt{\frac{n}{r}}\log^3 n)$. Setting $r=n^{1/3}/(\log n)^{2/3}$ makes the time bounded by $O(n^{4/3}(\log n)^{10/3})$.
\end{proof}

We now discuss how to compute the number of type (1) intersections, i.e., intersections between long red arcs and long blue arcs. If we consider each individual cell of the cutting $\Xi_k$ as above, it would be difficult to achieve a satisfying time bound because each long arc may intersect many cells.
Instead, we will count these intersections using the cuttings $\Xi_0,\Xi_1,\ldots,\Xi_k$ in the hierarchy. For each cell $\sigma\in \Xi_i$ for any $0\leq i\leq k$, we define long and short arcs of $\sigma$ in the same way as before (i.e., suppose an arc $s\in S$ intersects the interior of $\sigma$; then
$s$ is a {\em long arc of $\sigma$} if $s$ does not have either endpoint in the interior of $\sigma$ and is a {\em short arc of $\sigma$} otherwise).
For a long arc $s$ of $\sigma$, we say that $s$ is a {\em long-long arc} if $i>0$ and $s$ is also a long arc of $\sigma'$, where $\sigma'$ is the parent cell of $\sigma$ in $\Xi_{i-1}$, and $s$ is a {\em short-long arc} otherwise.
A critical observation is that for a long arc $s$ of any cell $\sigma\in \Xi_k$, $s$ must be a short-long arc of exactly one ancestor cell of $\sigma$ since any arc must be a short arc of the only cell of $\Xi_0$, which is the entire square cell $C$.

Recall that a type (1) intersection refers to an intersection in a cell $\sigma\in \Xi_k$ between a long red arc of $\sigma$ and a long blue arc of $\sigma$.
Consider a type (1) intersection $q$ in a cell $\sigma\in \Xi_k$ between a long red arc $s_r$ and a long blue arc $s_b$. Observe that $\sigma$ has one and only one ancestor cell $\sigma'$ such that both $s_r$ and $s_b$ are long arcs of $\sigma'$ but at least one of $s_r$ and $s_b$ is a short-long arc of $\sigma'$ (also note that since $\sigma'$ is an ancestor cell of $\sigma$, $\sigma'$ contains $\sigma$ and thus contains $q$). This means that in order to count type (1) intersections, we can count, for all cells $\sigma\in \Xi_i$, $i=0,1,\ldots,k$, the type (1) intersections in $\sigma$ involving at least one short-long arc, and more precisely, we can count the following three types of intersections:
(1.1) intersections between long-long red arcs of $\sigma$ and short-long blue arcs of $\sigma$; (1.2) intersections between long-long blue arcs of $\sigma$ and short-long red arcs of $\sigma$; (1.3) intersections between short-long blue arcs of $\sigma$ and short-long red arcs of $\sigma$.

Consider a cell $\sigma\in \Xi_i$, for any $0\leq i\leq k$. Let $S_r$ be the set of all long red arcs of $\sigma$ and $S_b$ the set of all long blue arcs of $\sigma$. Let $S^1_r$ and $S^2_r$ be the subsets of long-long and short-long arcs of $\sigma$ in $S_r$, respectively. Let $S^1_b$ and $S^2_b$ be the subsets of long-long and short-long arcs of $\sigma$ in $S_b$, respectively. Let $n_{\sigma}=|S_r|+|S_b|$ and $m_{\sigma}=|S_r^2|+|S_b^2|$.
As such, the above three types of intersections are: (1.1) intersections between arcs of $S^1_r$ and $S^2_b$, (1.2) intersections between arcs of $S^2_r$ and $S^1_b$, and (1.3) intersections between arcs of $S^2_r$ and $S^2_b$.

To compute the number of type (1.1) intersections, one could apply the algorithm in Section~\ref{sec:type3}. However, the total time of the overall algorithm for all cells of $\Xi_i$, $i=0,1,\ldots,k$, cannot be bounded by $O(n^{4/3}\cdot\text{poly}(\log n))$ since now we need to consider the cells in all cuttings $\Xi_i$, $i=0,1,\ldots,k$, not just the last cutting $\Xi_k$ as before. Instead, we present a new algorithm in Lemma~\ref{lem:110}.
Type (1.2) intersections can be handled similarly as it is symmetric to type (1.1). For type (1.3), we actually also apply Lemma~\ref{lem:110} because the algorithm works for any two sets of long arcs of $\sigma$, and the running time is also bounded as stated in Lemma~\ref{lem:110} due to $m_{\sigma}\leq n_{\sigma}$.


\begin{lemma}\label{lem:110}
There exists an algorithm of $O(m_{\sigma}\log^4 n_{\sigma}+n_{\sigma}\log^5 n_{\sigma}+n_{\sigma}^{2/3}m_{\sigma}^{2/3}(\log n_{\sigma})^{13/3})$ time to compute the number of type (1.1) intersections, i.e., intersections between arcs of $S^1_r$ and arcs of $S^2_b$.
\end{lemma}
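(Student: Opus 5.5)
The plan is to reuse the two-level structure of Section~\ref{sec:type3}, with a twist dictated by the shapes of the sets involved. Both the $n_\sigma$ long-long red arcs of $S^1_r$ and the $m_\sigma$ short-long blue arcs of $S^2_b$ are long arcs of $\sigma$, so each has both endpoints on $\partial\sigma$. The set $S^1_r$ is large, while $S^2_b$ is small, and the target bound $n_\sigma^{2/3}m_\sigma^{2/3}$ is exactly the classical point/curve-incidence tradeoff between a set of size $n_\sigma$ and a set of size $m_\sigma$.

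First I would reduce each pair (red long arc $s_r$, blue long arc $s_b$) to a combinatorial condition.
\begin{itemize}
\item Since both arcs cross $\sigma$ from boundary to boundary, whether they cross once is determined by the cyclic order of their four endpoints along $\partial\sigma$. Exactly one intersection occurs precisely when the endpoints interleave.
\item Two intersections occur when they do not interleave but $\alpha(s_r)$ and $\alpha(s_b)$ cross inside $\sigma$ twice. Here I would apply Observation~\ref{obser:60} in the form of Observation~\ref{obser:50}: the center of $s_r$ lies in $w(s_b)$, the center of $s_b$ lies in $w(s_r)$, and the underlying circles intersect (centers within distance $2$). The coupled-arc ambiguity is handled via the ``closest endpoint'' trick of Observations~\ref{obser:30} and~\ref{obser:40}.
\end{itemize}
Parametrize $\partial\sigma$ (a pseudo-trapezoid with $O(1)$ sides) by arclength. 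Then the once-intersection count becomes counting interleaving pairs of intervals on a circle. Using the two endpoints of each blue arc as keys, a sort plus a 2D dominance-counting structure gives this count in $O((m_\sigma+n_\sigma)\log n_\sigma)$ time. The $m_\sigma\log^4 n_\sigma+n_\sigma\log^5 n_\sigma$ terms leave room for a few more levels of such filtering.

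For the twice-intersection count, I would mirror Lemma~\ref{lem:80} but with the roles balanced differently.
\begin{itemize}
\item Build a hierarchical $(1/t)$-cutting on the $O(m_\sigma)$ general-arcs bounding the interesting regions $A'(s_b)$ (wedge $w(s_b)$ intersected with the radius-$2$ disk around the center of $s_b$, clipped to $C_r$, plus the endpoint-order constraints).
\item Locate the $n_\sigma$ red centers in this cutting, which yields canonical pairs $(B_\beta,R_\beta)$. For each pair, the remaining conditions (center of $s_b$ in $w(s_r)$, and the ordering constraint with respect to $z(s_r)$) are range queries on the blue centers. I would answer them by dualizing: build a cutting on the red-side wedge boundaries and locate the blue centers, giving an $O(1)$-level multilevel structure.
\item At the bottom level, cells containing few blue objects are solved by brute force.
\end{itemize}
Choosing $t$ so that $m_\sigma t=n_\sigma\cdot(\text{points per cell})$, i.e.\ $t\approx n_\sigma^{2/3}/m_\sigma^{1/3}$ up to logs, balances cutting cost $O(m_\sigma t)$ against query cost. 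Each level contributes one or two $\log$ factors, which accounts for $(\log n_\sigma)^{13/3}$ after balancing. Whenever $m_\sigma^2<n_\sigma$ or $n_\sigma^2<m_\sigma$, one side is tiny and a direct single-level structure gives the linear-with-polylog terms instead.

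The main obstacle will be the bookkeeping of the partial-arc (coupled arc) cases inside a cell $\sigma$ of the hierarchy. I would have to verify that the analogues of Observations~\ref{obser:30} and~\ref{obser:50} hold when both arcs are long in $\sigma$, so that spurious counts of type ``$s_b$ hits $s_r$ and $s_r'$'' can be subtracted via an interleaving count as in Lemma~\ref{lem:90}. I would first try to show that for long arcs these corrections reduce to endpoint-order predicates on $\partial\sigma$, which fit into one extra level without changing the asymptotics.
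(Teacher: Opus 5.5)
Your outer skeleton matches the paper's: an interleaving test for single crossings, a hierarchical cutting on the $O(m_\sigma)$ boundaries of blue-side regions in $C_r$ with red centers located, a dual second level, and $t\approx(n_\sigma^2/m_\sigma)^{1/3}$. However, two load-bearing pieces are wrong or missing.

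\textbf{The double-crossing test.} The test you actually propose is center of $s_r$ in $w(s_b)$, center of $s_b$ in $w(s_r)$, centers within distance $2$, plus non-interleaving and the $z(s_r)$ correction. This is not sufficient, because it drops condition (3) of Observation~\ref{obser:50} ($D(s_r)$ contains no endpoint of $s_b$) and its symmetric counterpart. Here is a counterexample. Put the centers of $s_b$ and $s_r$ at $(0,0)$ and $(1.8,0)$. Let the arcs be the pieces $x=\sqrt{1-y^2}$ and $x=1.8-\sqrt{1-y^2}$, $|y|\le 0.2$, of the two boundary arcs of the lens $D(s_b)\cap D(s_r)$; they cross the cell $[0.7,1.1]\times[-0.2,0.2]$ side by side. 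All your conditions hold, the endpoints do not interleave, and both are full arcs of the cell, yet the arcs are disjoint. Unspecified ``endpoint-order constraints'' cannot be folded into $A'(s_b)$ either, since a cutting in $C_r$ only encodes conditions on the red \emph{center}.

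The paper never meets the coupled-arc obstacle you flag. By Lemma~4.1 of~\cite{ref:AgarwalCo93}, two arcs cross twice iff each crosses the other's underlying circle twice. Applying Observation~\ref{obser:60} in both directions therefore gives the exact five-condition test of Observation~\ref{obser:70}:
\begin{itemize}
\item conditions (1)--(2) say the red center lies in $A'(s_b)=w(s_b)\cap lune'(s_b)\cap C_r$;
\item conditions (3)--(4) say the blue center lies in the analogous $A'(s_r)\subseteq C_b$;
\item condition (5) is a point-in-radius-$2$-disk test.
\end{itemize}
No subtraction is needed. Folding the radius-$2$ disk into the first-level region, as you do, is fine; it is $lune'(s_b)$ that is missing.

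\textbf{The running time.} Your analysis is unsupported exactly where $n_\sigma^{2/3}m_\sigma^{2/3}$ is earned; write $n=n_\sigma$, $m=m_\sigma$.
\begin{itemize}
\item \emph{Bottom cells.} With $t\approx n^{2/3}/m^{1/3}$, a bottom cell is crossed by $\Theta(m/t)$ blue boundaries, which is not ``few''. Brute force there costs $\Theta(nm/t)=\Theta(n^{1/3}m^{4/3})$, which exceeds $n^{2/3}m^{2/3}$ whenever $m^2>n$. Balancing brute force against the cutting cost $mt$ only gives $m\sqrt n$.
\item \emph{What your balancing assumes.} The equation $mt=n\cdot(n/t^2)$ tacitly assumes a bottom-cell cost quadratic in its red centers and near-linear in the blue curves crossing it. The paper realizes this by splitting each bottom cell's red centers into basic subsets of size at most $n/r^2$. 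It then runs the algorithm of Lemma~\ref{lem:80} on each subset against $B'_\sigma$, for $O(n^2r^{-2}\log^2 n+mr\log^4 n)$ in total.
\item \emph{Canonical pairs.} The pairs $(B_\beta,R_\beta)$ cannot be finished with plain range queries. One partition-tree query per red arc can total $n\sqrt m$, which is at least $n^{1/6}\cdot n^{2/3}m^{2/3}$. A polylog-query structure such as Matou\v{s}ek's costs $\sum_\beta|B_\beta|^2$, which can be $\Theta(m^2)$ and is too much once $m^2>n$.
\item \emph{What the paper does instead.} Each pair must itself be solved in $\tilde O(n_\beta+m_\beta+n_\beta^{2/3}m_\beta^{2/3})$ time. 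This is Lemma~\ref{lem:120}: the symmetric cutting on the red regions in $C_b$, again with Lemma~\ref{lem:80} on basic subsets at the bottom, and Katz--Sharir for condition (5). Only then does H\"older's inequality apply, using $|B_\beta|=O(m/\rho^{i-1})$, $\sum_\beta|R_\beta|=n$, and $O(\rho^{2i})$ cells in $\Xi_i$. It gives $O(n^{2/3}m^{2/3}\log^{10/3}n)$ per level, and hence the factor $(\log n)^{13/3}$ over $O(\log r)$ levels.
\end{itemize}
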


Since the proof of Lemma~\ref{lem:110} is lengthy and technical, we devote Section~\ref{sec:lemproof} to it.
Using Lemma~\ref{lem:110}, the following lemma analyzes the total time of our algorithm for computing the number of type (1) intersections.

\begin{lemma}\label{lem:130}
The number of type (1) intersections can be computed in $O(n^{4/3}\log^{16/3} n)$ time.
\end{lemma}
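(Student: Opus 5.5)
We will bound the total cost of applying Lemma~\ref{lem:110} (three times per cell, once for each of types (1.1), (1.2), (1.3)) over all cells $\sigma\in\Xi_i$, $0\leq i\leq k$, of the hierarchical cutting built on $S'$ with $r=n^{1/3}/(\log n)^{2/3}$. Summing over all cells then counts every type (1) intersection exactly once, by the unique-ancestor argument given before Lemma~\ref{lem:110}. The argument therefore reduces to summing
\[
m_{\sigma}\log^4 n_{\sigma}+n_{\sigma}\log^5 n_{\sigma}+n_{\sigma}^{2/3}m_{\sigma}^{2/3}(\log n_{\sigma})^{13/3}
\]
over the hierarchy.

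The first step collects the basic size bounds. For $\sigma\in\Xi_i$ we have $n_{\sigma}\leq |S'_{\sigma}|=O(n/\rho^i)$, and $\Xi_i$ has $O(\rho^{2i})$ cells. Hence $\sum_{\sigma\in\Xi_i}n_{\sigma}=O(n\rho^i)$, and summing this geometric series over $i\leq k$ gives $O(nr)$. This yields the bound $O(nr\log^5 n)$ for the second term.

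The key counting fact concerns $m_{\sigma}$. A short-long arc $s$ of $\sigma$ is a long arc of $\sigma$ that is a short arc of the parent $\sigma'$, so $\sigma'$ contains an endpoint of $s$ in its interior. Each arc has two endpoints, and within a fixed level $\Xi_{i-1}$ each endpoint lies in the interior of at most one cell. Each cell has $O(1)$ children, so each arc is short-long in only $O(1)$ cells of $\Xi_i$. It follows that $\sum_{\sigma\in\Xi_i}m_{\sigma}=O(n)$ for each $i$, and that $m_{\sigma}=O(n/\rho^{i-1})$. The first term is then $O(kn\log^4 n)=O(n\log^5 n)$.

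The main obstacle is the third term. At level $i$, apply H\"older's inequality (exponents $3$ and $3/2$) using $n_{\sigma}=O(n/\rho^i)$ and $\sum_{\sigma\in\Xi_i} m_{\sigma}=O(n)$:
\[
\sum_{\sigma\in\Xi_i} n_{\sigma}^{2/3}m_{\sigma}^{2/3}\leq (n/\rho^i)^{2/3}\Bigl(\sum_{\sigma\in\Xi_i} m_{\sigma}^{2/3}\Bigr)\leq (n/\rho^i)^{2/3}\,|\Xi_i|^{1/3}\Bigl(\sum_{\sigma\in\Xi_i} m_{\sigma}\Bigr)^{2/3}=O\bigl((n/\rho^i)^{2/3}\rho^{2i/3}n^{2/3}\bigr)=O(n^{4/3}).
\]
This bound is independent of $i$. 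Summing over the $k=O(\log r)=O(\log n)$ levels and multiplying by $(\log n)^{13/3}$ gives $O(n^{4/3}\log^{16/3}n)$.

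Collecting the three terms, the total time is
\[
O\bigl(nr\log^5 n+n\log^5 n+n^{4/3}\log^{16/3}n\bigr).
\]
With $r=n^{1/3}/(\log n)^{2/3}$, the first term is $O(n^{4/3}\log^{13/3}n)$. Constructing the hierarchical cutting costs $O(nr)$ by Theorem~\ref{theo:cutting}, and classifying arcs as long-long or short-long costs time linear in $\sum|S'_\sigma|=O(nr)$. The overall bound is therefore $O(n^{4/3}\log^{16/3}n)$.
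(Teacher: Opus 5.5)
Your proposal is correct and follows the paper's proof essentially step by step. It uses the same per-level bounds $n_\sigma=O(n/\rho^i)$ and $|\Xi_i|=O(\rho^{2i})$, the same argument that each arc is short-long in only $O(1)$ cells of a level (so $\sum_{\sigma\in\Xi_i}m_\sigma=O(n)$), and the same H\"older step bounding each level's third term by $O(n^{4/3}(\log n)^{13/3})$ over $O(\log n)$ levels; your explicit accounting of the $O(nr)$ cutting-construction and classification cost is a small addition the paper leaves implicit.
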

\begin{proof}
As discussed before, the number of type (1.1) intersections can be computed once the subproblems stated in Lemma~\ref{lem:110} are solved for all cells $\sigma\in \Xi_i$, $i=0,1,\ldots,k$.
We next analyze the total time for solving these subproblems. We follow the notation as defined before.
Note that $n_{\sigma}=O(\frac{n}{\rho^i})$ and $\Xi_i$ has $O(\rho^{2i})$ cells.

We first claim that $\sum_{\sigma\in \Xi_i}m_{\sigma}=O(n)$ for all $0\leq i\leq k$. Indeed, it is obviously true that $\sum_{\sigma\in \Xi_i}m_{\sigma}=O(n)$ for $i=0$. We assume that $i>0$.
Consider a short-long arc $s$ of $\sigma$. By definition, $s$ is a short arc of the parent cell $\sigma'\in \Xi_{i-1}$ of $\sigma$, implying that $s$ has at least one endpoint in the interior of $\sigma'$. Let $\sigma_1$ and $\sigma_2$ be the two cells of $\Xi_{i-1}$ that contain the two endpoints of $s$, respectively ($\sigma_1=\sigma_2$ is possible). According to the above discussion, if $s$ is short-long arc of $\sigma$ for some cell $\sigma\in \Xi_i$, then $\sigma$ must be a child cell of either $\sigma_1$ or $\sigma_2$. As $\sigma_j$ has $O(1)$ cells in $\Xi_i$, for $j=1,2$, $s$ can be a short-long arc of at most $O(1)$ cells $\sigma\in \Xi_i$. As such, $\sum_{\sigma\in \Xi_i}m_{\sigma}=O(n)$ holds.

We now consider the three terms in the time complexity of
Lemma~\ref{lem:110} separately.

\begin{enumerate}
  \item For the first term, $\sum_{i=0}^{k}\sum_{\sigma\in \Xi_i}m_{\sigma}\log^4 n_{\sigma}=O(\sum_{i=1}^{k}n\log^4 n)=O(n\log^5n)$, for $k=O(\log r)$ and $r=n^{1/3}/(\log n)^{2/3}$.
  \item For the second term, we have $\sum_{i=0}^{k}\sum_{\sigma\in \Xi_i}n_{\sigma}\log^5 n_{\sigma}=O(\sum_{i=1}^{k}\rho^{2i}\cdot \frac{n}{\rho^i}\log^5 n)$, which is bounded by $O(nr\log^5 n)$ as $r=\Theta(\rho^k)$.
  \item For the third term, $\sum_{\sigma\in \Xi_i} n_{\sigma}^{2/3}m_{\sigma}^{2/3}(\log n_{\sigma})^{13/3}$ is big-O of $\sum_{\sigma\in \Xi_i} (\frac{n}{\rho^i})^{2/3}m_{\sigma}^{2/3}(\log n)^{13/3}=(\frac{n}{\rho^i})^{2/3}\cdot (\log n)^{13/3}\cdot \sum_{\sigma\in \Xi_i} m_{\sigma}^{2/3}$. As $\sum_{\sigma\in \Xi_i}m_{\sigma}=O(n)$ and $\Xi_i$ has $O(\rho^{2i})$ cells, by H\"older's Inequality, $\sum_{\sigma\in \Xi_i} m_{\sigma}^{2/3}=O(\rho^{2i}\cdot (\frac{n}{\rho^{2i}})^{2/3})$. As such, we obtain that $\sum_{\sigma\in \Xi_i} n_{\sigma}^{2/3}m_{\sigma}^{2/3}(\log n_{\sigma})^{13/3}$ is big-O of $(\frac{n}{\rho^i})^{2/3}\cdot (\log n)^{13/3}\cdot\rho^{2i}\cdot (\frac{n}{\rho^{2i}})^{2/3}$, which is $O(n^{4/3}(\log n)^{13/3})$ as $\rho$ is a constant. Hence $\sum_{i=0}^{k}\sum_{\sigma\in \Xi_i} n_{\sigma}^{2/3}m_{\sigma}^{2/3}(\log n_{\sigma})^{13/3}=O(n^{4/3}(\log n)^{16/3})$, for $k=O(\log r)$ and $r=n^{1/3}/(\log n)^{2/3}$.
\end{enumerate}

Therefore, the total time for solving the subproblems for all cells $\sigma\in \Xi_i$ for all $i=0,1,\ldots,k$ is $O(n\log^5 n+nr\log^5 n+n^{4/3}\log^{16/3} n)$, which is $O(n^{4/3}\log^{16/3} n)$ as $r=n^{1/3}/(\log n)^{2/3}$.
Hence, the number of type (1.1) intersections can be computed in $O(n^{4/3}\log^{16/3} n)$ time.

As discussed before, Lemma~\ref{lem:110} is also applicable to counting type (1.2) and type (1.3) intersections with asymptotically the same time complexity. As such, the total number of type (1) intersections can be computed in $O(n^{4/3}\log^{16/3} n)$ time.
\end{proof}

Combining Lemmas~\ref{lem:105} and \ref{lem:130}, we conclude that the total number of intersections all types (i.e., the number of intersections in the square cell $C$) can be computed in $O(n^{4/3}\log^{16/3} n)$ time.
Note that the time complexities in the two lemmas imply that computing type (1) intersections is the bottleneck.

\subsubsection{Proof of Lemma~\ref{lem:110}}
\label{sec:lemproof}

In this section, we prove  Lemma~\ref{lem:110}.
Recall that $|S_b^1|+|S_r^2|\leq n_{\sigma}$ and $|S_r^2|\leq m_{\sigma}$. For notational convenience, we let $n=n_{\sigma}$ and $m=m_{\sigma}$ in this section (and $n_{\sigma}$ and $m_{\sigma}$ will be used for other purposes in this section). Let $B=S_b^1$ and let $R=S_r^2$.

Since we are only interested in the intersections of $B$ and $R$ inside $\sigma$, it suffices to consider the portions of the arcs of $B$ and $R$ in $\sigma$. Note that each arc $s\in B\cup R$ may intersect $\sigma$ at two sub-arcs, each of which is a long arc of $\sigma$ (i.e., both endpoints are on $\partial \sigma$) because $s$ does not have any endpoint in the interior of $\sigma$. For simplicity, we still use $B$ (resp., $R$) to denote the set of the portions of the arcs of $B$ (resp., $R$) inside $\sigma$. Note that the size of $B$ (resp., $R$) may grow by a factor of $2$, but to simplify the notation, we still let $n=|B|+|R|$ and $m=|B|$, which will not affect our time complexity asymptotically.

We partition the intersections between $B$ and $R$ into two types: (1.1.1)
pairs of arcs $(s_b,s_r)$, with $s_b\in B$ and $s_r\in R$, such that $s_b$ and $s_r$ intersect only once; (1.1.2) pairs of arcs $(s_b,s_r)$, with $s_b\in B$ and $s_r\in R$, such that $s_b$ and $s_r$ intersect twice. It suffices to compute the number of arc pairs of the two types.
Counting type (1.1.1) arc pairs can be done in $O(n\log n)$ time by an algorithm of
Agarwal~\cite{ref:AgarwalPa902} (see Section~6~\cite{ref:AgarwalPa902}; the algorithm is for line segments but also applicable to arcs of type (1.1.1)). In what follows, we focus on counting the arc pairs of type (1.1.2).

Agarwal, Pellegrini, and Sharir~\cite{ref:AgarwalCo93} proved that (see Lemma 4.1~\cite{ref:AgarwalCo93}) two arcs $s$ and $s'$ intersect twice if and only if $s$ intersects the underlying circle $\alpha(s')$ of $s'$ twice and $s'$ intersects $\alpha(s)$ twice. Note that each arc $s$ of $B\cup R$ does not span more than a semicircle of $\alpha(s)$.
Combining with Observation~\ref{obser:60}, we obtain the following observation.

\begin{observation}\label{obser:70}
For an arc $s_b\in B$ and an arc $s_r\in R$, $s_b$ and $s_r$ intersect twice if and only if the following five conditions all hold: (1) the center of $s_r$ is in the wedge $w(s_b)$; (2) the underlying disk of $s_r$ does not contain either endpoint of $s_b$; (3) the center of $s_b$ is in the wedge $w(s_r)$; (4) the underlying disk of $s_b$ does not contain either endpoint of $s_r$; (5) the underlying circles of $s_b$ and $s_r$ intersect.
\end{observation}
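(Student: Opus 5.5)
This follows by chaining the cited Lemma~4.1 of~\cite{ref:AgarwalCo93} with Observation~\ref{obser:60}, applied once in each direction.

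The first step is to reduce double intersection of the arcs to double intersection with circles. By Lemma~4.1 of~\cite{ref:AgarwalCo93}, $s_b$ and $s_r$ intersect twice if and only if both of the following hold:
\begin{itemize}
\item[(a)] $s_b$ intersects $\alpha(s_r)$ twice;
\item[(b)] $s_r$ intersects $\alpha(s_b)$ twice.
\end{itemize}
Observation~\ref{obser:60} requires the arc to span at most a semicircle. This holds here, since every arc of $B\cup R$ lies in $\sigma\subseteq C$, a square of side length $1/\sqrt{2}$, while the radius is $1$. Its chord is at most the diameter $1$ of $C$, which is below $2$, so the arc is strictly less than a semicircle.

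The second step handles (a). Apply Observation~\ref{obser:60} with $s=s_b$ and $\alpha=\alpha(s_r)$. Then (a) is equivalent to three conditions:
\begin{itemize}
\item[(i)] the center of $\alpha(s_r)$, which is the center of $s_r$, lies in $w(s_b)$;
\item[(ii)] $\alpha(s_r)$ does not contain either endpoint of $s_b$, i.e., $D(s_r)$ contains neither endpoint;
\item[(iii)] $\alpha(s_r)$ intersects $\alpha(s_b)$.
\end{itemize}
These are conditions (1), (2), and (5) of the statement. For (ii), I read ``$\alpha$ contains'' in Observation~\ref{obser:60} as referring to the disk bounded by $\alpha$, consistent with the way Observation~\ref{obser:50} was derived.

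The third step handles (b) symmetrically. Apply Observation~\ref{obser:60} with $s=s_r$ and $\alpha=\alpha(s_b)$. This gives conditions (3) and (4), together with condition (5) again.

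Finally, take the conjunction of the two equivalences. Since condition (5) is symmetric in $s_b$ and $s_r$, the combined statement consists of exactly five conditions, which proves the observation in both directions.

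The argument has no real obstacle. The only points needing care are the semicircle hypothesis and the disk-versus-circle reading of ``contain''.
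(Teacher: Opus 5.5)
Your proposal is correct and is the paper's own argument: Lemma~4.1 of~\cite{ref:AgarwalCo93} splits double intersection into ``$s_b$ meets $\alpha(s_r)$ twice'' and ``$s_r$ meets $\alpha(s_b)$ twice,'' and Observation~\ref{obser:60} applied in each direction gives conditions (1), (2), (5) and (3), (4), (5). One repair is needed for the semicircle hypothesis: a chord shorter than $2$ does not by itself exclude a major arc, so argue instead that the whole arc lies in $C$, whose diameter is $1$, and therefore cannot contain two antipodal points of its unit circle, since such points are at distance $2$.
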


Recall that the centers of all red arcs are in the square cell $C_r$.   For each blue arc of $s_b\in B$, define $A(s_b)$ and $A'(s_b)$ in the same way as in the proof of Lemma~\ref{lem:80}. We still call $A'(s_b)$ the interesting region of $s_b$. We define $I$ and $E$ as in Lemma~\ref{lem:80}. Note that $|E|=O(m)$.

If $n^{2}< m\log^{2}m$, then we simply apply Lemma~\ref{lem:80}, which can compute the number of arc pairs of type (1.1.2) in $O(n^2\log^2 n+m\log^4 n)$ time.\footnote{To achieve this runtime, we need to consider $B$ as red arcs and $R$ as blue arcs when applying Lemma~\ref{lem:80}.} Since $n^{2}< m\log^{2}m$, the time is bounded by $O(m\log^4 n)$. In what follows, we assume that $n^{2}\geq m\log^{2}m$.

We compute a hierarchical $(1/r)$-cutting $\Xi_0,\Xi_1,\ldots,\Xi_k$
on the general-arcs of $E$, with $r=(\frac{n^{2}}{m\log^{2}m})^{1/3}$, in $O(mr)$
time by Theorem~\ref{theo:cutting}. For each cell $\sigma\in \Xi_i$, $1\leq i\leq k$, we define in the same way as in Lemma~\ref{lem:80} a canonical pair $(B_{\sigma},R_{\sigma})$, with $B_{\sigma}\subseteq B$ and $R_{\sigma}\subset R$, so that centers of all arcs of $R_{\sigma}$ lie in the interesting region of each arc of $B_{\sigma}$. Specifically, $R_{\sigma}$ consists of the arcs of $R$ whose centers are located in the cell $\sigma$ and $B_{\sigma}$ consists of the arcs of $B$ whose interesting regions contain $\sigma$ but do not contain the parent cell of $\sigma$ in $\Xi_{i-1}$. As in Lemma~\ref{lem:80}, the canonical pairs of all cells of all cuttings $\Xi_i$, $i=1,2,\ldots, k$, can be computed in $O(mr + n\log m)$ time.
For each cell $\sigma$ in the last cutting $\Xi_k$, if $|R_{\sigma}|>n/r^2$, we partition $R_{\sigma}$ into subsets of size at most $n/r^2$ each; we call them the {\em basic subsets} of $R_{\sigma}$.
For each cell $\sigma\in \Xi_k$, we define $B'_{\sigma}$ as the set of arcs of $B$ whose interesting regions have an general-arc crossing $\sigma$ and we explicitly maintain $B'_{\sigma}$. All above can be done in $O(mr + n\log m)$ time.

With the above concepts, to count the number of type (1.1.2) arc pairs, it suffices to solve the following two subproblems: (1) For each canonical subset pair $(B_{\sigma},R_{\sigma})$, $\sigma\in \Xi_i$, $1\leq i\leq k$, compute the number pairs of arcs $(s_b,s_r)$, $s_b\in B_{\sigma}$, $s_r\in R_{\sigma}$, such that $s_b$ and $s_r$ satisfy the last three conditions of Observation~\ref{obser:70}; (2) for each cell $\sigma\in \Xi_k$, for each basic subset $R'_{\sigma}$ of $R_{\sigma}$, compute the number of pairs $(s_b,s_r)$, with $s_b\in B'_{\sigma}$ and $s_r\in R'_{\sigma}$, such that $s_b$ and $s_r$ intersect twice.

To solve each subproblem (2) on $B'_{\sigma}$ and $R'_{\sigma}$, we apply the algorithm of Lemma~\ref{lem:80},
which takes $O(|R'_{\sigma}|^2\log^2 |R'_{\sigma}|+|B'_{\sigma}|\log^4|R'_{\sigma}|)$ time.
Note that $|B'_{\sigma}|=O(m/r)$ and $|R'_{\sigma}|=O(n/r^2)$. Hence, the time is bounded by $O(n^2/r^4\cdot \log^2(n/r^2)+\frac{m}{r}\cdot \log^4(n/r^2))$. Since $\Xi_k$ has $O(r^2)$ cells, solving subproblem (2) for all cells of $\Xi_k$ takes $O(n^2/r^2\cdot \log^2(n/r^2)+mr\log^4(n/r^2))$ time, which is $O(n^{2/3}m^{2/3}\log^{10/3}n)$ as $r=\frac{n^{2/3}}{m^{1/3}\log^{2/3}m}$.

The following lemma presents an algorithm to solve subproblem (1).

\begin{lemma}\label{lem:120}
Subproblem (1) on $B_{\sigma}$ and $R_{\sigma}$ for each cell $\sigma\in \Xi_{i}$ can be solved in $O(n_{\sigma}\log^4m_{\sigma}+m_{\sigma}\log^2 (n_{\sigma}+m_{\sigma})+n_{\sigma}^{2/3}m_{\sigma}^{2/3}\log^{10/3} (n_{\sigma}+m_{\sigma}))$ time, where $n_{\sigma}=|R_{\sigma}|$ and $m_{\sigma}=|B_{\sigma}|$.
\end{lemma}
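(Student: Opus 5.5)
The plan is to mirror the proof of Lemma~\ref{lem:110} with the roles of red and blue exchanged. Conditions (1) and (2) of Observation~\ref{obser:70} are already guaranteed by the canonical pair. Subproblem (1) therefore asks for the number of pairs $(s_b,s_r)\in B_{\sigma}\times R_{\sigma}$ satisfying three conditions:
\begin{enumerate}
\item[(a)] the center of $s_b$ lies in $w(s_r)$;
\item[(b)] $D(s_b)$ contains neither endpoint of $s_r$;
\item[(c)] $\alpha(s_b)$ and $\alpha(s_r)$ intersect, i.e., the two centers are within distance $2$.
\end{enumerate}
Let $C_b$ be the cell of $\calC$ containing the blue centers. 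Conditions (a) and (b) together say exactly that the center of $s_b$ lies in the region $A'(s_r)=w(s_r)\cap lune'(s_r)\cap C_b$. This region is bounded by at most four general-arcs, as in Lemma~\ref{lem:80}. Let $E_R$ denote the set of these general-arcs over all $s_r\in R_{\sigma}$, so $|E_R|=O(n_{\sigma})$.

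\textbf{Step 1 (outer cutting).} Build a hierarchical $(1/r)$-cutting on $E_R$ by Theorem~\ref{theo:cutting}, with $r$ to be fixed later. For every cell $\beta$ at level $j\ge 1$, form a canonical pair $(R_\beta,B_\beta)$ exactly as in Lemma~\ref{lem:80}, with the colors swapped. Here $B_\beta$ is the set of blue arcs of $B_{\sigma}$ whose centers lie in $\beta$. $R_\beta$ is the set of red arcs whose regions $A'(s_r)$ contain $\beta$ but not its parent. This costs $O(n_{\sigma}r+m_{\sigma}\log n_{\sigma})$ time. At the last level, split each $B_\beta$ into basic subsets of size at most $m_{\sigma}/r^2$. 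For each such basic subset, let $R'_\beta$ be the $O(n_{\sigma}/r)$ red arcs whose regions have a general-arc crossing $\beta$.

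\textbf{Step 2 (leaf subproblems).} For each basic subset together with $R'_\beta$, count the pairs satisfying (a)--(c) by the algorithm of Lemma~\ref{lem:80}, or more precisely by its inner routine on canonical pairs, with the roles arranged so that the small side is squared. One basic subset costs $O((m_{\sigma}/r^2)^2\log^2+ (n_{\sigma}/r)\log^4)$. There are $O(r^2)$ cells, so the total is $O(m_{\sigma}^2r^{-2}\log^2+n_{\sigma}r\log^4)$.

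\textbf{Step 3 (internal canonical pairs).} Each canonical pair $(R_\beta,B_\beta)$ satisfies (a) and (b) automatically, so only (c) remains. This is counting point/disk containments between the centers of $R_\beta$ and the radius-$2$ disks around the centers of $B_\beta$. I would use the Lemma~\ref{lem:70} style hierarchical cutting on the disk boundaries, grouping the smaller side as in the final step of Section~\ref{sec:type3}. This gives $O(|R_\beta|+|B_\beta|\log+|R_\beta|^{2/3}|B_\beta|^{2/3}\mathrm{polylog})$ time per pair, or alternatively the simpler $O(x^2/\log x+y\log x)$ bound balanced by grouping. Summing over levels uses $|R_\beta|=O(n_{\sigma}/\rho^{j-1})$, $\sum_\beta|B_\beta|=m_{\sigma}$ per level, and $O(\log r)$ levels. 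The same H\"older computation as in Lemma~\ref{lem:130} should keep this total within the target bound.

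\textbf{Step 4 (balancing).} Choose $r\approx (m_{\sigma}^2/n_{\sigma})^{1/3}$ times a power of $\log$ so that the two terms of Step 2 are equal; both then become $n_{\sigma}^{2/3}m_{\sigma}^{2/3}\log^{10/3}$. If $r<1$, that is $m_{\sigma}^2<n_{\sigma}\,\mathrm{polylog}$, skip the cutting and apply Lemma~\ref{lem:80} directly, which costs $O(n_{\sigma}\log^4 m_{\sigma})$. The main obstacle is Step 3: I must make sure the internal levels cost only $O(n_{\sigma}\log^4+m_{\sigma}\log^2)$ plus the main term, without an extra $\log r$ factor. I would first try charging each level geometrically, exploiting that $|R_\beta|$ decays with $\rho^{j}$. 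If that fails, I would instead use the $x^2$-type bound on the small side only at the top levels.
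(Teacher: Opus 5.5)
Your Steps 1, 2 and 4 coincide with the paper's proof. The paper uses the same cutting on the general-arcs of the regions $A'(s_r)$, the same canonical pairs, and the same basic subsets of size $m_\sigma/r^2$. It applies Lemma~\ref{lem:80} at the leaves with the blue basic subset squared, takes the same $r=(m_\sigma^2/(n_\sigma\log^2 n_\sigma))^{1/3}$, and calls Lemma~\ref{lem:80} directly when $m_\sigma^2<n_\sigma\log^2 n_\sigma$. One caution on Step 2: the paper invokes Lemma~\ref{lem:80} as a whole. Its inner routine alone (radius-$2$ disks plus wedge queries) checks only (a) and (c), so you would lose (b) if you used just that.

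The genuine gap is in Step 3. The method you name there is a Lemma~\ref{lem:70}-style cutting on the disk boundaries plus the grouping trick from the end of Section~\ref{sec:type3}. It does not produce an $|R_\beta|^{2/3}|B_\beta|^{2/3}$ term. With $x$ disks and $y$ points, balancing the group size in the $O(x^2/\log x+y\log x)$ bound gives only $O(x\sqrt{y}+y\log x)$, or $O(y\sqrt{x}+x\log y)$ with the roles swapped.

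This fails at the top of the hierarchy. $\Xi_1$ has $O(1)$ cells, so a single level-$1$ canonical pair can contain $\Theta(n_\sigma)$ red and $\Theta(m_\sigma)$ blue arcs. Its cost is then $\Theta(\min\{m_\sigma\sqrt{n_\sigma},\,n_\sigma\sqrt{m_\sigma}\})$. For example, when $m_\sigma=n_\sigma$ this is $n_\sigma^{3/2}$, far above $n_\sigma^{2/3}m_\sigma^{2/3}\log^{10/3}$. Your fallback of using quadratic-type bounds at the top levels fails at exactly these levels, for the same reason.

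The missing ingredient is a genuinely $O((x+y)\log(x+y)+x^{2/3}y^{2/3}\log(x+y))$ algorithm for counting containments between points and equal-radius disks. The paper simply plugs in Katz and Sharir's algorithm. You could also get such a bound from one more round of your own Steps 1--2 template. Since only containment remains, internal cells are handled by multiplying cardinalities, and leaf cells by the dual quadratic method.

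With that bound, your worry about an extra $\log r$ factor goes away. By H\"older, each level contributes $O(n_\sigma^{2/3}m_\sigma^{2/3}\log n_\sigma)$ from the main term and $O(m_\sigma\log n_\sigma)$ from the $|B_\beta|$-terms. Over all levels, the $|R_\beta|$-terms sum to $O(n_\sigma r\log n_\sigma)=O(n_\sigma^{2/3}m_\sigma^{2/3}\log^{1/3} n_\sigma)$. Over $O(\log r)$ levels the total is therefore $O(n_\sigma^{2/3}m_\sigma^{2/3}\log^2(n_\sigma+m_\sigma)+m_\sigma\log^2(n_\sigma+m_\sigma))$. The target bound absorbs this; the paper accepts the $\log r$ factor in exactly this way.
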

\begin{proof}
For notational convenience, let $B=B_{\sigma}$, $R=R_{\sigma}$, and $n=|B|$, $m=|R|$ in this proof.
Also, the notation in the proof is independent of other parts of the paper.

Recall that the center of each arc of $B$ is in the square cell $C_b$.
For each arc $s_r\in R$, we define $A(s_r)$ in the same way as $A(s_b)$ defined in the proof of Lemma~\ref{lem:80}. Let $A'(s_r)=A(s_r)\cap C_b$. Define $E$ as the set of general-arcs of $A'(s_r)$ for all arcs $s_r\in R$. Hence, $|E|=O(n)$.

If $m^{2}< n\log^{2}n$, we simply apply Lemma~\ref{lem:80}, which can solve the problem in $O(m^2\log^2 m+n\log^4 m)$ time. Since $m^{2}< n\log^{2}n$, the time is bounded by $O(n\log^4 m)$. In what follows, we assume that $m^{2}\geq n\log^{2}n$.

We compute a hierarchical $(1/r)$-cutting $\Xi_0,\Xi_1,\ldots,\Xi_k$
on the general-arcs of $E$, with $r=(\frac{m^{2}}{n\log^{2}n})^{1/3}$, in $O(nr)$
time by Theorem~\ref{theo:cutting}. For each cell $\sigma\in \Xi_i$, $1\leq i\leq k$, we define in a similar way as before a canonical pair $(B_{\sigma},R_{\sigma})$, with $B_{\sigma}\subseteq B$ and $R_{\sigma}\subset R$, so that centers of all arcs of $B_{\sigma}$ lie in the interesting region of each arc of $R_{\sigma}$. Specifically, $B_{\sigma}$ consists of the arcs whose centers are located in the cell $\sigma$ and $R_{\sigma}$ consists of the arcs of $R$ whose interesting regions contain $\sigma$ but do not contain the parent cell of $\sigma$ in $\Xi_{i-1}$. The canonical pairs of all cells of all cuttings $\Xi_i$, $1\leq i\leq k$, can be computed in $O(nr + m\log n)$ time.
For each cell $\sigma$ in the last cutting $\Xi_k$, if $|B_{\sigma}|>m/r^2$, we partition $B_{\sigma}$ into subsets of size at most $m/r^2$ each; we call them the {\em basic subsets} of $B_{\sigma}$.
For each cell $\sigma\in \Xi_k$, we define $R'_{\sigma}$ as the arcs of $R$ whose interesting regions have an general-arc crossing $\sigma$ and we explicitly maintain $R'_{\sigma}$. All above can be done in $O(nr + m\log r)$ time.


To solve our problem of the lemma, it suffices to solve the following subproblems: (1) For each canonical subset pair $(B_{\sigma},R_{\sigma})$, $\sigma\in \Xi_i$, $1\leq i\leq k$, compute the number of pairs of arcs $(s_b,s_r)$, such that $\alpha(s_b)$ and $\alpha(s_r)$ intersect; (2) for each cell $\sigma\in \Xi_k$, for each basic subset $B'_{\sigma}$ of $B_{\sigma}$, compute the number of pairs $(s_b,s_r)$, with $s_b\in B'_{\sigma}$ and $s_r\in R'_{\sigma}$, such that $s_b$ and $s_r$ intersect twice.

To solve each subproblem (2) on $B'_{\sigma}$ and $R'_{\sigma}$, we apply the algorithm of Lemma~\ref{lem:80}, which takes $O(|B'_{\sigma}|^2\log^2 |B'_{\sigma}|+|R'_{\sigma}|\log^4|B'_{\sigma}|)$ time.
Note that $|R'_{\sigma}|=O(n/r)$ and $|B'_{\sigma}|\leq m/r^2$. Hence, the time is bounded by $O(m^2/r^4\cdot \log^2(m/r^2)+\frac{n}{r}\cdot \log^4(m/r^2))$. Since $\Xi_k$ has $O(r^2)$ cells, solving subproblem (2) for all cells of $\Xi_k$ takes $O(m^2/r^2\cdot \log^2(m/r^2)+nr\log^4(m/r^2))$ time, which is $O(m^{2/3}n^{2/3}\log^{10/3}(n+m))$ as $r=\frac{m^{2/3}}{n^{1/3}\log^{2/3}n}$.

To solve subproblem (1) on $(B_{\sigma},R_{\sigma})$ for each cell $\sigma\in \Xi_i$, $1\leq i\leq k$, we do the following. As in Lemma~\ref{lem:80}, define $\calD$ as the set of disks of radius $2$ centered at the centers of all arcs of $R_{\sigma}$ and $P$ the set of the centers of all arcs of $B_{\sigma}$. Subproblem (1) is equivalent to computing the number of pairs $(p,D)$, $p\in P$, $D\in \calD$, such that $p$ is contained in $D$. To solve the problem, we apply the algorithm of Katz and Sharir~\cite{ref:KatzAn97}, which runs in $O(n_{\sigma}\log n_{\sigma}+m_{\sigma}\log n_{\sigma}+n_{\sigma}^{2/3}m_{\sigma}^{2/3}\log n_{\sigma})$ time, where $m_{\sigma}=|\calD|=|B_{\sigma}|$ and $n_{\sigma}=|P|=|R_{\sigma}|$. Next we analyze the total time for solving all subproblems (1) for all cells $\sigma\in \Xi_i$, $i=1,2,\ldots,k$. We consider the three terms in the time complexity separately.

As discussed in Lemma~\ref{lem:80}, $\sum_{\sigma\in \Xi_i}m_{\sigma}=m$, $n_{\sigma}=O(n/\rho^{i-1})$, and $\Xi_i$ has $O(\rho^{2i})$ cells. Note that $m_{\sigma}\leq m$, $n_{\sigma}\leq n$, and $k=\Theta(\log_{\rho} r)$.
Therefore, $\sum_{1\leq i\leq k}\sum_{\sigma\in \Xi_i}m_{\sigma}\log n_{\sigma}\leq \sum_{1\leq i\leq k} m\log n=O(m\log n\log r)$, and $\sum_{1\leq i\leq k}\sum_{\sigma\in \Xi_i}n_{\sigma}\log n_{\sigma}=O(\sum_{1\leq i\leq k} \rho^{2i}\cdot \frac{n}{\rho^{i-1}}\log n)=O(n\log n\cdot \sum_{1\leq i\leq k}\rho^{i+1})$, which is bounded by $O(nr\log n)$ as $k=\Theta(\log r)$ and $\rho$ is a constant. Similarly, $\sum_{1\leq i\leq k}\sum_{\sigma\in \Xi_i} n_{\sigma}^{2/3}m_{\sigma}^{2/3}$ is big-O of $\sum_{1\leq i\leq k}\sum_{\sigma\in \Xi_i} (\frac{n}{\rho^{i-1}})^{2/3}m_{\sigma}^{2/3}=\sum_{1\leq i\leq k}(\frac{n}{\rho^{i-1}})^{2/3} \sum_{\sigma\in \Xi_i} m_{\sigma}^{2/3}$. As $\sum_{\sigma\in \Xi_i}m_{\sigma}=2m$ and $\Xi_i$ has $O(\rho^{2i})$ cells, by H\"older's Inequality, $\sum_{\sigma\in \Xi_i} m_{\sigma}^{2/3}=O(\rho^{2i}\cdot (\frac{m}{\rho^{2i}})^{2/3})$. Hence, $(\frac{n}{\rho^{i-1}})^{2/3} \sum_{\sigma\in \Xi_i} m_{\sigma}^{2/3}=O((\frac{n}{\rho^{i-1}})^{2/3}\cdot \rho^{2i}\cdot (\frac{m}{\rho^{2i}})^{2/3})$, which is bounded by $O(n^{2/3}m^{2/3})$ as $\rho$ is a constant. We thus obtain that $\sum_{1\leq i\leq k}\sum_{\sigma\in \Xi_i} n_{\sigma}^{2/3}m_{\sigma}^{2/3}=O(n^{2/3}m^{2/3}\log r)$. In summary, the total time for solving all subproblems (1) is bounded by $O(n^{2/3}m^{2/3}\log r\log n+nr\log n+m\log n\log r)$, which is
$O(n^{2/3}m^{2/3}\log^{2}(n+m)+m\log n\log m)$ as $r=\frac{m^{2/3}}{n^{1/3}\log^{2/3}n}$.

In summary, the overall time of the entire algorithm is bounded by $O(n\log^4m+m\log^2 (n+m)+n^{2/3}m^{2/3}\log^{10/3} (n+m))$.
\end{proof}


With Lemma~\ref{lem:120}, we next analyze the total time for solving all subproblems (1) for all cells $\sigma\in \Xi_i$, $i=1,2,\ldots, k$. We consider the three terms in the time complexity of Lemma~\ref{lem:120} separately.

Note that $\sum_{\sigma\in \Xi_i}n_{\sigma}=n$, $m_{\sigma}=O(m/\rho^{i-1})$, $\Xi_i$ has
$O(\rho^{2i})$ cells, $m_{\sigma}\leq m\leq n$, $n_{\sigma}\leq n$, and $k=\Theta(\log_{\rho} r)$.
Therefore, $\sum_{1\leq i\leq k}\sum_{\sigma\in \Xi_i}n_{\sigma}\log^4 m_{\sigma}\leq \sum_{1\leq i\leq k} n\log^4 n=O(n\log^4 n\log r)$, and $\sum_{1\leq i\leq k}\sum_{\sigma\in \Xi_i}m_{\sigma}\log^2 (n_{\sigma}+m_{\sigma})=O(\sum_{1\leq i\leq k} \rho^{2i}\cdot \frac{m}{\rho^{i-1}}\log^2 n)=O(m\log^2 n\cdot \sum_{1\leq i\leq k}\rho^{i+1})$, which is bounded by $O(mr\log^2 n)$ as $k=\Theta(\log_{\rho} r)$ and $\rho>1$ is a constant. Also, $\sum_{1\leq i\leq k}\sum_{\sigma\in \Xi_i} n_{\sigma}^{2/3}m_{\sigma}^{2/3}$ is big-O of $\sum_{1\leq i\leq k}\sum_{\sigma\in \Xi_i} (\frac{m}{\rho^{i-1}})^{2/3}n_{\sigma}^{2/3}=\sum_{1\leq i\leq k}(\frac{m}{\rho^{i-1}})^{2/3} \sum_{\sigma\in \Xi_i} n_{\sigma}^{2/3}$. As $\sum_{\sigma\in \Xi_i}n_{\sigma}=n$ and $\Xi_i$ has $O(\rho^{2i})$ cells, by H\"older's Inequality, $\sum_{\sigma\in \Xi_i} n_{\sigma}^{2/3}=O(\rho^{2i}\cdot (\frac{n}{\rho^{2i}})^{2/3})$. Hence, $(\frac{m}{\rho^{i-1}})^{2/3} \sum_{\sigma\in \Xi_i} n_{\sigma}^{2/3}=O((\frac{m}{\rho^{i-1}})^{2/3}\cdot \rho^{2i}\cdot (\frac{n}{\rho^{2i}})^{2/3})$, which is bounded by $O(n^{2/3}m^{2/3})$ as $\rho$ is a constant. We thus obtain $\sum_{1\leq i\leq k}\sum_{\sigma\in \Xi_i} n_{\sigma}^{2/3}m_{\sigma}^{2/3}=O(n^{2/3}m^{2/3}\log r)$. As such, the total time for solving all subproblems (1) is $O(n^{2/3}m^{2/3}\log^{10/3} n\log r+n\log^4 n\log r+mr\log^2 n)$, which is
$O(n^{2/3}m^{2/3}\log^{13/3}n+n\log^5 n)$ as $r=\frac{n^{2/3}}{m^{1/3}\log^{2/3}n}$.

In summary, the total time of the overall algorithm is $O(n^{2/3}m^{2/3}\log^{13/3}n+n\log^5 n+m\log^4 n)$.
This proves Lemma~\ref{lem:110}.


\subsection{Putting it all together}
\label{sec:wrapup}

The above solves the problem of computing the number of intersections in the square cell $C$ between the arcs of $S(C_1)$ and the arcs of $S(C_2)$. The time of the algorithm is $O((n_1+n_2)^{4/3}\log^{16/3} (n_1+n_2))$, where $n_1=|S(C_1)|$ and $n_2=|S(C_2)|$. The algorithm is based on the assumption that $C_1\neq C_2$. If $C_1=C_2$, then the problem becomes computing the intersections in $C$ among all arcs of $S(C_1)$. Our algorithm can be easily adapted to this case. For example, we can duplicate all arcs of $S(C_1)$ and make the duplications as $S(C_2)$, and then apply the algorithm (one subtle issue that in this case an arc of $S(C_1)$ and its duplication in $S(C_2)$ should not be considered intersected).

Enumerating all pairs $(C_1,C_2)$ of $N(C)$ and solving the problem as above can compute the number of all arc intersections in $C$. Processing all cells $C\in \calC$ can finally compute the number of all arc intersections of $S$ in the plane. As such, the total time of the overall algorithm is $O(\sum_{C\in \calC}\sum_{(C_1,C_2)\in N(C)}(n_1+n_2)^{4/3}\log^{16/3} (n_1+n_2))$. Recall that $|S(C')|=|P(C')|$ for any cell $C'\in \calC$.
Because $|N(C)|=O(1)$, we have $\sum_{(C_1,C_2)\in N(C)}(n_1+n_2)^{4/3}\log^{16/3} (n_1+n_2))=O(n_{C}^{4/3}\log^{16/3} n)$, where $n_C=\sum_{C'\in N(C)}|P(C')|$. By Lemma~\ref{lem:60}(4), $\sum_{C\in \calC}n_C=O(n)$. Hence, $\sum_{C\in \calC}n_C^{4/3}=O(n^{4/3})$. As such, the total time of the overall algorithm is bounded by $O(n^{4/3}\log^{16/3} n)$.

\begin{theorem}\label{theo:30}
The number of intersections among a set of $n$ circular arcs of the same radius in the plane can be computed in $O(n^{4/3}\log^{16/3}n)$ time.
\end{theorem}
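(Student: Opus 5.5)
The proof of Theorem~\ref{theo:30} assembles the pieces already established; it needs no new idea. First, I reduce to local subproblems using the grid structure $\calC$. By Lemma~\ref{lem:60}(1), every intersection point lies in some cell $C\in\calC$. Intersections on shared cell boundaries can be charged to a single cell by a half-open convention on the cells. By Lemma~\ref{lem:60}(2), any arc meeting $C$ has its center in a cell of $N(C)$. So the intersections inside $C$ are counted exactly by summing, over all unordered pairs $(C_1,C_2)$ of $N(C)$ (including $C_1=C_2$), the number of intersections inside $C$ between the clipped arc sets $S_1$ and $S_2$. The case $C_1=C_2$ is handled by duplication, ignoring self-pairs and halving. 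Building $\calC$, $P(C)$, $N(C)$ and clipping the arcs takes $O(n\log n)$ time, using Lemma~\ref{lem:60}(3).

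Second, I solve one pair $(C_1,C_2)$. I build the hierarchical $(1/r)$-cutting of Theorem~\ref{theo:cutting} on the extended arcs $S'$ with $r=n^{1/3}/(\log n)^{2/3}$, which takes $O(nr)$ time. In each leaf cell $\tau$, I classify intersections into types (1)--(4). Types (2)--(4) are counted in $O(n^{4/3}\log^{10/3}n)$ total time by Lemma~\ref{lem:105}. Type (1) intersections are counted over the whole hierarchy, via the unique ancestor at which one of the two arcs becomes short-long, in $O(n^{4/3}\log^{16/3}n)$ time by Lemma~\ref{lem:130}, which rests on Lemma~\ref{lem:110}. I need to check that the four types partition all red--blue intersections inside $C$. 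This holds because the leaf cells partition $C$, each trimmed arc in $\tau$ is either long or short, and each type-(1) intersection is charged to exactly one ancestor $\sigma'$ and one of the subtypes (1.1)--(1.3). So a pair costs $O(n_{12}^{4/3}\log^{16/3}n_{12})$, where $n_{12}=n_1+n_2$.

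Third, I sum over all cells and pairs. Since $|N(C)|=O(1)$, the cost for $C$ is $O(n_C^{4/3}\log^{16/3}n)$, where $n_C=\sum_{C'\in N(C)}|P(C')|$. Lemma~\ref{lem:60}(4) gives $\sum_C n_C=O(n)$. By convexity of $x\mapsto x^{4/3}$ together with $n_C\le n$, we get $\sum_C n_C^{4/3}\le n^{1/3}\sum_C n_C=O(n^{4/3})$. This yields the total bound $O(n^{4/3}\log^{16/3}n)$.

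The main obstacle is ensuring exact counting without double-counting in two places. One is intersection points on common boundaries of cells of $\calC$ or of cutting cells $\tau$, which I would handle by a consistent half-open assignment, or by general-position perturbation. The other is the $C_1=C_2$ reduction. I expect these to be bookkeeping issues rather than to affect the asymptotic bound.
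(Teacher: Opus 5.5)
Your proposal is correct and follows essentially the same route as the paper: reduce to cells $C\in\calC$ and pairs from $N(C)$, solve each pair with Lemmas~\ref{lem:105} and~\ref{lem:130}, then sum using $\sum_C n_C=O(n)$. Your additions fill in details the paper leaves implicit without changing the argument. These are the half-open convention for boundary points, the halving in the $C_1=C_2$ duplication (unordered pairs), and the explicit bound $\sum_C n_C^{4/3}\le n^{1/3}\sum_C n_C$.
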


\subsection{A further improvement}

We further improve the algorithm when the number of intersections of the arcs is relatively small.

\begin{theorem}\label{theo:300}
The number of intersections among a set of $n$ circular arcs of the
same radius in the plane can be computed in
$O(n^{1+\epsilon}+K^{1/3}n^{2/3}(\frac{n^2}{n+K})^{\epsilon}\log^{16/3}n)$ time, for
any $\epsilon>0$, where $K$ is the number of intersections of all arcs.
\end{theorem}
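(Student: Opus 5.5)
The plan is to redo the analysis of Theorem~\ref{theo:30} with the cutting parameter $r$ tied to $K$ rather than to $n$, and to use the refined bounds of Theorem~\ref{theo:cutting}. The first step is to estimate $K$. Exact computation of $K$ is the problem itself, so instead we run the algorithm with a guessed value $K'$ and double it (or square-double) until it succeeds. A simpler alternative is to note that $K$ only enters the analysis through the size of the cutting, and Theorem~\ref{theo:cutting} is output-sensitive. Hence we may run the cutting construction with a budget and abort if it is exceeded. This costs only a constant factor by a geometric sum.

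The reduction to cells $C\in\calC$ and to pairs $(C_1,C_2)$ of $N(C)$ is unchanged. It costs $O(n\log n)$, and the local values satisfy $\sum_C n_C=O(n)$ and $\sum_C K_C\le O(K)$, where $K_C$ is the number of intersections among the extending arcs in $C$. Inside one cell $C$ we build the hierarchical cutting on $S'$ with
\[ r=\Theta\Bigl(\frac{K^{1/3}}{n^{1/3}\log^{2/3}n}+1\Bigr)\]
in place of $r=n^{1/3}/\log^{2/3}n$. The intent is that the balance point $n^2/r$ versus $K r^2/n\cdot\mathrm{polylog}$ becomes $K^{1/3}n^{2/3}$. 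By Theorem~\ref{theo:cutting}, $\Xi_i$ then has $O(\rho^{i(1+\epsilon)}+K\rho^{2i}/n^2)$ cells, and construction time is $O(nr^\epsilon+Kr/n)$. The bottleneck term $n^{1+\epsilon}$ in the statement comes from the $nr^{\epsilon}$ cost and the $r^{1+\epsilon}$ cell count. The factor $(n^2/(n+K))^\epsilon$ arises because, when $K$ is small, the number of cells per level is dominated by $r^{1+\epsilon}$ rather than $r^2$. Bounding $\rho^{i\epsilon}\le (n^2/(n+K))^{\epsilon}$-type quantities is what replaces the constant in the geometric sums.

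Next we re-sum every bound in Lemmas~\ref{lem:105} and~\ref{lem:130}, replacing ``$\Xi_i$ has $O(\rho^{2i})$ cells'' by the refined count $N_i=O(\rho^{i(1+\epsilon)}+K\rho^{2i}/n^2)$. For Lemma~\ref{lem:105}, the type (2) term becomes $\sum m_\tau^{4/3+\epsilon}$ with $\sum m_\tau=O(n)$ and $m_\tau\le n/r$, which is still $O(n^{4/3+\epsilon}/r^{1/3+\epsilon})$. The type (3)/(4) term becomes $N_k\cdot(n/r)\log^4 n+n\sqrt{n/r}\log^3n$. For Lemma~\ref{lem:130}, the H\"older step gives $\sum_{\sigma\in\Xi_i}m_\sigma^{2/3}\le N_i^{1/3}n^{2/3}$. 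So level $i$ contributes $(n/\rho^i)^{2/3}N_i^{1/3}n^{2/3}\log^{13/3}n$. With $N_i\approx K\rho^{2i}/n^2$ this is $K^{1/3}n^{2/3}\log^{13/3}n$ per level, and summing over $O(\log r)$ levels gives $K^{1/3}n^{2/3}\log^{16/3}n$. With $N_i\approx\rho^{i(1+\epsilon)}$ the sum is geometric and dominated by the top level, giving $n^{4/3}\rho^{k(\epsilon-1)/3}$, which is within the claimed bound for our $r$.

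Finally, we substitute the chosen $r$ and check that each term is $O(n^{1+\epsilon}+K^{1/3}n^{2/3}(n^2/(n+K))^\epsilon\log^{16/3}n)$. In particular $nr\log^5 n$ and $n^{3/2}/r^{1/2}$ must be verified against $K^{1/3}n^{2/3}$. The latter may fail for small $K$. If so, I would rebalance by taking $r$ as the value equating $n\sqrt{n/r}$ with $K^{1/3}n^{2/3}$, or by charging the type (3) term per cell as $\sum_\tau m_\tau\sqrt{n_\tau}$ with $n_\tau$ bounded through the refined cell count. The main obstacle I expect is exactly this bookkeeping. The nested cuttings inside Lemmas~\ref{lem:80},~\ref{lem:110} and~\ref{lem:120} are built on blue lune/wedge regions whose intersection counts are not controlled by $K$. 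Those inner costs must be shown to depend only on $n_\sigma,m_\sigma$, which they do, so that only the outer cutting needs the $K$-sensitive count. Any residual $r^\epsilon$ blow-up must be absorbed into the $(n^2/(n+K))^\epsilon$ factor, since $r^2\le n^2/(n+K)\cdot\mathrm{polylog}$ fails only by log factors that $\epsilon$ swallows.
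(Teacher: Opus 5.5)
Your plan of re-tuning the hierarchical cutting \emph{inside} the algorithm of Theorem~\ref{theo:30} cannot give the bound, and your choice $r\approx (K/n)^{1/3}$ goes in the wrong direction.

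The decisive failure is at the root of the hierarchy, not at the leaves. In your re-summation of Lemma~\ref{lem:130}, level $i$ contributes $(n/\rho^i)^{2/3}N_i^{1/3}n^{2/3}\log^{13/3}n$. With $N_i\approx\rho^{i(1+\epsilon)}$ this equals $n^{4/3}\rho^{-i(1-\epsilon)/3}\log^{13/3}n$, which \emph{decreases} in $i$. So the sum is dominated by $i=0$ and is $\Theta(n^{4/3}\log^{13/3}n)$, not $n^{4/3}\rho^{k(\epsilon-1)/3}$. Concretely, $\Xi_0$ is the whole square $C$, and every long arc of $C$ is short-long there, so $m_\sigma$ can be $\Theta(n)$. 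The single call of Lemma~\ref{lem:110} at the root then already costs about $n^{4/3}$, even when $K=0$, since arcs crossing $C$ need not cross each other; nothing in that algorithm exploits $K$. Likewise, when $K=O(n\log^2 n)$ your $r$ is $\Theta(1)$, so the type~(2) term $n^{4/3+\epsilon}/r^{1/3+\epsilon}$ is $n^{4/3+\epsilon}$.

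Your claim $\sum_C K_C=O(K)$ is also unjustified. The inner cutting is built on the \emph{extending} arcs $S'$. Extensions of pairwise disjoint short arcs (e.g.\ small pieces of unit circles that all pass near a common point of $C$) can cross $\Theta(n^2)$ times while $K=0$. So the refined cell counts $N_i$ from Theorem~\ref{theo:cutting} do not apply to that cutting.

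The missing idea is to shrink the instance \emph{before} invoking the worst-case algorithm, rather than opening it up. Apply Theorem~\ref{theo:cutting} once to the original set $S$, whose intersection count is exactly $K$, so the refined bound is legitimate. Take $r=n^2/(n+K)$, which is \emph{large} when $K$ is small. This gives a single $(1/r)$-cutting with $O(r^{1+\epsilon}+Kr^2/n^2)=O((n^2/(n+K))^{1+\epsilon})$ cells, each crossed by at most $(n+K)/n$ arcs, built in $O(nr^{\epsilon}+Kr/n)=O(n^{1+\epsilon})$ time. Then run Theorem~\ref{theo:30} as a black box on the clipped arcs in each cell. The total is $(n^2/(n+K))^{1+\epsilon}((n+K)/n)^{4/3}\log^{16/3}n=(n^2/(n+K))^{\epsilon}n^{2/3}(n+K)^{1/3}\log^{16/3}n$. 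Splitting $(n+K)^{1/3}\le n^{1/3}+K^{1/3}$ gives the claimed bound after rescaling $\epsilon$, since the $n^{1/3}$-part is $O(n^{1+2\epsilon}\log^{16/3}n)$. Your doubling idea for the unknown $K$ then works as in the paper: the $O(\log n)$ extra cutting constructions are absorbed into $n^{1+\epsilon}$. None of the internal lemmas needs to be re-analyzed.
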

\begin{proof}
Let $S$ denote the set of all arcs.
Let $r=n^2/(n+K)$.
By Theorem~\ref{theo:cutting}, we first compute a $(1/r)$-cutting
$\Xi$ for $S$ of size $O((n^2/(n+K))^{1+\epsilon})$ in
$O(nr^{\epsilon} +Kr/n)$ time, so that each cell
of $\Xi$ is intersected by at most $n/r=(n+K)/n$ arcs of
$S$. Note that $nr^{\epsilon} +Kr/n=O(n^{1+\epsilon})$. Hence,
constructing the cutting takes $O(n^{1+\epsilon})$ time.
For each cell $\sigma$ of $\Xi$, we compute the number of
intersections of arcs inside $\sigma$, in $O(((n+K)/n)^{4/3}\log^{16/3}n)$
time by Theorem~\ref{theo:30}.
The total time of the overall algorithm
is therefore on the order of $n^{1+\epsilon}+
(\frac{n^2}{n+K})^{1+\epsilon}\cdot (\frac{n+K}{n})^{4/3}\cdot
\log^{16/3}n$, which is bounded by
$O(n^{1+\epsilon}+K^{1/3}n^{2/3}(\frac{n^2}{n+K})^{\epsilon}\log^{16/3}n)$.

The above algorithm depends on the unknown value $K$. To overcome of the issue, we guess the value of $K$ by the trick of doubling. We start with $K'=K_0$ for a constant $K_0$, and run the algorithm with $K'$. If the algorithm takes too long, then our guess is too low and we double $K'$. Using this strategy, the algorithm is expected to stop within a constant number of rounds when $K'$ is larger than $K$ for the first time. 
Hence, the total time is asymptotically the same as if we had plugged in the right value of $K$, except that
we call the cutting construction algorithm for $O(\log K)$ (which is
$O(\log n)$) times. Hence, the total time the algorithm spends on
constructing the cutting is still $O(n^{1+\epsilon})$ as the $\log n$
factor is absorbed by $\epsilon$. Therefore, the total time of the
overall algorithm is still
$O(n^{1+\epsilon}+K^{1/3}n^{2/3}(\frac{n^2}{n+K})^{\epsilon}\log^{16/3}n)$.
\end{proof}

\section{Concluding remarks}
\label{sec:con}

In this paper, we present an $O(n^{4/3}\log^{16/3}n)$ time algorithm for computing the number of intersections among a set of $n$ circular arcs of the same radius, which nearly matches the $\Omega(n^{4/3})$ lower bound of Erickson in the partition algorithm model~\cite{ref:EricksonNe96}.
Our algorithm involves a partition of the problem into several cases. While this partition makes the algorithm somewhat tedious, it appears necessary to treat these different cases separately, which may be due to the nature of the problem. Indeed, this partition is one of the main reasons that our algorithm is superior to the previous work of Agarwal, Pellegrini, and Sharir~\cite{ref:AgarwalCo93}. It remains unclear whether this partition can be simplified.

Another more general open question is whether the polylogarithmic factor in the runtime can be further improved. The bottleneck of our algorithm lies in the algorithm of Lemma~\ref{lem:110}. More specifically, the relatively big polylogarithmic factor in the runtime of the algorithm for Lemma~\ref{lem:110} is due to the multiple levels of the algorithm, which is further due to the five conditions of Observation~\ref{obser:70} for having an arc $s_b\in B$ and an arc $s_r\in R$ intersect twice. Therefore, to improve the algorithm, one direction is to find better ways to determine that two arcs $s_b\in B$ and $s_r\in R$ intersect twice. It is not clear whether this is possible. If all five conditions of Observation~\ref{obser:70} have to be used, then it seems difficult to have any further improvement.

Some of our techniques and observations may be useful for solving other related problems. We demonstrate two examples below.

\paragraph{The bichromatic problem.}
It is straightforward to adapt our algorithm to solving the
bichromatic problem, i.e., given a set of blue circular arcs and
another set of red circular arcs of the same radius, with $n$ as the
total number of all arcs, compute the number of intersections between red
arcs and blue arcs. Indeed, our algorithm for solving the problem for
a single cell $C\in \calC$ in Sections~\ref{sec:counttwocells},
\ref{sec:type3}, and \ref{sec:type1} is for the bichromatic problem.
More specifically, we first compute the set $\calC$ of square cells in
a similar way as before. Then, for each cell $C\in \calC$, for each
pair of cells $(C_1,C_2)$ of $N(C)$, we compute the number of
intersections between the blue arcs of $S(C_1)$ and the red arcs of
$S(C_2)$ as well as the number of intersections between the red arcs
of $S(C_1)$ and the blue arcs of $S(C_2)$, by applying our algorithm
in Sections~\ref{sec:counttwocells}, \ref{sec:type3}, and
\ref{sec:type1}. In this way, the bichromatic problem can be solved in
$O(n^{4/3}\log^{16/3}n)$ time. The time can then be further reduced to
$O(n^{1+\epsilon}+K^{1/3}n^{2/3}(\frac{n^2}{n+K})^{\epsilon}\log^{16/3}n)$ time, for
any $\epsilon>0$, where $K$ is the number of intersections of all
arcs.

Alternatively, one can solve the bichromatic problem by using our algorithm for the monochromatic case as a black-box. First, we compute the number of intersections of all arcs, denoted by $K$. Then, we compute the number of intersections among the red arcs, denoted by $K_r$, and also compute the number of intersections among the blue arcs, denoted by $K_b$. Observe that $K-K_r-K_b$ is the number of bichromatic intersections. The total runtime of the algorithm is asymptotically the same as that for the monochromatic case.

\paragraph{The segment case.}
Suppose $S$ is a set of $n$ line segments in the plane. To compute the number of intersections of $S$, using Chan and Zheng's recent $O(n^{4/3})$ time algorithm~\cite{ref:ChanHo23} and the techniques of Theorem~\ref{theo:300}, we can solve the problem in $O(n^{1+\epsilon}+K^{1/3}n^{2/3}(\frac{n^2}{n+K})^{\epsilon})$ time. Indeed, applying  Theorem~\ref{theo:cutting} with $r=n^2/(n+K)$, we first compute a $(1/r)$-cutting
$\Xi$ for $S$ of size $O((n^2/(n+K))^{1+\epsilon})$ in
$O(nr^{\epsilon} +Kr/n)$ time, so that each cell
of $\Xi$ is intersected by at most $n/r=(n+K)/n$ segments of
$S$.
Since $nr^{\epsilon} +Kr/n=O(n^{1+\epsilon})$, constructing the cutting takes $O(n^{1+\epsilon})$ time.
For each cell $\sigma$ of $\Xi$, we compute the number of
intersections of segments inside $\sigma$, in
$O(((n+K)/n)^{4/3})$ time by Chan and Zheng's algorithm~\cite{ref:ChanHo23}.
Hence, the total time is on the order of $n^{1+\epsilon}+
(\frac{n^2}{n+K})^{1+\epsilon}\cdot (\frac{n+K}{n})^{4/3}$, which is bounded by
$O(n^{1+\epsilon}+K^{1/3}n^{2/3}(\frac{n^2}{n+K})^{\epsilon})$.
The above algorithm depends on the unknown value $K$. To overcome of
the issue, we again use the doubling technique as in
Theorem~\ref{theo:300}. As the analysis in
Theorem~\ref{theo:300}, the total time of the algorithm is
still $O(n^{1+\epsilon}+K^{1/3}n^{2/3}(\frac{n^2}{n+K})^{\epsilon})$.

We also remark that by plugging Chan and Zheng's algorithm~\cite{ref:ChanHo23} into a randomized algorithm of de Berg and Schwarzkopf~\cite{ref:deBergCu95} (i.e., replacing Chazelle's algorithm~\cite{ref:ChazelleCu93} with Chan and Zheng's algorithm~\cite{ref:ChanHo23} in Theorem~3~\cite{ref:deBergCu95}), the number of the intersections of $S$ can be computed in $O(n\log n\log K+K^{1/3}n^{2/3})$ expected time. The algorithm is similar as above, except that it uses a cutting with slightly better bounds; specifically, de Berg and Schwarzkopf (Theorem~1~\cite{ref:deBergCu95}) gave a randomized algorithm that can construct a $(1/r)$-cutting of size $O(r+Kr^2/n^2)$ for $S$ in $O(n\log r+Kr/n)$ expected time.

 \bibliographystyle{plainurl}
\bibliography{reference}

%
%

\end{document}